\documentclass[prl,twocolumn,superscriptaddress]{revtex4-1}
\usepackage{amsthm,amsmath,amsfonts,amssymb}
\usepackage{bm,color,appendix,graphicx,caption}
\usepackage[colorlinks,bookmarks=true,citecolor=blue,linkcolor=blue,urlcolor=blue]{hyperref}
\captionsetup{justification=raggedright,singlelinecheck=false}

\newtheorem{theorem}{Theorem}[section]
\newtheorem{corollary}{Corollary}[theorem]

\newcommand{\beqn}{\begin{eqnarray}}
\newcommand{\eeqn}{\end{eqnarray}}

\newcommand{\mG}{\mathcal{G}}
\newcommand{\mC}{\mathcal{C}}
\newcommand{\tw}{\tilde{w}}

\begin{document}
\title{Exact Landau Level Description of Geometry and Interaction in a Flatband}

\author{Jie Wang}
\email{jiewang@flatironinstitute.org}
\affiliation{Center for Computational Quantum Physics, Flatiron Institute, 162 5th Avenue, New York, New York 10010, USA}
\author{Jennifer Cano}
\affiliation{Center for Computational Quantum Physics, Flatiron Institute, 162 5th Avenue, New York, NY 10010, USA}
\affiliation{Department of Physics and Astronomy, Stony Brook University, Stony Brook, New York 11974, USA}
\author{Andrew J. Millis}
\affiliation{Center for Computational Quantum Physics, Flatiron Institute, 162 5th Avenue, New York, NY 10010, USA}
\affiliation{Department of Physics, Columbia University, 538 W 120th Street, New York, New York 10027, USA}
\author{Zhao Liu}
\email{zhaol@zju.edu.cn}
\affiliation{Zhejiang Institute of Modern Physics, Zhejiang University, Hangzhou 310027, China}
\author{Bo Yang}
\email{yang.bo@ntu.edu.sg}
\affiliation{Division of Physics and Applied Physics, Nanyang Technological University, 637371, Singapore}
\affiliation{Institute of High Performance Computing, A$^*$STAR, 138632, Singapore}

\begin{abstract}
Flatbands appear in many condensed matter systems, including the two-dimensional electron gas in a high magnetic field, correlated materials, and moir\'e heterostructures. They are characterized by intrinsic geometric properties such as the Berry curvature and Fubini-Study metric. The influence of the band geometry on electron-electron interaction is difficult to understand analytically because the geometry is in general nonuniform in momentum space. In this work, we study the topological flatband of Chern number $~\mC=1$ with a momentum-dependent but positive definite Berry curvature that fluctuates in sync with Fubini-Study metric. We derive an exact correspondence between such ideal flatbands and Landau levels and show that the band geometry fluctuation gives rise to a new type of interaction in the corresponding Landau levels that depends on the center of mass of two particles. We characterize such interactions by generalizing the usual Haldane pseudopotentials. This mapping gives exact zero-energy ground states for short-ranged repulsive generalized pseudopotentials in flatbands, in analogy to fractional quantum Hall systems. Driving the center-of-mass interactions beyond the repulsive regime leads to a dramatic reconstruction of the ground states towards gapless phases. The generalized pseudopotential could be a useful basis for future numerical studies.
\end{abstract}

\maketitle

The one-electron states in periodic solids are characterized both by their dispersion (variation of energy with crystal momentum) and by their band geometry, defined by the variation of the electronic wavefunction with crystal momentum. In a single-band system, the band geometry is defined by the quantum geometric tensor:
\begin{equation}
\mathcal{Q}^{ab}_{\bm k} = \langle D^a_{\bm k}u_{\bm k}|D^b_{\bm k}u_{\bm k}\rangle = g^{ab}_{\bm k} + \frac{i}{2}\epsilon^{ab}\Omega_{\bm k},\label{quantumgeotensor}
\end{equation}
where $u_{\bm k}(\bm r)=\langle\bm r|u_{\bm k}\rangle$ is the periodic part of the Bloch wavefunction $\psi_{\bm k}(\bm r)$, $D^a_{\bm k}$ is the covariant derivative operator that adiabatically transports the wavefunction along the spatial direction $a=x,y$, and $\epsilon^{ab}$ is the antisymmetric tensor. Here the Berry curvature $\Omega_{\bm k}$, and the Fubini-Study metric (FSM) $g^{ab}_{\bm k}$, are respectively the imaginary and real part of the quantum geometric tensor \footnote{We use the convention that Berry connection and Berry curvature are $\bm A^a_{\bm k}=-i\protect\langle u_{\bm k}|\partial^a_{\bm k}u_{\bm k}\protect\rangle$ and $\Omega_{\bm k}=\epsilon_{ab}\partial^a_{\bm k}\bm A^b_{\bm k}$ respectively, which differs from the usual convention by a minus sign, but gives us $\bm k-$space holomorphic wavefunctions with positive Berry curvature. This sign convention was used, for instance, in Ref.~(\onlinecite{haldaneanomaloushall}).}.

The interplay of the band geometry and dispersion has been elucidated on the single-particle level \cite{vanderbilt_2018,Di_review,Peotta:2015aa}, where it leads to many interesting phenomena including the anomalous Hall effect \cite{RMP_AQH,haldaneanomaloushall}. Recent experimental and theoretical interest on moir\'e materials has centered on the ``flatband" situation \cite{Bistritzer12233,Santos,Cao:2018aa,Cao:2018ab,Sharpe605,Serlin900}, where the electron dispersion is small relative to interaction scales and the physics is controlled by electron-electron interactions. A growing body of evidence indicates that in flatband situations the band geometry plays a crucial role in determining the electron-electron interaction physics. For example, in the canonical lowest Landau level (LLL) problem of electrons with a continuous two-dimensional translation invariance in a uniform perpendicular magnetic field, both $g^{ab}_{\bm k}$ and $\Omega_{\bm k}$ are $\bm k-$independent. This $\bm k-$independence enables detailed analytical understanding of the physics even in the presence of strong electron-electron interactions \cite{Haldane_hierarchy}. However, generically in periodic lattice systems the band geometry is highly nonuniform in momentum space, and while the interplay between the band geometry and interactions has been numerically studied \cite{PhysRevX.1.021014,zhao_review} analytical understanding has been limited \cite{PARAMESWARAN2013816,PhysRevB.90.165139,Jackson:2015aa,Martin_PositionMomentumDuality,Yangle_Modelwf,PhysRevB.88.035101,PhysRevB.90.085103,PhysRevB.96.165150,PhysRevX.5.041003,scottjiehaldane,Jie_Dirac}. 

In this Letter we take a step towards understanding the relation between the band geometry and interaction physics. Our work is inspired by the chiral model of  twisted bilayer graphene (cTBG) which at certain ``magic'' twist angles realizes exactly dispersionless bands \cite{Grisha_TBG}. The chiral model is understood as a kind of fixed point Hamiltonian \cite{Oscar_hiddensym} capturing the interacting physics of twisted bilayer graphene \cite{Zaletel_PRX20,lian2020tbg,bernevig2020tbg_5} and has the special property \cite{PhysRevB.90.165139,Jackson:2015aa,Martin_PositionMomentumDuality,Grisha_TBG2} that while the band geometry is nonuniform, the FSM is related to the Berry curvature in the following way:
\begin{equation}
g^{ab}_{\bm k} = \frac{1}{2}\omega^{ab}\Omega_{\bm k},\label{tracecond}
\end{equation}
where $\omega^{ab}$ is a constant determinant one positive symmetric matrix.

Following Ref.~(\onlinecite{Martin_PositionMomentumDuality}), we define \emph{ideal flatbands} as dispersionless Bloch bands with (i) a positive definite Berry curvature that (ii) fluctuates in sync with the FSM as in Eq.~(\ref{tracecond}). We show that the ideal flatband assumptions (i) and (ii) fix the forms of single-particle wavefunctions in a topological flatband with Chern number $\mC=1$ (as occurs in cTBG), establishing an exact correspondence between an ideal flatband and the LLL. Using this correspondence we show that the electron-electron interaction in an ideal flatband with spatially fluctuating band geometry can be exactly mapped to a center-of-mass (COM) dependent interaction in the LLL, which can be systematically characterized by the generalized COM pseudopotentials derived here. We show that the resulting interacting Hamiltonian possesses exact zero modes, corresponding to the previously discussed generalization of the Laughlin fractional quantum Hall (FQH) states \cite{Grisha_TBG2}; however, depending on the values of the COM interaction parameters, charge density wave states of lower energy may exist. In the last section of the Supplementary Material (SM), we derive further implications for superconductivity and the composite Fermi liquid phase in TBG flatbands.

\emph{Wavefunctions of $~\mC=1$ ideal flatbands.---}
Locally, such a flatband mimics a LL in $\bm k-$space: the quantum geometric tensor at every $\bm k$ point has a constant null vector $\mathcal{Q}^{ab}_{\bm k}\omega_b=0$, which also determines $\omega^{ab}=\omega^a\omega^{b*} + \omega^{a*}\omega^b$ in Eq.~(\ref{tracecond}). This uniform null vector defines the $\bm k-$space complex structure \cite{kahlerband1,kahlerband2} and gives the Bloch wavefunction a universal form \cite{Martin_PositionMomentumDuality}:
\begin{equation}
\psi_{\bm k}(\bm r) \sim \tilde{u}_{k}(\bm r)\exp(i\bm k\cdot\bm r),\label{martin_wf}
\end{equation}
where the bolded $\bm k$ gives the momentum vector and unbolded $k\equiv\omega^a\bm k_a$ is a complex number. The cell-periodic function $\tilde{u}_k$ is holomorphic in $k$ up to a normalization factor.

We define the $\bm k-$space boundary condition for the periodic part of the Bloch wavefunction:
\begin{equation}
\tilde{u}_{k+b}(\bm r) = e^{i\phi_{k,b}}e^{-i\bm b\cdot\bm r}\tilde{u}_{k}(\bm r).
\end{equation}

The complex phase $\phi_{k,b}$ must be holomorphic in $k$ because both $\tilde{u}_{k+b}$ and $\tilde{u}_k$ are. A nonzero Chern number requires that $\tilde{u}_k$, as a function of $k$, must have discontinuities in the Brillouin zone (BZ). Such discontinuities show up at the BZ boundary as non-zero $\phi_{k,b}$, in the bulk as wavefunction singularities, or both \cite{Thouless_1984}. For a $\mC=1$ ideal flatband, it is necessary to have non-zero $\phi_{k,b}$: in contrast, Ref.~(\onlinecite{Martin_PositionMomentumDuality}) assumed $\phi_{k,b}=0$ so discussions were limited to wavefunctions of $\mC \geq 2$.

The boundary condition $\phi_{k,b}$ plays a crucial role in determining the wavefunction of the ideal band. Following Cauchy's argument principle, the BZ boundary integral $\frac{1}{2\pi i}\oint dk~\partial_k \ln\tilde{u}_{k}(\bm r)$ is an integer. We show in the SM that this integer is equal to the Chern number and can be written as \cite{Note3}:
\begin{equation}
\mC=-\frac{1}{2\pi}\left(\phi_{k_0+b_1,b_2}-\phi_{k_0,b_2}+\phi_{k_0,b_1}-\phi_{k_0+b_2,b_1}\right),\label{boundaryintegral}
\end{equation}
where, as illustrated in Fig.~\ref{illustration} (a), $b_{1,2}$ are primitive reciprocal lattice vectors and $k_0$ is the BZ origin. Insensitivity of the Chern number to the choice of $k_0$, combined with Eq.~(\ref{boundaryintegral}), forces $\phi_{k,b}$ to be a linear function of $k$. Since $\tilde{u}_k$ is holomorphic in $k$, it is uniquely determined by the boundary condition $\phi_{k,b}$, giving the bulk wavefunction:
\begin{equation}
\psi_{\bm k}(\bm r) = \mathcal{N}_{\bm k}\mathcal{B}(\bm r)\Phi_{\bm k}(\bm r),\label{modelwavefunction}
\end{equation}
where $\mathcal{N}_{\bm k}$, $\mathcal{B}(\bm r)$ and $\Phi_{\bm k}$ are the normalization factor, a $\bm k-$independent quasiperiodic function and the LLL wavefunction, respectively. Expressed in the symmetric gauge, $\Phi_{\bm k}(\bm r) = \sigma(z+ik) \exp\left(ik^*z\right) \exp\left(-\frac{1}{2}|z|^2-\frac{1}{2}|k|^2\right)$ where $\sigma(z)$ is the modified Weierstrass sigma function \cite{haldanemodularinv,Jie_MonteCarlo,FERRARI_Sigma,FERRARI_QHwannier} and $z\equiv\omega_a\bm r^a$ \footnote{Compared to the commonly used Jacobi $\theta$ function representation (Landau gauge), the $\sigma$ function representation (symmetric gauge) has the advantage of labeling LLL states by two quantum numbers $\bm k=(k_1, k_2)$ in the same way of labeling Bloch states in solids. Different representations are related by gauge transformations discussed in SM.}. Generalizing to negative definite Berry curvature is straightforward. We leave more detailed discussions of the ideal flatband conditions, holomorphic wavefunction Eq.~(\ref{martin_wf}) and the uniqueness of our $\mC=1$ model wavefunction Eq.~(\ref{modelwavefunction}) to the SM \footnote{SM includes detailed discussions on: model wavefunctions, ideal conditions, interacting models, pseudopotentials and numerical details.}.

\begin{figure}
    \centering
    \includegraphics[width=0.5\textwidth]{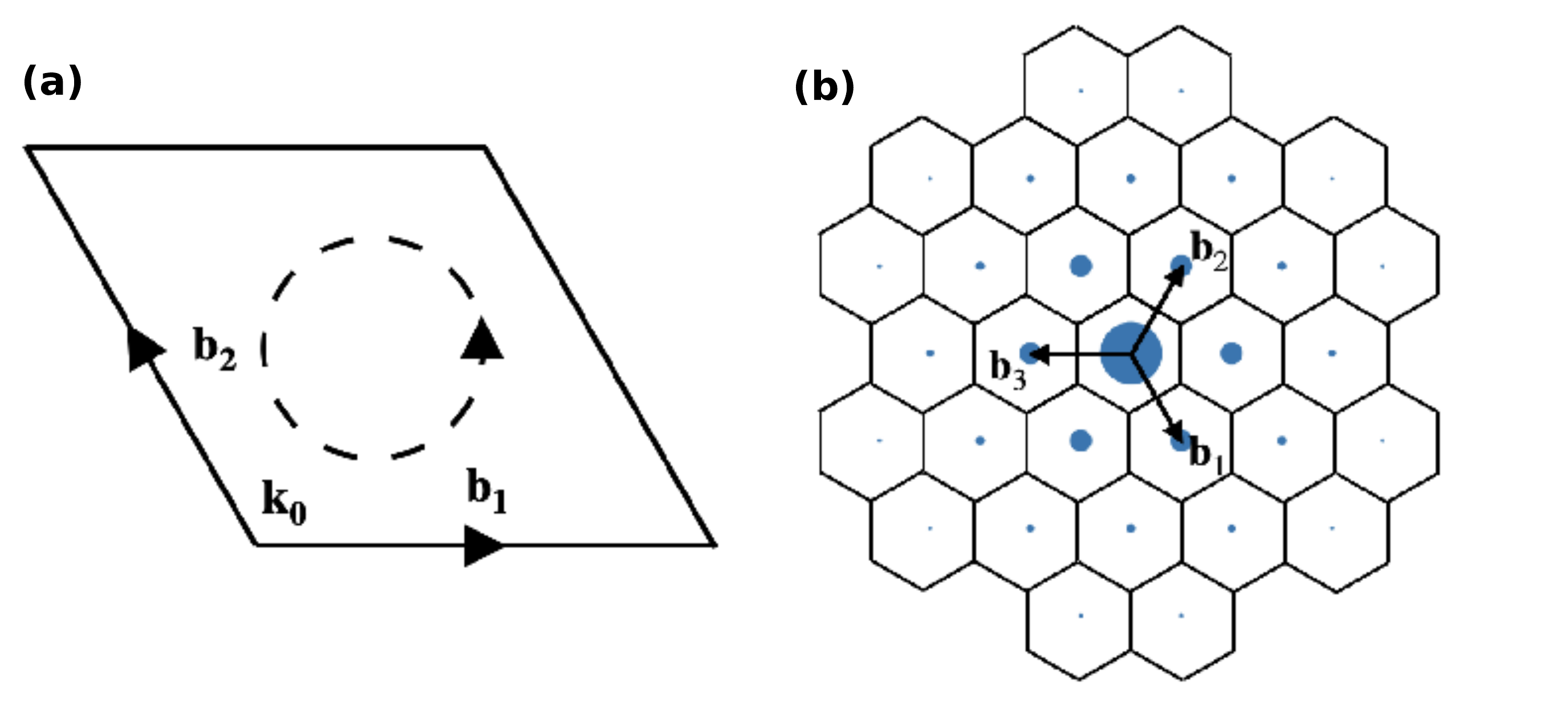}
    \caption{(a) Geometry of the Brillouin zone: $\bm k_0$ is the origin, $\bm b_{1,2}$ are primitive reciprocal lattice vectors, and the dashed circle sketches the orientation of the Brillouin zone boundary integral used in Eq.~(\ref{boundaryintegral}). (b) Plot of reciprocal space of cTBG at the first magic angle with reciprocal lattice vectors used in the main text indicated, and the values of the Fourier modes $w_{\bm b}$ Eq.~(\ref{FourierMode}) indicated by the size of solid dots. We find the first two modes $w_{\bm 0}=1$, $w_{\pm\bm b_{1,2,3}}=0.243$ dominate. The $w_{\bm b}$ determine the single-particle band geometry through Eq.~(\ref{expressionOmega}) and Eq.~(\ref{BCNK}), and the interaction model Eq.~(\ref{effectiveinteractionQH}).}\label{illustration}
\end{figure}

\emph{Band geometry of ideal flatbands.---}
We now explicitly compute the band geometry of an ideal flatband using the model wavefunction Eq.~(\ref{modelwavefunction}). Exploiting the magnetic translation algebra of the LLL states, we find~\cite{Note3}:
\begin{equation}
\Omega_{\bm k} = 2\sqrt{\det g_{\bm k}} = -1 +\Delta_{\bm k}\log\mathcal{N}_{\bm k},\label{expressionOmega}
\end{equation}
where $\Delta_{\bm k}$ is the Laplace operator. Eq.~(\ref{expressionOmega}) shows that the logarithm of the normalization factor $\mathcal{N}_{\bm k}$ is the $\bm k-$space K\"{a}hler potential \cite{Douglas:2009aa}, which controls the fluctuation of the band geometry and can be explicitly calculated \cite{Note3}:
\begin{equation}
\mathcal{N}_{\bm k}^{-2} = \sum_{\bm b}\eta_{\bm b}{w}_{\bm b}\exp\left(i\bm k\times\bm b\right)\exp\left(-\frac14|\bm b|^2\right),\label{BCNK}
\end{equation}
where $\eta_{\bm b}$=$+1$ if $\bm b/2$ is a reciprocal lattice vector and $-1$ otherwise, and $w_{\bm b}$ are the Fourier components of $|\mathcal{B}(\bm r)|^2$:
\begin{equation}
|\mathcal{B}(\bm r)|^2 = \sum_{\bm b}w_{\bm b}\exp(i\bm b\cdot\bm r),\label{FourierMode}
\end{equation}
where $\bm b=m_1\bm b_1+m_2\bm b_2$, $m_{1,2}\in\mathbb{Z}$ is a reciprocal lattice vector. The band geometry is uniform if $w_{\bm b\neq\bm 0}=0$.

\emph{Effective fractional quantum Hall model.---}A consequence of the exact wavefunction in Eq.~(\ref{modelwavefunction}) is that the interacting physics in a $\mC=1$ ideal flatband is described by a FQH-type model with a new Umklapp interaction that breaks continuous translation symmetry. We demonstrate that this Umklapp interaction captures precisely the fluctuating band geometry of an ideal flatband.

We consider a generic translation invariant two-particle interaction $v(\bm r_1-\bm r_2)$. According to Eq.~(\ref{modelwavefunction}), projecting this interaction into an ideal flatband yields an effective FQH model with the interaction
\begin{eqnarray}
&&\tilde{v}(\bm r_1,\bm r_2) = |\mathcal{B}(\bm r_1)\mathcal{B}(\bm r_2)|^2\cdot v(\bm r_1-\bm r_2),\label{effectiveinteractionQH}\\
&\approx&\sum_{\bm q}\!\!\left(\!\tilde{w}_0\!+\!\!\sum_{\bm b_i;j=1,2}\!\!(\tilde{w}_{i}e^{i\bm b_i\cdot\bm r_j} + h.c.)\!\right)\!v_{\bm q}e^{i\bm q(\bm r_1-\bm r_2)},\label{effectiveinteractionQH2}
\end{eqnarray}
projected to the LLL, 
where the normalization factors have been
dropped due to their weak $\bm k-$dependence according to Eq.~(\ref{BCNK}). The factor $|\mathcal{B}(\bm r)|^2$ reduces the continuous translation symmetry of $v(\bm r_1-\bm r_2)$ to the discrete lattice translation symmetry of $\tilde{v}(\bm r_1,\bm r_2)$. Such a symmetry reduction manifests itself as the inclusion of the ``Umklapp'' terms \footnote{We use ``Umklapp interaction'' and ``COM interaction'' inter-changeably.} that scatter electrons across the BZ which distinguish Eq.~(\ref{effectiveinteractionQH}) from the usual FQH models.

The effective FQH model Eq.~(\ref{effectiveinteractionQH}) can be simplified by retaining only the leading Umklapp processes that scatter electrons by the shortest distance in $\bm k-$space, because other Umklapp terms are suppressed after the LLL projection. This leads to Eq.~(\ref{effectiveinteractionQH2}) where the Umklapp interaction parameters $\tilde{w}_{0,1}$ can be easily derived from the Fourier modes $w_{\bm b}$ \cite{Note5}. To verify the validity of this approximation, we consider electrons at $1/3$ filling in the spin-valley polarized topological flatband of cTBG at the first magic angle. In this case, $\mathcal{B}(\bm r)$ is a two-component layer spinor $\left[i\mG(\bm r),\mG(-\bm r)\right]^T$ \cite{JieWang_NodalStructure}. The leading Umklapp processes scatter electrons by $\bm b_{1,2,3}$, shown in Fig.~\ref{illustration} with the same real amplitude $\tw_1$ due to the $\mC_3$ and exact intravalley inversion symmetries \cite{JieWang_NodalStructure}. In Fig.~\ref{illustration} (b), we plot the wavefunction's Fourier mode $w_{\bm b}$ of $|\mG(\bm r)|^2+|\mG(-\bm r)|^2$ and find $w_{\bm 0}, w_{\bm b_1}$ dominate, which determines the parameters in Eq.~(\ref{effectiveinteractionQH2}) to be $(\tw_0,\tw_1)=(1.35,0.3)$ \cite{Note5}. We assume electrons interact via a layer-isotropic $v_1$ Haldane pseudopotential $v(\bm r_1-\bm r_2)=\delta''(\bm r_1-\bm r_2)$ \cite{Haldane_hierarchy}. Remarkably, the entire low-energy spectrum of the cTBG model on the torus (blue dots in Fig.~\ref{compareu1}), including both the threefold degenerate ground states at zero energy and the gapped low-lying magnetoroton mode \cite{gmpl,PhysRevB.90.045114}, is well reproduced by the effective FQH model Eq.~(\ref{effectiveinteractionQH2}) with $\tw_{0}$ and $\tw_{1}$ [red crosses in Fig.~\ref{compareu1} (a)]. However, if we assume a uniform band geometry by setting $\tw_1=0$, the obtained spectrum [red crosses in Fig.~\ref{compareu1} (b)] shows significant deviations from the cTBG spectrum although the ground states stay at zero energy. This indicates our effective FQH model with nonzero $\tw_1$ indeed captures the spatially fluctuating band geometry of cTBG flatband.

\begin{figure}
    \centering
    \includegraphics[width=0.5\textwidth]{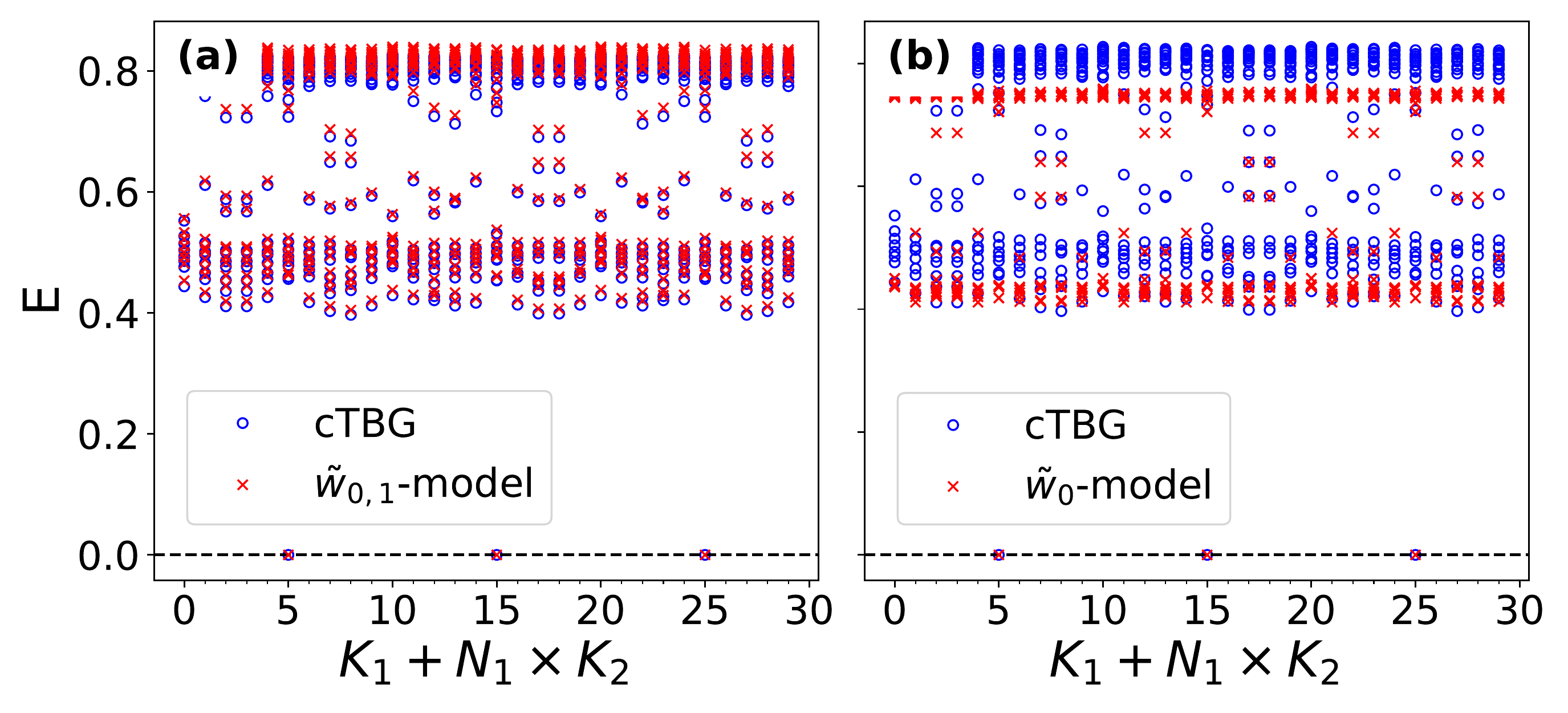}
    \caption{Exact diagonalization of $N=10$ particles in a TBG lattice of $(N_1,N_2)=(5,6)$ unit cells on the torus geometry, where $N_{1,2}$ are the number of unit cells along each primitive lattice direction. Many-body momenta $K_1\in[0,N_1-1]$, $K_2\in[0,N_2-1]$ are integers labeling each energy level \cite{haldanetorus1,haldanetorus2}. Blue circles are energies of cTBG with relative interaction $v_1$, and are the same in both panels. Red crosses are energies of the FQH model Eq.~(\ref{effectiveinteractionQH2}) with (a) $(\tw_0,\tw_1)=(1.35,0.3)$ and (b) $(\tw_0,\tw_1)=(1.35,0)$. Normalization factors Eq.~(\ref{BCNK}) are taken into account in numerical calculations. Including $\tw_1$ in (a) closely reproduces the low-energy details of the cTBG spectrum. Three exact zero modes are visible in both cTBG and the FQH models.}\label{compareu1}
\end{figure}

\emph{Center-of-mass pseudopotentials.---}The exact many-body zero modes observed above are the generalized-Laughlin states given in Ref.~(\onlinecite{Grisha_TBG2}), written here as $\Phi=\left(\prod_{i=1}^{N}\mathcal{B}(\bm r_i)\right)\Psi$, where $\Psi$ is the usual LLL Laughlin wavefunction. We now extend Haldane's pseudopotentials to capture the COM interactions. This allows us to systematically study how interactions can stabilize FQH states subject to nonuniform band geometry. We start by rewriting Eq.~(\ref{effectiveinteractionQH}) as follows:
\begin{eqnarray}
\tilde{v}(\bm r_1,\bm r_2) &=& \int d\bm q_+^2d\bm q_-^2\tilde{v}_{\bm q_+,\bm q_-}e^{i(\bm q_+\cdot\bm R^++\bm q_-\cdot\bm R^-)},\label{IntQq}\\
&=& \sum_{M,m}c_{M,m}\hat{P}^+_M\hat{P}^-_m, \label{IntMm}
\end{eqnarray}
where $\bm R^+$ and $\bm R^-$ are the LLL projected COM and relative coordinates of two particles. A generic two-particle interaction can be expressed in terms of its COM and relative translational momentum $\bm q_+$ and $\bm q_-$ as in Eq.~(\ref{IntQq}). For simplicity, we assume rotational symmetry, so that we can define projectors $\hat{P}_m^{\pm}\equiv2\int \frac{d^2\bm q}{(2\pi)^2} L_m(\bm q^2)e^{-\bm q^2/2}e^{i\bm q\cdot\bm R^{\pm}}$ which project the particle pair into its COM and relative angular momentum sectors respectively. The interacting Hamiltonian $\tilde{v}(\bm r_1,\bm r_2)$ can then be written as Eq.~(\ref{IntMm}), where $c_{M,m}=\int d\bm q_+^2d\bm q_-^2\tilde{v}_{\bm q_+,\bm q_-}L_M(\bm q_+^2)L_m(\bm q_-^2)$ is the generalized pseudopotential coefficient and $L_m$ is the Laguerre polynomial. The $c_{M,m}$ can be extracted from the energy spectrum of two interacting particles \cite{zhao_nonabelian,hierarchy_FCI}. 

The key insight here is that if $c_{M,1}>0$ and $c_{M,m>1}=0$, the generalized Laughlin state $\Phi$ has exactly zero energy, no matter how $c_{M,1}$ depends on the COM angular momentum $M$. We can thus construct a family of many-body states that are topologically equivalent to the Laughlin state, where the usual Laughlin state corresponds to the special case where $c_{M,1}$ is independent of $M$. Generalization to periodic lattice systems without rotational invariance is straightforward with generalized Laguerre polynomials \cite{generalizedPP_PRL,generalizedPP_PRB}. We emphasis that the statement is unchanged even if rotational invariance is broken: Eqs.~(\ref{effectiveinteractionQH}) and (\ref{effectiveinteractionQH2}) exhibit threefold exact zero modes at one-third filling for arbitrary orders of Umklapp scatterings of arbitrary strengths even with the $\bm k-$dependent normalization factors, as long as the relative interaction is the $v_1$ Haldane pseudopotential \cite{Note3}.

\emph{Center-of-mass interaction induced transitions.---}We now examine how the ground state and low energy physics of the effective FQH model in Eq.~(\ref{effectiveinteractionQH2}) evolve with $\tw_1/\tw_0$. We note that $\tw_1/\tw_0 $ is constrained by the band geometry; for example on a rectangular lattice with the $v_1$ interaction $|\tw_1/\tw_0|\leq0.25$ \footnote{The COM interaction parameters are: $\tw_0=w_{\bm 0}^2+4|w_{\bm b_1}|^2$, $\tw_1=w_{\bm 0}w_{\bm b_1}$ on square lattice; $\tw_0=w_{\bm 0}^2+6|w_{\bm b_1}|^2$, $\tw_1=w_{\bm 0}w_{\bm b_1}+w^{*2}_{\bm b_1}$ on triangular lattice. See SM for details}. However, sign changes in $\Omega_k$ or additional structure in the interaction may widen the allowed range. In Fig.~\ref{COMtransition}, we plot the ground state energies in unit of $\tw_0$ on the rectangular lattice as a function of $\tw_1/\tw_0$. We find that the ground state is the zero-energy generalized Laughlin state for small $|\tw_1/\tw_0|$. However for large enough $|\tw_1/\tw_0|$, the ground state energy becomes negative and the zero-energy generalized Laughlin state is an excited state. The occurrence of the negative-energy ground states can also be seen from the COM pseudopotentials plotted in Fig.~\ref{COMtransition} (b), where regions with negative values are shown.

To further understand the negative-energy ground states of the generalized FQH model, we compute the guiding-center structure factor $S(\bm q)$ $\equiv$ $\left(\langle\rho(\bm q)\rho(-\bm q)\rangle-\langle\rho(\bm q)\rangle\langle\rho(-\bm q)\rangle\right)/\left(N_1N_2\right)$, at $\tw_1/\tw_0$ = $0.3$ (before transition) and $0.5$ (after transition) for $N=12$ electrons in the $(N_1,N_2)=(6,6)$ lattice. The $S(\bm q)$ measures the density-density correlations of guiding centers and $\rho(\bm q)\equiv\exp(i\bm q\cdot\bm R)$ is the LLL-projected density operator. At $\tilde{w}_1/\tilde{w}_0=0.3$, the ground states are in the many-body momentum $K=(0,0)$ sector with exact zero energy. The corresponding structure factor has continuous peaks consistent with the incompressible Laughlin liquid. At $\tw_1/\tw_0=0.5$, the ground state is still in the $K=(0,0)$ sector, with nearby low-lying states at $\pm K_{\bm q}$ and $\pm\mC_4 K_{\bm q}$ where $K_{\bm q}=(3,0)$ and $\mC_4$ is the rotation by $\pi/2$. Remarkably, $S(\bm q)$ has discretized peaks exactly at $\pm K_{\bm q}$ and $\pm\mC_4 K_{\bm q}$. That the structure factor peak occurs exactly at momenta corresponding to low-energy excitations strongly suggests a gapless charge density wave (CDW) at $\tw_1/\tw_0=0.5$ \cite{KunYang_WignerCrystal}. The gapless CDW is analogous to the stripe phase and Wigner crystal reported in usual FQH systems at low filling factors. However in contrast to the usual FQH system, here the transition at a fixed filling factor $1/3$ is driven entirely by the band geometry and the transition occurs as a level crossing in Laughlin states' momentum sector $K=(0,0)$.

\begin{figure}
    \centering
    \includegraphics[width=0.5\textwidth]{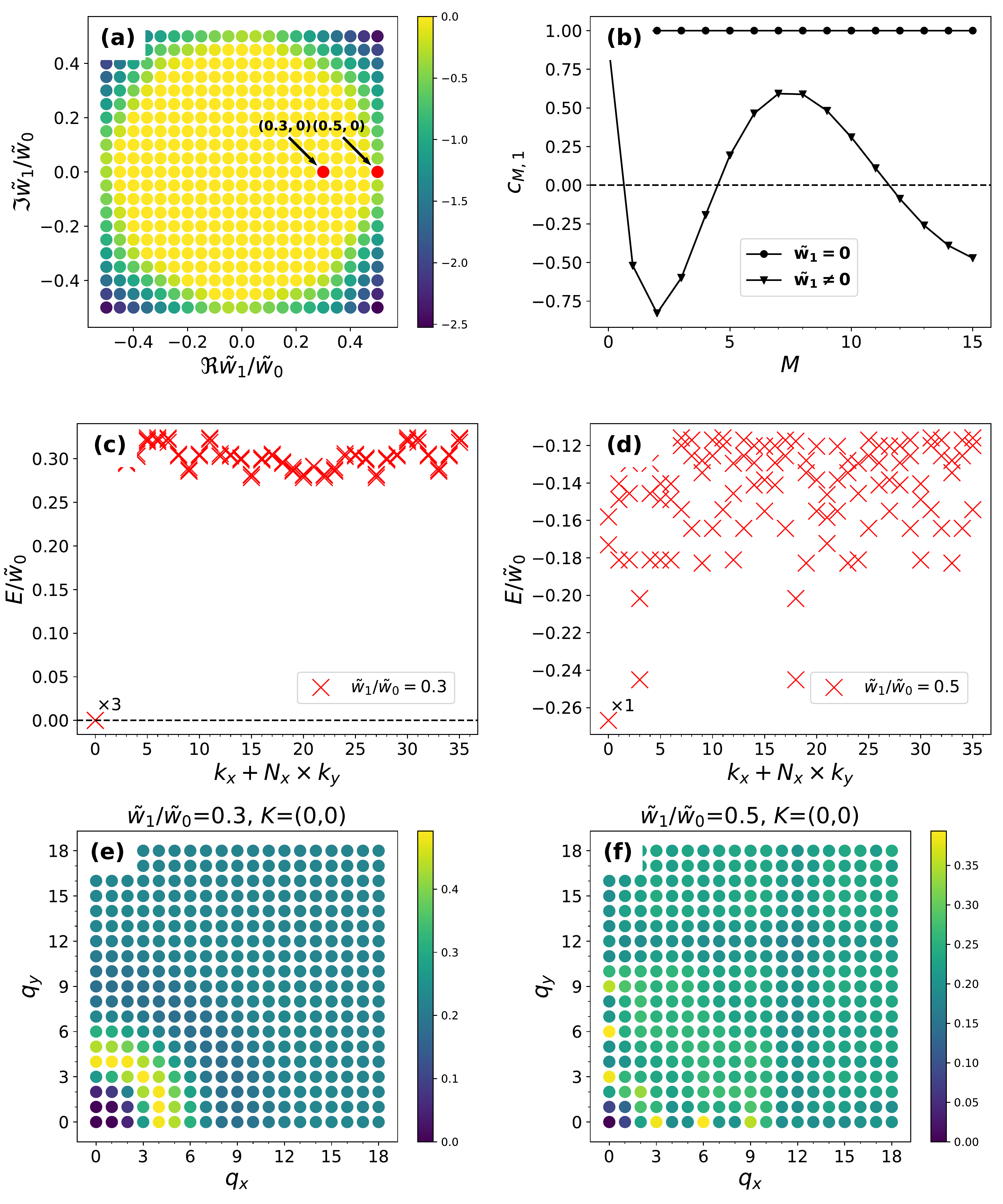}
    \caption{Center-of-mass interaction induced transitions. (a): Ground-state energies of the model in Eq.~(\ref{effectiveinteractionQH2}) as a function of $\tw_1/\tw_0$. Threefold exact zero modes are present for all values of $\tw_1/\tw_0$ (which remains true if $\tw_n$ is included to any order $n$). The appearance of negative-energy ground states is possible due to the negative value of the center-of-mass pseudopotential plotted in (b). Figures (c,d) and (e,f) are respectively the spectrum and the ground-state guiding-center structure factor $S(\bm q)$ for the two marked data points in (a) that represent typical phases before and after the transition. (c), (e) At $\tw_1/\tw_0 = 0.3$, the threefold degenerate zero-energy ground states, finite gap and continuous peak in $S(\bm q)$ are consistent with the Laughlin state. (d), (f) At $\tw_1/\tw_0=0.5$, the single negative-energy ground state, small excitation gap and discretized peaks in $S(\bm q)$ suggest a CDW phase. Plots are for a rectangular lattice on the torus, with $\bm b_1\cdot\bm b_2=0$ and $|\bm b_1|=|\bm b_2|$. The system sizes are $N=8$, $(N_1,N_2)=(4,6)$ in (a), and $N=12$, $(N_1,N_2)=(6,6)$ in (c)-(f).}\label{COMtransition}
\end{figure}

\emph{Discussion.---}We have studied interacting physics in ideal flatbands with inhomogeneous but constrained band geometries. Employing the exact correspondence to LLLs, we mapped the inhomogeneous band geometry in a flatband to a COM interaction in the LLL. Remarkably, as shown in FIG.~\ref{COMtransition} (b), the COM interaction generically has attractive components, driving a phase transition \cite{PhysRevB.103.125406} from the Laughlin state to gapless states.

The attractive interaction induced by band geometry implies new physics; applying COM pseudopotentials enables systematic studies of various instabilities which will be immediately useful for a wide range of applications \cite{XiDai_PseudoLandaulevel,popov2020hidden,Zaletel_TBG_AQH,ZhaoLiu_TBG,Cecile_TBG_Flatband,Cecile_PRL20,Senthil_NearlyFlatBand,Zaletel_PRX20,lian2020tbg,bernevig2020tbg_5,xie2020tbg,Nick_Spiralorder,PhysRevB.102.035158,PhysRevLett.126.137601,2020arXiv200311559K,Xu:2020aa,Regan:2020aa,Young_partiallyfilledTBG}. We conclude with two more examples. Recently, skyrmion pairing has been proposed to explain the superconductivity in TBG \cite{chargedskyrmion}, which was subsequently numerically tested in a simplified LLL based model with flat band geometries \cite{skyrmionSC_DMRG}. It is thus important to examine how inhomogeneous band geometry influences superconductivity. In the last section of the SM, we find the COM interaction induced by band geometry in time-reversal invariant TBG flatbands exhibits attractive components, which would favor superconductivity when its spatial pattern matches the superconducting order parameter. A thorough understanding requires extensive numerical studies that we leave for future work. The COM interaction is not only a novel concept but also a useful numerical tool, which we demonstrate through the second example by studying the stability of composite Fermi liquid (CFL) in the spin-valley polarized cTBG flatband. By continuously interpolating between the LLL and cTBG flatbands using the ideal flatband theory, we find that CFLs remain ground states of cTBG without signatures of phase transitions. Generalizations to higher Chern number and Hofstadter-type models are interesting future directions \cite{Kapit_Mueller,ZhaoLiu_TDBG,Bart_FQHmoireHofstadter,Bart_HighC,Bart_FQHtransition}.

\begin{acknowledgements}
\emph{Acknowledgements.---}
The Flatiron Institute is a division of the Simons Foundation. J. W. acknowledges Martin Claassen, Debanjan Chowdhury, and Kun Yang for useful discussions. J. C. acknowledges the support of the Air Force Office of Scientific Research under Grant No.~FA9550-20-1-0260. A. J. M. is supported in part by Programmable Quantum Materials, an Energy Frontier Research Center funded by the U.S. Department of Energy (DOE), Office of Science, Basic Energy Sciences (BES), under award DE-SC0019443. Z. L. is supported by the National Key Research and Development Program of China through Grant No. 2020YFA0309200. B. Y. acknowledges the support from the Singapore National Research Foundation (NRF) under NRF fellowship award NRF-NRFF12-2020-0005, a Nanyang Technological University start-up grant (NTU-SUG), and Singapore Ministry of Education MOE2018-T3-1-002.
\end{acknowledgements}

\bibliography{ref.bib}

\begin{thebibliography}{77}%
\makeatletter
\providecommand \@ifxundefined [1]{%
 \@ifx{#1\undefined}
}%
\providecommand \@ifnum [1]{%
 \ifnum #1\expandafter \@firstoftwo
 \else \expandafter \@secondoftwo
 \fi
}%
\providecommand \@ifx [1]{%
 \ifx #1\expandafter \@firstoftwo
 \else \expandafter \@secondoftwo
 \fi
}%
\providecommand \natexlab [1]{#1}%
\providecommand \enquote  [1]{``#1''}%
\providecommand \bibnamefont  [1]{#1}%
\providecommand \bibfnamefont [1]{#1}%
\providecommand \citenamefont [1]{#1}%
\providecommand \href@noop [0]{\@secondoftwo}%
\providecommand \href [0]{\begingroup \@sanitize@url \@href}%
\providecommand \@href[1]{\@@startlink{#1}\@@href}%
\providecommand \@@href[1]{\endgroup#1\@@endlink}%
\providecommand \@sanitize@url [0]{\catcode `\\12\catcode `\$12\catcode
  `\&12\catcode `\#12\catcode `\^12\catcode `\_12\catcode `\%12\relax}%
\providecommand \@@startlink[1]{}%
\providecommand \@@endlink[0]{}%
\providecommand \url  [0]{\begingroup\@sanitize@url \@url }%
\providecommand \@url [1]{\endgroup\@href {#1}{\urlprefix }}%
\providecommand \urlprefix  [0]{URL }%
\providecommand \Eprint [0]{\href }%
\providecommand \doibase [0]{http://dx.doi.org/}%
\providecommand \selectlanguage [0]{\@gobble}%
\providecommand \bibinfo  [0]{\@secondoftwo}%
\providecommand \bibfield  [0]{\@secondoftwo}%
\providecommand \translation [1]{[#1]}%
\providecommand \BibitemOpen [0]{}%
\providecommand \bibitemStop [0]{}%
\providecommand \bibitemNoStop [0]{.\EOS\space}%
\providecommand \EOS [0]{\spacefactor3000\relax}%
\providecommand \BibitemShut  [1]{\csname bibitem#1\endcsname}%
\let\auto@bib@innerbib\@empty
\bibitem [{Note1()}]{Note1}%
  \BibitemOpen
  \bibinfo {note} {We use the convention that Berry connection and Berry
  curvature are $\protect \bm {A}^a_{\protect \bm {k}}=-i\protect \langle
  u_{\protect \bm {k}}|\partial ^a_{\protect \bm {k}}u_{\protect \bm
  {k}}\protect \rangle $ and $\Omega _{\protect \bm {k}}=\epsilon _{ab}\partial
  ^a_{\protect \bm {k}}\protect \bm {A}^b_{\protect \bm {k}}$ respectively,
  which differs from the usual convention by a minus sign, but gives us
  $\protect \bm {k}-$space holomorphic wavefunctions with positive Berry
  curvature. This sign convention was used, for instance, in Ref.~(\protect
  \rev@citealp {haldaneanomaloushall}).}\BibitemShut {Stop}%
\bibitem [{\citenamefont {Vanderbilt}(2018)}]{vanderbilt_2018}%
  \BibitemOpen
  \bibfield  {author} {\bibinfo {author} {\bibfnamefont {D.}~\bibnamefont
  {Vanderbilt}},\ }\href {\doibase 10.1017/9781316662205} {\emph {\bibinfo
  {title} {Berry Phases in Electronic Structure Theory: Electric Polarization,
  Orbital Magnetization and Topological Insulators}}}\ (\bibinfo  {publisher}
  {Cambridge University Press},\ \bibinfo {year} {2018})\BibitemShut {NoStop}%
\bibitem [{\citenamefont {Xiao}\ \emph {et~al.}(2010)\citenamefont {Xiao},
  \citenamefont {Chang},\ and\ \citenamefont {Niu}}]{Di_review}%
  \BibitemOpen
  \bibfield  {author} {\bibinfo {author} {\bibfnamefont {D.}~\bibnamefont
  {Xiao}}, \bibinfo {author} {\bibfnamefont {M.-C.}\ \bibnamefont {Chang}}, \
  and\ \bibinfo {author} {\bibfnamefont {Q.}~\bibnamefont {Niu}},\ }\href
  {\doibase 10.1103/RevModPhys.82.1959} {\bibfield  {journal} {\bibinfo
  {journal} {Rev. Mod. Phys.}\ }\textbf {\bibinfo {volume} {82}},\ \bibinfo
  {pages} {1959} (\bibinfo {year} {2010})}\BibitemShut {NoStop}%
\bibitem [{\citenamefont {Peotta}\ and\ \citenamefont
  {T{\"o}rm{\"a}}(2015)}]{Peotta:2015aa}%
  \BibitemOpen
  \bibfield  {author} {\bibinfo {author} {\bibfnamefont {S.}~\bibnamefont
  {Peotta}}\ and\ \bibinfo {author} {\bibfnamefont {P.}~\bibnamefont
  {T{\"o}rm{\"a}}},\ }\href {\doibase 10.1038/ncomms9944} {\bibfield  {journal}
  {\bibinfo  {journal} {Nature Communications}\ }\textbf {\bibinfo {volume}
  {6}},\ \bibinfo {pages} {8944} (\bibinfo {year} {2015})}\BibitemShut
  {NoStop}%
\bibitem [{\citenamefont {Nagaosa}\ \emph {et~al.}(2010)\citenamefont
  {Nagaosa}, \citenamefont {Sinova}, \citenamefont {Onoda}, \citenamefont
  {MacDonald},\ and\ \citenamefont {Ong}}]{RMP_AQH}%
  \BibitemOpen
  \bibfield  {author} {\bibinfo {author} {\bibfnamefont {N.}~\bibnamefont
  {Nagaosa}}, \bibinfo {author} {\bibfnamefont {J.}~\bibnamefont {Sinova}},
  \bibinfo {author} {\bibfnamefont {S.}~\bibnamefont {Onoda}}, \bibinfo
  {author} {\bibfnamefont {A.~H.}\ \bibnamefont {MacDonald}}, \ and\ \bibinfo
  {author} {\bibfnamefont {N.~P.}\ \bibnamefont {Ong}},\ }\href {\doibase
  10.1103/RevModPhys.82.1539} {\bibfield  {journal} {\bibinfo  {journal} {Rev.
  Mod. Phys.}\ }\textbf {\bibinfo {volume} {82}},\ \bibinfo {pages} {1539}
  (\bibinfo {year} {2010})}\BibitemShut {NoStop}%
\bibitem [{\citenamefont {Haldane}(2004)}]{haldaneanomaloushall}%
  \BibitemOpen
  \bibfield  {author} {\bibinfo {author} {\bibfnamefont {F.~D.~M.}\
  \bibnamefont {Haldane}},\ }\href {\doibase 10.1103/PhysRevLett.93.206602}
  {\bibfield  {journal} {\bibinfo  {journal} {Phys. Rev. Lett.}\ }\textbf
  {\bibinfo {volume} {93}},\ \bibinfo {pages} {206602} (\bibinfo {year}
  {2004})}\BibitemShut {NoStop}%
\bibitem [{\citenamefont {Bistritzer}\ and\ \citenamefont
  {MacDonald}(2011)}]{Bistritzer12233}%
  \BibitemOpen
  \bibfield  {author} {\bibinfo {author} {\bibfnamefont {R.}~\bibnamefont
  {Bistritzer}}\ and\ \bibinfo {author} {\bibfnamefont {A.~H.}\ \bibnamefont
  {MacDonald}},\ }\href {\doibase 10.1073/pnas.1108174108} {\bibfield
  {journal} {\bibinfo  {journal} {Proceedings of the National Academy of
  Sciences}\ }\textbf {\bibinfo {volume} {108}},\ \bibinfo {pages} {12233}
  (\bibinfo {year} {2011})},\ \Eprint
  {http://arxiv.org/abs/https://www.pnas.org/content/108/30/12233.full.pdf}
  {https://www.pnas.org/content/108/30/12233.full.pdf} \BibitemShut {NoStop}%
\bibitem [{\citenamefont {Lopes~dos Santos}\ \emph {et~al.}(2012)\citenamefont
  {Lopes~dos Santos}, \citenamefont {Peres},\ and\ \citenamefont
  {Castro~Neto}}]{Santos}%
  \BibitemOpen
  \bibfield  {author} {\bibinfo {author} {\bibfnamefont {J.~M.~B.}\
  \bibnamefont {Lopes~dos Santos}}, \bibinfo {author} {\bibfnamefont
  {N.~M.~R.}\ \bibnamefont {Peres}}, \ and\ \bibinfo {author} {\bibfnamefont
  {A.~H.}\ \bibnamefont {Castro~Neto}},\ }\href {\doibase
  10.1103/PhysRevB.86.155449} {\bibfield  {journal} {\bibinfo  {journal} {Phys.
  Rev. B}\ }\textbf {\bibinfo {volume} {86}},\ \bibinfo {pages} {155449}
  (\bibinfo {year} {2012})}\BibitemShut {NoStop}%
\bibitem [{\citenamefont {Cao}\ \emph {et~al.}(2018{\natexlab{a}})\citenamefont
  {Cao}, \citenamefont {Fatemi}, \citenamefont {Fang}, \citenamefont
  {Watanabe}, \citenamefont {Taniguchi}, \citenamefont {Kaxiras},\ and\
  \citenamefont {Jarillo-Herrero}}]{Cao:2018aa}%
  \BibitemOpen
  \bibfield  {author} {\bibinfo {author} {\bibfnamefont {Y.}~\bibnamefont
  {Cao}}, \bibinfo {author} {\bibfnamefont {V.}~\bibnamefont {Fatemi}},
  \bibinfo {author} {\bibfnamefont {S.}~\bibnamefont {Fang}}, \bibinfo {author}
  {\bibfnamefont {K.}~\bibnamefont {Watanabe}}, \bibinfo {author}
  {\bibfnamefont {T.}~\bibnamefont {Taniguchi}}, \bibinfo {author}
  {\bibfnamefont {E.}~\bibnamefont {Kaxiras}}, \ and\ \bibinfo {author}
  {\bibfnamefont {P.}~\bibnamefont {Jarillo-Herrero}},\ }\href {\doibase
  10.1038/nature26160} {\bibfield  {journal} {\bibinfo  {journal} {Nature}\
  }\textbf {\bibinfo {volume} {556}},\ \bibinfo {pages} {43} (\bibinfo {year}
  {2018}{\natexlab{a}})}\BibitemShut {NoStop}%
\bibitem [{\citenamefont {Cao}\ \emph {et~al.}(2018{\natexlab{b}})\citenamefont
  {Cao}, \citenamefont {Fatemi}, \citenamefont {Demir}, \citenamefont {Fang},
  \citenamefont {Tomarken}, \citenamefont {Luo}, \citenamefont
  {Sanchez-Yamagishi}, \citenamefont {Watanabe}, \citenamefont {Taniguchi},
  \citenamefont {Kaxiras}, \citenamefont {Ashoori},\ and\ \citenamefont
  {Jarillo-Herrero}}]{Cao:2018ab}%
  \BibitemOpen
  \bibfield  {author} {\bibinfo {author} {\bibfnamefont {Y.}~\bibnamefont
  {Cao}}, \bibinfo {author} {\bibfnamefont {V.}~\bibnamefont {Fatemi}},
  \bibinfo {author} {\bibfnamefont {A.}~\bibnamefont {Demir}}, \bibinfo
  {author} {\bibfnamefont {S.}~\bibnamefont {Fang}}, \bibinfo {author}
  {\bibfnamefont {S.~L.}\ \bibnamefont {Tomarken}}, \bibinfo {author}
  {\bibfnamefont {J.~Y.}\ \bibnamefont {Luo}}, \bibinfo {author} {\bibfnamefont
  {J.~D.}\ \bibnamefont {Sanchez-Yamagishi}}, \bibinfo {author} {\bibfnamefont
  {K.}~\bibnamefont {Watanabe}}, \bibinfo {author} {\bibfnamefont
  {T.}~\bibnamefont {Taniguchi}}, \bibinfo {author} {\bibfnamefont
  {E.}~\bibnamefont {Kaxiras}}, \bibinfo {author} {\bibfnamefont {R.~C.}\
  \bibnamefont {Ashoori}}, \ and\ \bibinfo {author} {\bibfnamefont
  {P.}~\bibnamefont {Jarillo-Herrero}},\ }\href {\doibase 10.1038/nature26154}
  {\bibfield  {journal} {\bibinfo  {journal} {Nature}\ }\textbf {\bibinfo
  {volume} {556}},\ \bibinfo {pages} {80} (\bibinfo {year}
  {2018}{\natexlab{b}})}\BibitemShut {NoStop}%
\bibitem [{\citenamefont {Sharpe}\ \emph {et~al.}(2019)\citenamefont {Sharpe},
  \citenamefont {Fox}, \citenamefont {Barnard}, \citenamefont {Finney},
  \citenamefont {Watanabe}, \citenamefont {Taniguchi}, \citenamefont
  {Kastner},\ and\ \citenamefont {Goldhaber-Gordon}}]{Sharpe605}%
  \BibitemOpen
  \bibfield  {author} {\bibinfo {author} {\bibfnamefont {A.~L.}\ \bibnamefont
  {Sharpe}}, \bibinfo {author} {\bibfnamefont {E.~J.}\ \bibnamefont {Fox}},
  \bibinfo {author} {\bibfnamefont {A.~W.}\ \bibnamefont {Barnard}}, \bibinfo
  {author} {\bibfnamefont {J.}~\bibnamefont {Finney}}, \bibinfo {author}
  {\bibfnamefont {K.}~\bibnamefont {Watanabe}}, \bibinfo {author}
  {\bibfnamefont {T.}~\bibnamefont {Taniguchi}}, \bibinfo {author}
  {\bibfnamefont {M.~A.}\ \bibnamefont {Kastner}}, \ and\ \bibinfo {author}
  {\bibfnamefont {D.}~\bibnamefont {Goldhaber-Gordon}},\ }\href {\doibase
  10.1126/science.aaw3780} {\bibfield  {journal} {\bibinfo  {journal}
  {Science}\ }\textbf {\bibinfo {volume} {365}},\ \bibinfo {pages} {605}
  (\bibinfo {year} {2019})}\BibitemShut {NoStop}%
\bibitem [{\citenamefont {Serlin}\ \emph {et~al.}(2020)\citenamefont {Serlin},
  \citenamefont {Tschirhart}, \citenamefont {Polshyn}, \citenamefont {Zhang},
  \citenamefont {Zhu}, \citenamefont {Watanabe}, \citenamefont {Taniguchi},
  \citenamefont {Balents},\ and\ \citenamefont {Young}}]{Serlin900}%
  \BibitemOpen
  \bibfield  {author} {\bibinfo {author} {\bibfnamefont {M.}~\bibnamefont
  {Serlin}}, \bibinfo {author} {\bibfnamefont {C.~L.}\ \bibnamefont
  {Tschirhart}}, \bibinfo {author} {\bibfnamefont {H.}~\bibnamefont {Polshyn}},
  \bibinfo {author} {\bibfnamefont {Y.}~\bibnamefont {Zhang}}, \bibinfo
  {author} {\bibfnamefont {J.}~\bibnamefont {Zhu}}, \bibinfo {author}
  {\bibfnamefont {K.}~\bibnamefont {Watanabe}}, \bibinfo {author}
  {\bibfnamefont {T.}~\bibnamefont {Taniguchi}}, \bibinfo {author}
  {\bibfnamefont {L.}~\bibnamefont {Balents}}, \ and\ \bibinfo {author}
  {\bibfnamefont {A.~F.}\ \bibnamefont {Young}},\ }\href {\doibase
  10.1126/science.aay5533} {\bibfield  {journal} {\bibinfo  {journal}
  {Science}\ }\textbf {\bibinfo {volume} {367}},\ \bibinfo {pages} {900}
  (\bibinfo {year} {2020})}\BibitemShut {NoStop}%
\bibitem [{\citenamefont {Haldane}(1983)}]{Haldane_hierarchy}%
  \BibitemOpen
  \bibfield  {author} {\bibinfo {author} {\bibfnamefont {F.~D.~M.}\
  \bibnamefont {Haldane}},\ }\href {\doibase 10.1103/PhysRevLett.51.605}
  {\bibfield  {journal} {\bibinfo  {journal} {Phys. Rev. Lett.}\ }\textbf
  {\bibinfo {volume} {51}},\ \bibinfo {pages} {605} (\bibinfo {year}
  {1983})}\BibitemShut {NoStop}%
\bibitem [{\citenamefont {Regnault}\ and\ \citenamefont
  {Bernevig}(2011)}]{PhysRevX.1.021014}%
  \BibitemOpen
  \bibfield  {author} {\bibinfo {author} {\bibfnamefont {N.}~\bibnamefont
  {Regnault}}\ and\ \bibinfo {author} {\bibfnamefont {B.~A.}\ \bibnamefont
  {Bernevig}},\ }\href {\doibase 10.1103/PhysRevX.1.021014} {\bibfield
  {journal} {\bibinfo  {journal} {Phys. Rev. X}\ }\textbf {\bibinfo {volume}
  {1}},\ \bibinfo {pages} {021014} (\bibinfo {year} {2011})}\BibitemShut
  {NoStop}%
\bibitem [{\citenamefont {BERGHOLTZ}\ and\ \citenamefont
  {LIU}(2013)}]{zhao_review}%
  \BibitemOpen
  \bibfield  {author} {\bibinfo {author} {\bibfnamefont {E.~J.}\ \bibnamefont
  {BERGHOLTZ}}\ and\ \bibinfo {author} {\bibfnamefont {Z.}~\bibnamefont
  {LIU}},\ }\href {\doibase 10.1142/S021797921330017X} {\bibfield  {journal}
  {\bibinfo  {journal} {International Journal of Modern Physics B}\ }\textbf
  {\bibinfo {volume} {27}},\ \bibinfo {pages} {1330017} (\bibinfo {year}
  {2013})},\ \Eprint
  {http://arxiv.org/abs/https://doi.org/10.1142/S021797921330017X}
  {https://doi.org/10.1142/S021797921330017X} \BibitemShut {NoStop}%
\bibitem [{\citenamefont {Parameswaran}\ \emph {et~al.}(2013)\citenamefont
  {Parameswaran}, \citenamefont {Roy},\ and\ \citenamefont
  {Sondhi}}]{PARAMESWARAN2013816}%
  \BibitemOpen
  \bibfield  {author} {\bibinfo {author} {\bibfnamefont {S.~A.}\ \bibnamefont
  {Parameswaran}}, \bibinfo {author} {\bibfnamefont {R.}~\bibnamefont {Roy}}, \
  and\ \bibinfo {author} {\bibfnamefont {S.~L.}\ \bibnamefont {Sondhi}},\
  }\href {\doibase https://doi.org/10.1016/j.crhy.2013.04.003} {\bibfield
  {journal} {\bibinfo  {journal} {Comptes Rendus Physique}\ }\textbf {\bibinfo
  {volume} {14}},\ \bibinfo {pages} {816 } (\bibinfo {year} {2013})},\ \bibinfo
  {note} {topological insulators / Isolants topologiques}\BibitemShut {NoStop}%
\bibitem [{\citenamefont {Roy}(2014)}]{PhysRevB.90.165139}%
  \BibitemOpen
  \bibfield  {author} {\bibinfo {author} {\bibfnamefont {R.}~\bibnamefont
  {Roy}},\ }\href {\doibase 10.1103/PhysRevB.90.165139} {\bibfield  {journal}
  {\bibinfo  {journal} {Phys. Rev. B}\ }\textbf {\bibinfo {volume} {90}},\
  \bibinfo {pages} {165139} (\bibinfo {year} {2014})}\BibitemShut {NoStop}%
\bibitem [{\citenamefont {Jackson}\ \emph {et~al.}(2015)\citenamefont
  {Jackson}, \citenamefont {M{\"o}ller},\ and\ \citenamefont
  {Roy}}]{Jackson:2015aa}%
  \BibitemOpen
  \bibfield  {author} {\bibinfo {author} {\bibfnamefont {T.~S.}\ \bibnamefont
  {Jackson}}, \bibinfo {author} {\bibfnamefont {G.}~\bibnamefont {M{\"o}ller}},
  \ and\ \bibinfo {author} {\bibfnamefont {R.}~\bibnamefont {Roy}},\ }\href
  {\doibase 10.1038/ncomms9629} {\bibfield  {journal} {\bibinfo  {journal}
  {Nature Communications}\ }\textbf {\bibinfo {volume} {6}},\ \bibinfo {pages}
  {8629} (\bibinfo {year} {2015})}\BibitemShut {NoStop}%
\bibitem [{\citenamefont {Claassen}\ \emph {et~al.}(2015)\citenamefont
  {Claassen}, \citenamefont {Lee}, \citenamefont {Thomale}, \citenamefont
  {Qi},\ and\ \citenamefont {Devereaux}}]{Martin_PositionMomentumDuality}%
  \BibitemOpen
  \bibfield  {author} {\bibinfo {author} {\bibfnamefont {M.}~\bibnamefont
  {Claassen}}, \bibinfo {author} {\bibfnamefont {C.~H.}\ \bibnamefont {Lee}},
  \bibinfo {author} {\bibfnamefont {R.}~\bibnamefont {Thomale}}, \bibinfo
  {author} {\bibfnamefont {X.-L.}\ \bibnamefont {Qi}}, \ and\ \bibinfo {author}
  {\bibfnamefont {T.~P.}\ \bibnamefont {Devereaux}},\ }\href {\doibase
  10.1103/PhysRevLett.114.236802} {\bibfield  {journal} {\bibinfo  {journal}
  {Phys. Rev. Lett.}\ }\textbf {\bibinfo {volume} {114}},\ \bibinfo {pages}
  {236802} (\bibinfo {year} {2015})}\BibitemShut {NoStop}%
\bibitem [{\citenamefont {Wu}\ \emph {et~al.}(2013)\citenamefont {Wu},
  \citenamefont {Regnault},\ and\ \citenamefont {Bernevig}}]{Yangle_Modelwf}%
  \BibitemOpen
  \bibfield  {author} {\bibinfo {author} {\bibfnamefont {Y.-L.}\ \bibnamefont
  {Wu}}, \bibinfo {author} {\bibfnamefont {N.}~\bibnamefont {Regnault}}, \ and\
  \bibinfo {author} {\bibfnamefont {B.~A.}\ \bibnamefont {Bernevig}},\ }\href
  {\doibase 10.1103/PhysRevLett.110.106802} {\bibfield  {journal} {\bibinfo
  {journal} {Phys. Rev. Lett.}\ }\textbf {\bibinfo {volume} {110}},\ \bibinfo
  {pages} {106802} (\bibinfo {year} {2013})}\BibitemShut {NoStop}%
\bibitem [{\citenamefont {Lee}\ \emph {et~al.}(2013)\citenamefont {Lee},
  \citenamefont {Thomale},\ and\ \citenamefont {Qi}}]{PhysRevB.88.035101}%
  \BibitemOpen
  \bibfield  {author} {\bibinfo {author} {\bibfnamefont {C.~H.}\ \bibnamefont
  {Lee}}, \bibinfo {author} {\bibfnamefont {R.}~\bibnamefont {Thomale}}, \ and\
  \bibinfo {author} {\bibfnamefont {X.-L.}\ \bibnamefont {Qi}},\ }\href
  {\doibase 10.1103/PhysRevB.88.035101} {\bibfield  {journal} {\bibinfo
  {journal} {Phys. Rev. B}\ }\textbf {\bibinfo {volume} {88}},\ \bibinfo
  {pages} {035101} (\bibinfo {year} {2013})}\BibitemShut {NoStop}%
\bibitem [{\citenamefont {Lee}\ and\ \citenamefont
  {Qi}(2014)}]{PhysRevB.90.085103}%
  \BibitemOpen
  \bibfield  {author} {\bibinfo {author} {\bibfnamefont {C.~H.}\ \bibnamefont
  {Lee}}\ and\ \bibinfo {author} {\bibfnamefont {X.-L.}\ \bibnamefont {Qi}},\
  }\href {\doibase 10.1103/PhysRevB.90.085103} {\bibfield  {journal} {\bibinfo
  {journal} {Phys. Rev. B}\ }\textbf {\bibinfo {volume} {90}},\ \bibinfo
  {pages} {085103} (\bibinfo {year} {2014})}\BibitemShut {NoStop}%
\bibitem [{\citenamefont {Lee}\ \emph {et~al.}(2017)\citenamefont {Lee},
  \citenamefont {Claassen},\ and\ \citenamefont
  {Thomale}}]{PhysRevB.96.165150}%
  \BibitemOpen
  \bibfield  {author} {\bibinfo {author} {\bibfnamefont {C.~H.}\ \bibnamefont
  {Lee}}, \bibinfo {author} {\bibfnamefont {M.}~\bibnamefont {Claassen}}, \
  and\ \bibinfo {author} {\bibfnamefont {R.}~\bibnamefont {Thomale}},\ }\href
  {\doibase 10.1103/PhysRevB.96.165150} {\bibfield  {journal} {\bibinfo
  {journal} {Phys. Rev. B}\ }\textbf {\bibinfo {volume} {96}},\ \bibinfo
  {pages} {165150} (\bibinfo {year} {2017})}\BibitemShut {NoStop}%
\bibitem [{\citenamefont {Lee}\ \emph {et~al.}(2015)\citenamefont {Lee},
  \citenamefont {Papi\ifmmode~\acute{c}\else \'{c}\fi{}},\ and\ \citenamefont
  {Thomale}}]{PhysRevX.5.041003}%
  \BibitemOpen
  \bibfield  {author} {\bibinfo {author} {\bibfnamefont {C.~H.}\ \bibnamefont
  {Lee}}, \bibinfo {author} {\bibfnamefont {Z.}~\bibnamefont
  {Papi\ifmmode~\acute{c}\else \'{c}\fi{}}}, \ and\ \bibinfo {author}
  {\bibfnamefont {R.}~\bibnamefont {Thomale}},\ }\href {\doibase
  10.1103/PhysRevX.5.041003} {\bibfield  {journal} {\bibinfo  {journal} {Phys.
  Rev. X}\ }\textbf {\bibinfo {volume} {5}},\ \bibinfo {pages} {041003}
  (\bibinfo {year} {2015})}\BibitemShut {NoStop}%
\bibitem [{\citenamefont {Geraedts}\ \emph {et~al.}(2018)\citenamefont
  {Geraedts}, \citenamefont {Wang}, \citenamefont {Rezayi},\ and\ \citenamefont
  {Haldane}}]{scottjiehaldane}%
  \BibitemOpen
  \bibfield  {author} {\bibinfo {author} {\bibfnamefont {S.~D.}\ \bibnamefont
  {Geraedts}}, \bibinfo {author} {\bibfnamefont {J.}~\bibnamefont {Wang}},
  \bibinfo {author} {\bibfnamefont {E.~H.}\ \bibnamefont {Rezayi}}, \ and\
  \bibinfo {author} {\bibfnamefont {F.~D.~M.}\ \bibnamefont {Haldane}},\ }\href
  {\doibase 10.1103/PhysRevLett.121.147202} {\bibfield  {journal} {\bibinfo
  {journal} {Phys. Rev. Lett.}\ }\textbf {\bibinfo {volume} {121}},\ \bibinfo
  {pages} {147202} (\bibinfo {year} {2018})}\BibitemShut {NoStop}%
\bibitem [{\citenamefont {Wang}(2019)}]{Jie_Dirac}%
  \BibitemOpen
  \bibfield  {author} {\bibinfo {author} {\bibfnamefont {J.}~\bibnamefont
  {Wang}},\ }\href {\doibase 10.1103/PhysRevLett.122.257203} {\bibfield
  {journal} {\bibinfo  {journal} {Phys. Rev. Lett.}\ }\textbf {\bibinfo
  {volume} {122}},\ \bibinfo {pages} {257203} (\bibinfo {year}
  {2019})}\BibitemShut {NoStop}%
\bibitem [{\citenamefont {Tarnopolsky}\ \emph {et~al.}(2019)\citenamefont
  {Tarnopolsky}, \citenamefont {Kruchkov},\ and\ \citenamefont
  {Vishwanath}}]{Grisha_TBG}%
  \BibitemOpen
  \bibfield  {author} {\bibinfo {author} {\bibfnamefont {G.}~\bibnamefont
  {Tarnopolsky}}, \bibinfo {author} {\bibfnamefont {A.~J.}\ \bibnamefont
  {Kruchkov}}, \ and\ \bibinfo {author} {\bibfnamefont {A.}~\bibnamefont
  {Vishwanath}},\ }\href {\doibase 10.1103/PhysRevLett.122.106405} {\bibfield
  {journal} {\bibinfo  {journal} {Phys. Rev. Lett.}\ }\textbf {\bibinfo
  {volume} {122}},\ \bibinfo {pages} {106405} (\bibinfo {year}
  {2019})}\BibitemShut {NoStop}%
\bibitem [{\citenamefont {Vafek}\ and\ \citenamefont
  {Kang}(2020)}]{Oscar_hiddensym}%
  \BibitemOpen
  \bibfield  {author} {\bibinfo {author} {\bibfnamefont {O.}~\bibnamefont
  {Vafek}}\ and\ \bibinfo {author} {\bibfnamefont {J.}~\bibnamefont {Kang}},\
  }\href {\doibase 10.1103/PhysRevLett.125.257602} {\bibfield  {journal}
  {\bibinfo  {journal} {Phys. Rev. Lett.}\ }\textbf {\bibinfo {volume} {125}},\
  \bibinfo {pages} {257602} (\bibinfo {year} {2020})}\BibitemShut {NoStop}%
\bibitem [{\citenamefont {Bultinck}\ \emph
  {et~al.}(2020{\natexlab{a}})\citenamefont {Bultinck}, \citenamefont {Khalaf},
  \citenamefont {Liu}, \citenamefont {Chatterjee}, \citenamefont {Vishwanath},\
  and\ \citenamefont {Zaletel}}]{Zaletel_PRX20}%
  \BibitemOpen
  \bibfield  {author} {\bibinfo {author} {\bibfnamefont {N.}~\bibnamefont
  {Bultinck}}, \bibinfo {author} {\bibfnamefont {E.}~\bibnamefont {Khalaf}},
  \bibinfo {author} {\bibfnamefont {S.}~\bibnamefont {Liu}}, \bibinfo {author}
  {\bibfnamefont {S.}~\bibnamefont {Chatterjee}}, \bibinfo {author}
  {\bibfnamefont {A.}~\bibnamefont {Vishwanath}}, \ and\ \bibinfo {author}
  {\bibfnamefont {M.~P.}\ \bibnamefont {Zaletel}},\ }\href {\doibase
  10.1103/PhysRevX.10.031034} {\bibfield  {journal} {\bibinfo  {journal} {Phys.
  Rev. X}\ }\textbf {\bibinfo {volume} {10}},\ \bibinfo {pages} {031034}
  (\bibinfo {year} {2020}{\natexlab{a}})}\BibitemShut {NoStop}%
\bibitem [{\citenamefont {Lian}\ \emph {et~al.}(2020)\citenamefont {Lian},
  \citenamefont {Song}, \citenamefont {Regnault}, \citenamefont {Efetov},
  \citenamefont {Yazdani},\ and\ \citenamefont {Bernevig}}]{lian2020tbg}%
  \BibitemOpen
  \bibfield  {author} {\bibinfo {author} {\bibfnamefont {B.}~\bibnamefont
  {Lian}}, \bibinfo {author} {\bibfnamefont {Z.-D.}\ \bibnamefont {Song}},
  \bibinfo {author} {\bibfnamefont {N.}~\bibnamefont {Regnault}}, \bibinfo
  {author} {\bibfnamefont {D.~K.}\ \bibnamefont {Efetov}}, \bibinfo {author}
  {\bibfnamefont {A.}~\bibnamefont {Yazdani}}, \ and\ \bibinfo {author}
  {\bibfnamefont {B.~A.}\ \bibnamefont {Bernevig}},\ }\href@noop {} {\enquote
  {\bibinfo {title} {Tbg iv: Exact insulator ground states and phase diagram of
  twisted bilayer graphene},}\ } (\bibinfo {year} {2020}),\ \Eprint
  {http://arxiv.org/abs/2009.13530} {arXiv:2009.13530 [cond-mat.str-el]}
  \BibitemShut {NoStop}%
\bibitem [{\citenamefont {Bernevig}\ \emph {et~al.}(2020)\citenamefont
  {Bernevig}, \citenamefont {Lian}, \citenamefont {Cowsik}, \citenamefont
  {Xie}, \citenamefont {Regnault},\ and\ \citenamefont
  {Song}}]{bernevig2020tbg_5}%
  \BibitemOpen
  \bibfield  {author} {\bibinfo {author} {\bibfnamefont {B.~A.}\ \bibnamefont
  {Bernevig}}, \bibinfo {author} {\bibfnamefont {B.}~\bibnamefont {Lian}},
  \bibinfo {author} {\bibfnamefont {A.}~\bibnamefont {Cowsik}}, \bibinfo
  {author} {\bibfnamefont {F.}~\bibnamefont {Xie}}, \bibinfo {author}
  {\bibfnamefont {N.}~\bibnamefont {Regnault}}, \ and\ \bibinfo {author}
  {\bibfnamefont {Z.-D.}\ \bibnamefont {Song}},\ }\href@noop {} {\enquote
  {\bibinfo {title} {Tbg v: Exact analytic many-body excitations in twisted
  bilayer graphene coulomb hamiltonians: Charge gap, goldstone modes and
  absence of cooper pairing},}\ } (\bibinfo {year} {2020}),\ \Eprint
  {http://arxiv.org/abs/2009.14200} {arXiv:2009.14200 [cond-mat.str-el]}
  \BibitemShut {NoStop}%
\bibitem [{\citenamefont {Ledwith}\ \emph {et~al.}(2020)\citenamefont
  {Ledwith}, \citenamefont {Tarnopolsky}, \citenamefont {Khalaf},\ and\
  \citenamefont {Vishwanath}}]{Grisha_TBG2}%
  \BibitemOpen
  \bibfield  {author} {\bibinfo {author} {\bibfnamefont {P.~J.}\ \bibnamefont
  {Ledwith}}, \bibinfo {author} {\bibfnamefont {G.}~\bibnamefont
  {Tarnopolsky}}, \bibinfo {author} {\bibfnamefont {E.}~\bibnamefont {Khalaf}},
  \ and\ \bibinfo {author} {\bibfnamefont {A.}~\bibnamefont {Vishwanath}},\
  }\href {\doibase 10.1103/PhysRevResearch.2.023237} {\bibfield  {journal}
  {\bibinfo  {journal} {Phys. Rev. Research}\ }\textbf {\bibinfo {volume}
  {2}},\ \bibinfo {pages} {023237} (\bibinfo {year} {2020})}\BibitemShut
  {NoStop}%
\bibitem [{\citenamefont {Ozawa}\ and\ \citenamefont
  {Mera}(2021)}]{kahlerband1}%
  \BibitemOpen
  \bibfield  {author} {\bibinfo {author} {\bibfnamefont {T.}~\bibnamefont
  {Ozawa}}\ and\ \bibinfo {author} {\bibfnamefont {B.}~\bibnamefont {Mera}},\
  }\href {\doibase 10.1103/PhysRevB.104.045103} {\bibfield  {journal} {\bibinfo
   {journal} {Phys. Rev. B}\ }\textbf {\bibinfo {volume} {104}},\ \bibinfo
  {pages} {045103} (\bibinfo {year} {2021})}\BibitemShut {NoStop}%
\bibitem [{\citenamefont {Mera}\ and\ \citenamefont
  {Ozawa}(2021)}]{kahlerband2}%
  \BibitemOpen
  \bibfield  {author} {\bibinfo {author} {\bibfnamefont {B.}~\bibnamefont
  {Mera}}\ and\ \bibinfo {author} {\bibfnamefont {T.}~\bibnamefont {Ozawa}},\
  }\href {\doibase 10.1103/PhysRevB.104.045104} {\bibfield  {journal} {\bibinfo
   {journal} {Phys. Rev. B}\ }\textbf {\bibinfo {volume} {104}},\ \bibinfo
  {pages} {045104} (\bibinfo {year} {2021})}\BibitemShut {NoStop}%
\bibitem [{\citenamefont {Thouless}(1984)}]{Thouless_1984}%
  \BibitemOpen
  \bibfield  {author} {\bibinfo {author} {\bibfnamefont {D.~J.}\ \bibnamefont
  {Thouless}},\ }\href {\doibase 10.1088/0022-3719/17/12/003} {\bibfield
  {journal} {\bibinfo  {journal} {Journal of Physics C: Solid State Physics}\
  }\textbf {\bibinfo {volume} {17}},\ \bibinfo {pages} {L325} (\bibinfo {year}
  {1984})}\BibitemShut {NoStop}%
\bibitem [{Note3()}]{Note3}%
  \BibitemOpen
  \bibinfo {note} {SM includes detailed discussions on: model wavefunctions,
  ideal conditions, interacting models, pseudopotentials and numerical
  details.}\BibitemShut {Stop}%
\bibitem [{\citenamefont {Haldane}(2018)}]{haldanemodularinv}%
  \BibitemOpen
  \bibfield  {author} {\bibinfo {author} {\bibfnamefont {F.~D.~M.}\
  \bibnamefont {Haldane}},\ }\href {\doibase 10.1063/1.5042618} {\bibfield
  {journal} {\bibinfo  {journal} {Journal of Mathematical Physics}\ }\textbf
  {\bibinfo {volume} {59}},\ \bibinfo {pages} {071901} (\bibinfo {year}
  {2018})},\ \Eprint {http://arxiv.org/abs/https://doi.org/10.1063/1.5042618}
  {https://doi.org/10.1063/1.5042618} \BibitemShut {NoStop}%
\bibitem [{\citenamefont {Wang}\ \emph {et~al.}(2019)\citenamefont {Wang},
  \citenamefont {Geraedts}, \citenamefont {Rezayi},\ and\ \citenamefont
  {Haldane}}]{Jie_MonteCarlo}%
  \BibitemOpen
  \bibfield  {author} {\bibinfo {author} {\bibfnamefont {J.}~\bibnamefont
  {Wang}}, \bibinfo {author} {\bibfnamefont {S.~D.}\ \bibnamefont {Geraedts}},
  \bibinfo {author} {\bibfnamefont {E.~H.}\ \bibnamefont {Rezayi}}, \ and\
  \bibinfo {author} {\bibfnamefont {F.~D.~M.}\ \bibnamefont {Haldane}},\ }\href
  {\doibase 10.1103/PhysRevB.99.125123} {\bibfield  {journal} {\bibinfo
  {journal} {Phys. Rev. B}\ }\textbf {\bibinfo {volume} {99}},\ \bibinfo
  {pages} {125123} (\bibinfo {year} {2019})}\BibitemShut {NoStop}%
\bibitem [{\citenamefont {Ferrari}(1990)}]{FERRARI_Sigma}%
  \BibitemOpen
  \bibfield  {author} {\bibinfo {author} {\bibfnamefont {R.}~\bibnamefont
  {Ferrari}},\ }\href {\doibase 10.1103/PhysRevB.42.4598} {\bibfield  {journal}
  {\bibinfo  {journal} {Phys. Rev. B}\ }\textbf {\bibinfo {volume} {42}},\
  \bibinfo {pages} {4598} (\bibinfo {year} {1990})}\BibitemShut {NoStop}%
\bibitem [{\citenamefont {FERRARI}(1995)}]{FERRARI_QHwannier}%
  \BibitemOpen
  \bibfield  {author} {\bibinfo {author} {\bibfnamefont {R.}~\bibnamefont
  {FERRARI}},\ }\href {\doibase 10.1142/S0217979295001300} {\bibfield
  {journal} {\bibinfo  {journal} {International Journal of Modern Physics B}\
  }\textbf {\bibinfo {volume} {09}},\ \bibinfo {pages} {3333} (\bibinfo {year}
  {1995})},\ \Eprint
  {http://arxiv.org/abs/https://doi.org/10.1142/S0217979295001300}
  {https://doi.org/10.1142/S0217979295001300} \BibitemShut {NoStop}%
\bibitem [{Note2()}]{Note2}%
  \BibitemOpen
  \bibinfo {note} {Compared to the commonly used Jacobi $\theta $ function
  representation (Landau gauge), the $\sigma $ function representation
  (symmetric gauge) has the advantage of labeling LLL states by two quantum
  numbers $\protect \bm {k}=(k_1, k_2)$ in the same way of labeling Bloch
  states in solids. Different representations are related by gauge
  transformations discussed in SM.}\BibitemShut {Stop}%
\bibitem [{\citenamefont {Douglas}\ and\ \citenamefont
  {Klevtsov}(2009)}]{Douglas:2009aa}%
  \BibitemOpen
  \bibfield  {author} {\bibinfo {author} {\bibfnamefont {M.~R.}\ \bibnamefont
  {Douglas}}\ and\ \bibinfo {author} {\bibfnamefont {S.}~\bibnamefont
  {Klevtsov}},\ }\href {\doibase 10.1007/s00220-009-0915-0} {\bibfield
  {journal} {\bibinfo  {journal} {Communications in Mathematical Physics}\
  }\textbf {\bibinfo {volume} {293}},\ \bibinfo {pages} {205} (\bibinfo {year}
  {2009})}\BibitemShut {NoStop}%
\bibitem [{Note4()}]{Note4}%
  \BibitemOpen
  \bibinfo {note} {We use ``Umklapp interaction'' and ``COM interaction''
  inter-changeably.}\BibitemShut {Stop}%
\bibitem [{Note5()}]{Note5}%
  \BibitemOpen
  \bibinfo {note} {The COM interaction parameters are: $\protect \tilde
  {w}_0=w_{\protect \bm {0}}^2+4|w_{\protect \bm {b}_1}|^2$, $\protect \tilde
  {w}_1=w_{\protect \bm {0}}w_{\protect \bm {b}_1}$ on square lattice;
  $\protect \tilde {w}_0=w_{\protect \bm {0}}^2+6|w_{\protect \bm {b}_1}|^2$,
  $\protect \tilde {w}_1=w_{\protect \bm {0}}w_{\protect \bm
  {b}_1}+w^{*2}_{\protect \bm {b}_1}$ on triangular lattice. See SM for
  details}\BibitemShut {NoStop}%
\bibitem [{\citenamefont {Wang}\ \emph {et~al.}(2021)\citenamefont {Wang},
  \citenamefont {Zheng}, \citenamefont {Millis},\ and\ \citenamefont
  {Cano}}]{JieWang_NodalStructure}%
  \BibitemOpen
  \bibfield  {author} {\bibinfo {author} {\bibfnamefont {J.}~\bibnamefont
  {Wang}}, \bibinfo {author} {\bibfnamefont {Y.}~\bibnamefont {Zheng}},
  \bibinfo {author} {\bibfnamefont {A.~J.}\ \bibnamefont {Millis}}, \ and\
  \bibinfo {author} {\bibfnamefont {J.}~\bibnamefont {Cano}},\ }\href {\doibase
  10.1103/PhysRevResearch.3.023155} {\bibfield  {journal} {\bibinfo  {journal}
  {Phys. Rev. Research}\ }\textbf {\bibinfo {volume} {3}},\ \bibinfo {pages}
  {023155} (\bibinfo {year} {2021})}\BibitemShut {NoStop}%
\bibitem [{\citenamefont {Girvin}\ \emph {et~al.}(1985)\citenamefont {Girvin},
  \citenamefont {MacDonald},\ and\ \citenamefont {Platzman}}]{gmpl}%
  \BibitemOpen
  \bibfield  {author} {\bibinfo {author} {\bibfnamefont {S.~M.}\ \bibnamefont
  {Girvin}}, \bibinfo {author} {\bibfnamefont {A.~H.}\ \bibnamefont
  {MacDonald}}, \ and\ \bibinfo {author} {\bibfnamefont {P.~M.}\ \bibnamefont
  {Platzman}},\ }\href {\doibase 10.1103/PhysRevLett.54.581} {\bibfield
  {journal} {\bibinfo  {journal} {Phys. Rev. Lett.}\ }\textbf {\bibinfo
  {volume} {54}},\ \bibinfo {pages} {581} (\bibinfo {year} {1985})}\BibitemShut
  {NoStop}%
\bibitem [{\citenamefont {Repellin}\ \emph {et~al.}(2014)\citenamefont
  {Repellin}, \citenamefont {Neupert}, \citenamefont
  {Papi\ifmmode~\acute{c}\else \'{c}\fi{}},\ and\ \citenamefont
  {Regnault}}]{PhysRevB.90.045114}%
  \BibitemOpen
  \bibfield  {author} {\bibinfo {author} {\bibfnamefont {C.}~\bibnamefont
  {Repellin}}, \bibinfo {author} {\bibfnamefont {T.}~\bibnamefont {Neupert}},
  \bibinfo {author} {\bibfnamefont {Z.}~\bibnamefont
  {Papi\ifmmode~\acute{c}\else \'{c}\fi{}}}, \ and\ \bibinfo {author}
  {\bibfnamefont {N.}~\bibnamefont {Regnault}},\ }\href {\doibase
  10.1103/PhysRevB.90.045114} {\bibfield  {journal} {\bibinfo  {journal} {Phys.
  Rev. B}\ }\textbf {\bibinfo {volume} {90}},\ \bibinfo {pages} {045114}
  (\bibinfo {year} {2014})}\BibitemShut {NoStop}%
\bibitem [{\citenamefont {Haldane}\ and\ \citenamefont
  {Rezayi}(1985)}]{haldanetorus1}%
  \BibitemOpen
  \bibfield  {author} {\bibinfo {author} {\bibfnamefont {F.~D.~M.}\
  \bibnamefont {Haldane}}\ and\ \bibinfo {author} {\bibfnamefont {E.~H.}\
  \bibnamefont {Rezayi}},\ }\href {\doibase 10.1103/PhysRevB.31.2529}
  {\bibfield  {journal} {\bibinfo  {journal} {Phys. Rev. B}\ }\textbf {\bibinfo
  {volume} {31}},\ \bibinfo {pages} {2529} (\bibinfo {year}
  {1985})}\BibitemShut {NoStop}%
\bibitem [{\citenamefont {Haldane}(1985)}]{haldanetorus2}%
  \BibitemOpen
  \bibfield  {author} {\bibinfo {author} {\bibfnamefont {F.~D.~M.}\
  \bibnamefont {Haldane}},\ }\href {\doibase 10.1103/PhysRevLett.55.2095}
  {\bibfield  {journal} {\bibinfo  {journal} {Phys. Rev. Lett.}\ }\textbf
  {\bibinfo {volume} {55}},\ \bibinfo {pages} {2095} (\bibinfo {year}
  {1985})}\BibitemShut {NoStop}%
\bibitem [{\citenamefont {Liu}\ \emph {et~al.}(2013)\citenamefont {Liu},
  \citenamefont {Bergholtz},\ and\ \citenamefont {Kapit}}]{zhao_nonabelian}%
  \BibitemOpen
  \bibfield  {author} {\bibinfo {author} {\bibfnamefont {Z.}~\bibnamefont
  {Liu}}, \bibinfo {author} {\bibfnamefont {E.~J.}\ \bibnamefont {Bergholtz}},
  \ and\ \bibinfo {author} {\bibfnamefont {E.}~\bibnamefont {Kapit}},\ }\href
  {\doibase 10.1103/PhysRevB.88.205101} {\bibfield  {journal} {\bibinfo
  {journal} {Phys. Rev. B}\ }\textbf {\bibinfo {volume} {88}},\ \bibinfo
  {pages} {205101} (\bibinfo {year} {2013})}\BibitemShut {NoStop}%
\bibitem [{\citenamefont {L\"auchli}\ \emph {et~al.}(2013)\citenamefont
  {L\"auchli}, \citenamefont {Liu}, \citenamefont {Bergholtz},\ and\
  \citenamefont {Moessner}}]{hierarchy_FCI}%
  \BibitemOpen
  \bibfield  {author} {\bibinfo {author} {\bibfnamefont {A.~M.}\ \bibnamefont
  {L\"auchli}}, \bibinfo {author} {\bibfnamefont {Z.}~\bibnamefont {Liu}},
  \bibinfo {author} {\bibfnamefont {E.~J.}\ \bibnamefont {Bergholtz}}, \ and\
  \bibinfo {author} {\bibfnamefont {R.}~\bibnamefont {Moessner}},\ }\href
  {\doibase 10.1103/PhysRevLett.111.126802} {\bibfield  {journal} {\bibinfo
  {journal} {Phys. Rev. Lett.}\ }\textbf {\bibinfo {volume} {111}},\ \bibinfo
  {pages} {126802} (\bibinfo {year} {2013})}\BibitemShut {NoStop}%
\bibitem [{\citenamefont {Yang}\ \emph
  {et~al.}(2017{\natexlab{a}})\citenamefont {Yang}, \citenamefont {Hu},
  \citenamefont {Lee},\ and\ \citenamefont {Papi\ifmmode~\acute{c}\else
  \'{c}\fi{}}}]{generalizedPP_PRL}%
  \BibitemOpen
  \bibfield  {author} {\bibinfo {author} {\bibfnamefont {B.}~\bibnamefont
  {Yang}}, \bibinfo {author} {\bibfnamefont {Z.-X.}\ \bibnamefont {Hu}},
  \bibinfo {author} {\bibfnamefont {C.~H.}\ \bibnamefont {Lee}}, \ and\
  \bibinfo {author} {\bibfnamefont {Z.}~\bibnamefont
  {Papi\ifmmode~\acute{c}\else \'{c}\fi{}}},\ }\href {\doibase
  10.1103/PhysRevLett.118.146403} {\bibfield  {journal} {\bibinfo  {journal}
  {Phys. Rev. Lett.}\ }\textbf {\bibinfo {volume} {118}},\ \bibinfo {pages}
  {146403} (\bibinfo {year} {2017}{\natexlab{a}})}\BibitemShut {NoStop}%
\bibitem [{\citenamefont {Yang}\ \emph
  {et~al.}(2017{\natexlab{b}})\citenamefont {Yang}, \citenamefont {Lee},
  \citenamefont {Zhang},\ and\ \citenamefont {Hu}}]{generalizedPP_PRB}%
  \BibitemOpen
  \bibfield  {author} {\bibinfo {author} {\bibfnamefont {B.}~\bibnamefont
  {Yang}}, \bibinfo {author} {\bibfnamefont {C.~H.}\ \bibnamefont {Lee}},
  \bibinfo {author} {\bibfnamefont {C.}~\bibnamefont {Zhang}}, \ and\ \bibinfo
  {author} {\bibfnamefont {Z.-X.}\ \bibnamefont {Hu}},\ }\href {\doibase
  10.1103/PhysRevB.96.195140} {\bibfield  {journal} {\bibinfo  {journal} {Phys.
  Rev. B}\ }\textbf {\bibinfo {volume} {96}},\ \bibinfo {pages} {195140}
  (\bibinfo {year} {2017}{\natexlab{b}})}\BibitemShut {NoStop}%
\bibitem [{\citenamefont {Yang}\ \emph {et~al.}(2001)\citenamefont {Yang},
  \citenamefont {Haldane},\ and\ \citenamefont
  {Rezayi}}]{KunYang_WignerCrystal}%
  \BibitemOpen
  \bibfield  {author} {\bibinfo {author} {\bibfnamefont {K.}~\bibnamefont
  {Yang}}, \bibinfo {author} {\bibfnamefont {F.~D.~M.}\ \bibnamefont
  {Haldane}}, \ and\ \bibinfo {author} {\bibfnamefont {E.~H.}\ \bibnamefont
  {Rezayi}},\ }\href {\doibase 10.1103/PhysRevB.64.081301} {\bibfield
  {journal} {\bibinfo  {journal} {Phys. Rev. B}\ }\textbf {\bibinfo {volume}
  {64}},\ \bibinfo {pages} {081301} (\bibinfo {year} {2001})}\BibitemShut
  {NoStop}%
\bibitem [{\citenamefont {Wilhelm}\ \emph {et~al.}(2021)\citenamefont
  {Wilhelm}, \citenamefont {Lang},\ and\ \citenamefont
  {L\"auchli}}]{PhysRevB.103.125406}%
  \BibitemOpen
  \bibfield  {author} {\bibinfo {author} {\bibfnamefont {P.}~\bibnamefont
  {Wilhelm}}, \bibinfo {author} {\bibfnamefont {T.~C.}\ \bibnamefont {Lang}}, \
  and\ \bibinfo {author} {\bibfnamefont {A.~M.}\ \bibnamefont {L\"auchli}},\
  }\href {\doibase 10.1103/PhysRevB.103.125406} {\bibfield  {journal} {\bibinfo
   {journal} {Phys. Rev. B}\ }\textbf {\bibinfo {volume} {103}},\ \bibinfo
  {pages} {125406} (\bibinfo {year} {2021})}\BibitemShut {NoStop}%
\bibitem [{\citenamefont {Liu}\ \emph {et~al.}(2019)\citenamefont {Liu},
  \citenamefont {Liu},\ and\ \citenamefont {Dai}}]{XiDai_PseudoLandaulevel}%
  \BibitemOpen
  \bibfield  {author} {\bibinfo {author} {\bibfnamefont {J.}~\bibnamefont
  {Liu}}, \bibinfo {author} {\bibfnamefont {J.}~\bibnamefont {Liu}}, \ and\
  \bibinfo {author} {\bibfnamefont {X.}~\bibnamefont {Dai}},\ }\href {\doibase
  10.1103/PhysRevB.99.155415} {\bibfield  {journal} {\bibinfo  {journal} {Phys.
  Rev. B}\ }\textbf {\bibinfo {volume} {99}},\ \bibinfo {pages} {155415}
  (\bibinfo {year} {2019})}\BibitemShut {NoStop}%
\bibitem [{\citenamefont {Popov}\ and\ \citenamefont
  {Milekhin}(2021)}]{popov2020hidden}%
  \BibitemOpen
  \bibfield  {author} {\bibinfo {author} {\bibfnamefont {F.~K.}\ \bibnamefont
  {Popov}}\ and\ \bibinfo {author} {\bibfnamefont {A.}~\bibnamefont
  {Milekhin}},\ }\href {\doibase 10.1103/PhysRevB.103.155150} {\bibfield
  {journal} {\bibinfo  {journal} {Phys. Rev. B}\ }\textbf {\bibinfo {volume}
  {103}},\ \bibinfo {pages} {155150} (\bibinfo {year} {2021})}\BibitemShut
  {NoStop}%
\bibitem [{\citenamefont {Bultinck}\ \emph
  {et~al.}(2020{\natexlab{b}})\citenamefont {Bultinck}, \citenamefont
  {Chatterjee},\ and\ \citenamefont {Zaletel}}]{Zaletel_TBG_AQH}%
  \BibitemOpen
  \bibfield  {author} {\bibinfo {author} {\bibfnamefont {N.}~\bibnamefont
  {Bultinck}}, \bibinfo {author} {\bibfnamefont {S.}~\bibnamefont
  {Chatterjee}}, \ and\ \bibinfo {author} {\bibfnamefont {M.~P.}\ \bibnamefont
  {Zaletel}},\ }\href {\doibase 10.1103/PhysRevLett.124.166601} {\bibfield
  {journal} {\bibinfo  {journal} {Phys. Rev. Lett.}\ }\textbf {\bibinfo
  {volume} {124}},\ \bibinfo {pages} {166601} (\bibinfo {year}
  {2020}{\natexlab{b}})}\BibitemShut {NoStop}%
\bibitem [{\citenamefont {Abouelkomsan}\ \emph {et~al.}(2020)\citenamefont
  {Abouelkomsan}, \citenamefont {Liu},\ and\ \citenamefont
  {Bergholtz}}]{ZhaoLiu_TBG}%
  \BibitemOpen
  \bibfield  {author} {\bibinfo {author} {\bibfnamefont {A.}~\bibnamefont
  {Abouelkomsan}}, \bibinfo {author} {\bibfnamefont {Z.}~\bibnamefont {Liu}}, \
  and\ \bibinfo {author} {\bibfnamefont {E.~J.}\ \bibnamefont {Bergholtz}},\
  }\href {\doibase 10.1103/PhysRevLett.124.106803} {\bibfield  {journal}
  {\bibinfo  {journal} {Phys. Rev. Lett.}\ }\textbf {\bibinfo {volume} {124}},\
  \bibinfo {pages} {106803} (\bibinfo {year} {2020})}\BibitemShut {NoStop}%
\bibitem [{\citenamefont {Repellin}\ and\ \citenamefont
  {Senthil}(2020)}]{Cecile_TBG_Flatband}%
  \BibitemOpen
  \bibfield  {author} {\bibinfo {author} {\bibfnamefont {C.}~\bibnamefont
  {Repellin}}\ and\ \bibinfo {author} {\bibfnamefont {T.}~\bibnamefont
  {Senthil}},\ }\href {\doibase 10.1103/PhysRevResearch.2.023238} {\bibfield
  {journal} {\bibinfo  {journal} {Phys. Rev. Research}\ }\textbf {\bibinfo
  {volume} {2}},\ \bibinfo {pages} {023238} (\bibinfo {year}
  {2020})}\BibitemShut {NoStop}%
\bibitem [{\citenamefont {Repellin}\ \emph {et~al.}(2020)\citenamefont
  {Repellin}, \citenamefont {Dong}, \citenamefont {Zhang},\ and\ \citenamefont
  {Senthil}}]{Cecile_PRL20}%
  \BibitemOpen
  \bibfield  {author} {\bibinfo {author} {\bibfnamefont {C.}~\bibnamefont
  {Repellin}}, \bibinfo {author} {\bibfnamefont {Z.}~\bibnamefont {Dong}},
  \bibinfo {author} {\bibfnamefont {Y.-H.}\ \bibnamefont {Zhang}}, \ and\
  \bibinfo {author} {\bibfnamefont {T.}~\bibnamefont {Senthil}},\ }\href
  {\doibase 10.1103/PhysRevLett.124.187601} {\bibfield  {journal} {\bibinfo
  {journal} {Phys. Rev. Lett.}\ }\textbf {\bibinfo {volume} {124}},\ \bibinfo
  {pages} {187601} (\bibinfo {year} {2020})}\BibitemShut {NoStop}%
\bibitem [{\citenamefont {Zhang}\ \emph {et~al.}(2019)\citenamefont {Zhang},
  \citenamefont {Mao}, \citenamefont {Cao}, \citenamefont {Jarillo-Herrero},\
  and\ \citenamefont {Senthil}}]{Senthil_NearlyFlatBand}%
  \BibitemOpen
  \bibfield  {author} {\bibinfo {author} {\bibfnamefont {Y.-H.}\ \bibnamefont
  {Zhang}}, \bibinfo {author} {\bibfnamefont {D.}~\bibnamefont {Mao}}, \bibinfo
  {author} {\bibfnamefont {Y.}~\bibnamefont {Cao}}, \bibinfo {author}
  {\bibfnamefont {P.}~\bibnamefont {Jarillo-Herrero}}, \ and\ \bibinfo {author}
  {\bibfnamefont {T.}~\bibnamefont {Senthil}},\ }\href {\doibase
  10.1103/PhysRevB.99.075127} {\bibfield  {journal} {\bibinfo  {journal} {Phys.
  Rev. B}\ }\textbf {\bibinfo {volume} {99}},\ \bibinfo {pages} {075127}
  (\bibinfo {year} {2019})}\BibitemShut {NoStop}%
\bibitem [{\citenamefont {Xie}\ \emph {et~al.}(2020)\citenamefont {Xie},
  \citenamefont {Cowsik}, \citenamefont {Son}, \citenamefont {Lian},
  \citenamefont {Bernevig},\ and\ \citenamefont {Regnault}}]{xie2020tbg}%
  \BibitemOpen
  \bibfield  {author} {\bibinfo {author} {\bibfnamefont {F.}~\bibnamefont
  {Xie}}, \bibinfo {author} {\bibfnamefont {A.}~\bibnamefont {Cowsik}},
  \bibinfo {author} {\bibfnamefont {Z.}~\bibnamefont {Son}}, \bibinfo {author}
  {\bibfnamefont {B.}~\bibnamefont {Lian}}, \bibinfo {author} {\bibfnamefont
  {B.~A.}\ \bibnamefont {Bernevig}}, \ and\ \bibinfo {author} {\bibfnamefont
  {N.}~\bibnamefont {Regnault}},\ }\href@noop {} {\enquote {\bibinfo {title}
  {Tbg vi: An exact diagonalization study of twisted bilayer graphene at
  non-zero integer fillings},}\ } (\bibinfo {year} {2020}),\ \Eprint
  {http://arxiv.org/abs/2010.00588} {arXiv:2010.00588 [cond-mat.str-el]}
  \BibitemShut {NoStop}%
\bibitem [{\citenamefont {{Kwan}}\ \emph {et~al.}(2021)\citenamefont {{Kwan}},
  \citenamefont {{Wagner}}, \citenamefont {{Soejima}}, \citenamefont
  {{Zaletel}}, \citenamefont {{Simon}}, \citenamefont {{Parameswaran}},\ and\
  \citenamefont {{Bultinck}}}]{Nick_Spiralorder}%
  \BibitemOpen
  \bibfield  {author} {\bibinfo {author} {\bibfnamefont {Y.~H.}\ \bibnamefont
  {{Kwan}}}, \bibinfo {author} {\bibfnamefont {G.}~\bibnamefont {{Wagner}}},
  \bibinfo {author} {\bibfnamefont {T.}~\bibnamefont {{Soejima}}}, \bibinfo
  {author} {\bibfnamefont {M.~P.}\ \bibnamefont {{Zaletel}}}, \bibinfo {author}
  {\bibfnamefont {S.~H.}\ \bibnamefont {{Simon}}}, \bibinfo {author}
  {\bibfnamefont {S.~A.}\ \bibnamefont {{Parameswaran}}}, \ and\ \bibinfo
  {author} {\bibfnamefont {N.}~\bibnamefont {{Bultinck}}},\ }\href@noop {}
  {\bibfield  {journal} {\bibinfo  {journal} {arXiv e-prints}\ ,\ \bibinfo
  {eid} {arXiv:2105.05857}} (\bibinfo {year} {2021})},\ \Eprint
  {http://arxiv.org/abs/2105.05857} {arXiv:2105.05857 [cond-mat.str-el]}
  \BibitemShut {NoStop}%
\bibitem [{\citenamefont {Stefanidis}\ and\ \citenamefont
  {Sodemann}(2020)}]{PhysRevB.102.035158}%
  \BibitemOpen
  \bibfield  {author} {\bibinfo {author} {\bibfnamefont {N.}~\bibnamefont
  {Stefanidis}}\ and\ \bibinfo {author} {\bibfnamefont {I.}~\bibnamefont
  {Sodemann}},\ }\href {\doibase 10.1103/PhysRevB.102.035158} {\bibfield
  {journal} {\bibinfo  {journal} {Phys. Rev. B}\ }\textbf {\bibinfo {volume}
  {102}},\ \bibinfo {pages} {035158} (\bibinfo {year} {2020})}\BibitemShut
  {NoStop}%
\bibitem [{\citenamefont {Kwan}\ \emph {et~al.}(2021)\citenamefont {Kwan},
  \citenamefont {Hu}, \citenamefont {Simon},\ and\ \citenamefont
  {Parameswaran}}]{PhysRevLett.126.137601}%
  \BibitemOpen
  \bibfield  {author} {\bibinfo {author} {\bibfnamefont {Y.~H.}\ \bibnamefont
  {Kwan}}, \bibinfo {author} {\bibfnamefont {Y.}~\bibnamefont {Hu}}, \bibinfo
  {author} {\bibfnamefont {S.~H.}\ \bibnamefont {Simon}}, \ and\ \bibinfo
  {author} {\bibfnamefont {S.~A.}\ \bibnamefont {Parameswaran}},\ }\href
  {\doibase 10.1103/PhysRevLett.126.137601} {\bibfield  {journal} {\bibinfo
  {journal} {Phys. Rev. Lett.}\ }\textbf {\bibinfo {volume} {126}},\ \bibinfo
  {pages} {137601} (\bibinfo {year} {2021})}\BibitemShut {NoStop}%
\bibitem [{\citenamefont {{Kwan}}\ \emph {et~al.}(2020)\citenamefont {{Kwan}},
  \citenamefont {{Hu}}, \citenamefont {{Simon}},\ and\ \citenamefont
  {{Parameswaran}}}]{2020arXiv200311559K}%
  \BibitemOpen
  \bibfield  {author} {\bibinfo {author} {\bibfnamefont {Y.~H.}\ \bibnamefont
  {{Kwan}}}, \bibinfo {author} {\bibfnamefont {Y.}~\bibnamefont {{Hu}}},
  \bibinfo {author} {\bibfnamefont {S.~H.}\ \bibnamefont {{Simon}}}, \ and\
  \bibinfo {author} {\bibfnamefont {S.~A.}\ \bibnamefont {{Parameswaran}}},\
  }\href@noop {} {\bibfield  {journal} {\bibinfo  {journal} {arXiv e-prints}\
  ,\ \bibinfo {eid} {arXiv:2003.11559}} (\bibinfo {year} {2020})},\ \Eprint
  {http://arxiv.org/abs/2003.11559} {arXiv:2003.11559 [cond-mat.str-el]}
  \BibitemShut {NoStop}%
\bibitem [{\citenamefont {Xu}\ \emph {et~al.}(2020)\citenamefont {Xu},
  \citenamefont {Liu}, \citenamefont {Rhodes}, \citenamefont {Watanabe},
  \citenamefont {Taniguchi}, \citenamefont {Hone}, \citenamefont {Elser},
  \citenamefont {Mak},\ and\ \citenamefont {Shan}}]{Xu:2020aa}%
  \BibitemOpen
  \bibfield  {author} {\bibinfo {author} {\bibfnamefont {Y.}~\bibnamefont
  {Xu}}, \bibinfo {author} {\bibfnamefont {S.}~\bibnamefont {Liu}}, \bibinfo
  {author} {\bibfnamefont {D.~A.}\ \bibnamefont {Rhodes}}, \bibinfo {author}
  {\bibfnamefont {K.}~\bibnamefont {Watanabe}}, \bibinfo {author}
  {\bibfnamefont {T.}~\bibnamefont {Taniguchi}}, \bibinfo {author}
  {\bibfnamefont {J.}~\bibnamefont {Hone}}, \bibinfo {author} {\bibfnamefont
  {V.}~\bibnamefont {Elser}}, \bibinfo {author} {\bibfnamefont {K.~F.}\
  \bibnamefont {Mak}}, \ and\ \bibinfo {author} {\bibfnamefont
  {J.}~\bibnamefont {Shan}},\ }\href {\doibase 10.1038/s41586-020-2868-6}
  {\bibfield  {journal} {\bibinfo  {journal} {Nature}\ }\textbf {\bibinfo
  {volume} {587}},\ \bibinfo {pages} {214} (\bibinfo {year}
  {2020})}\BibitemShut {NoStop}%
\bibitem [{\citenamefont {Regan}\ \emph {et~al.}(2020)\citenamefont {Regan},
  \citenamefont {Wang}, \citenamefont {Jin}, \citenamefont {Bakti~Utama},
  \citenamefont {Gao}, \citenamefont {Wei}, \citenamefont {Zhao}, \citenamefont
  {Zhao}, \citenamefont {Zhang}, \citenamefont {Yumigeta}, \citenamefont
  {Blei}, \citenamefont {Carlstr{\"o}m}, \citenamefont {Watanabe},
  \citenamefont {Taniguchi}, \citenamefont {Tongay}, \citenamefont {Crommie},
  \citenamefont {Zettl},\ and\ \citenamefont {Wang}}]{Regan:2020aa}%
  \BibitemOpen
  \bibfield  {author} {\bibinfo {author} {\bibfnamefont {E.~C.}\ \bibnamefont
  {Regan}}, \bibinfo {author} {\bibfnamefont {D.}~\bibnamefont {Wang}},
  \bibinfo {author} {\bibfnamefont {C.}~\bibnamefont {Jin}}, \bibinfo {author}
  {\bibfnamefont {M.~I.}\ \bibnamefont {Bakti~Utama}}, \bibinfo {author}
  {\bibfnamefont {B.}~\bibnamefont {Gao}}, \bibinfo {author} {\bibfnamefont
  {X.}~\bibnamefont {Wei}}, \bibinfo {author} {\bibfnamefont {S.}~\bibnamefont
  {Zhao}}, \bibinfo {author} {\bibfnamefont {W.}~\bibnamefont {Zhao}}, \bibinfo
  {author} {\bibfnamefont {Z.}~\bibnamefont {Zhang}}, \bibinfo {author}
  {\bibfnamefont {K.}~\bibnamefont {Yumigeta}}, \bibinfo {author}
  {\bibfnamefont {M.}~\bibnamefont {Blei}}, \bibinfo {author} {\bibfnamefont
  {J.~D.}\ \bibnamefont {Carlstr{\"o}m}}, \bibinfo {author} {\bibfnamefont
  {K.}~\bibnamefont {Watanabe}}, \bibinfo {author} {\bibfnamefont
  {T.}~\bibnamefont {Taniguchi}}, \bibinfo {author} {\bibfnamefont
  {S.}~\bibnamefont {Tongay}}, \bibinfo {author} {\bibfnamefont
  {M.}~\bibnamefont {Crommie}}, \bibinfo {author} {\bibfnamefont
  {A.}~\bibnamefont {Zettl}}, \ and\ \bibinfo {author} {\bibfnamefont
  {F.}~\bibnamefont {Wang}},\ }\href {\doibase 10.1038/s41586-020-2092-4}
  {\bibfield  {journal} {\bibinfo  {journal} {Nature}\ }\textbf {\bibinfo
  {volume} {579}},\ \bibinfo {pages} {359} (\bibinfo {year}
  {2020})}\BibitemShut {NoStop}%
\bibitem [{\citenamefont {{Polshyn}}\ \emph {et~al.}(2021)\citenamefont
  {{Polshyn}}, \citenamefont {{Zhang}}, \citenamefont {{Kumar}}, \citenamefont
  {{Soejima}}, \citenamefont {{Ledwith}}, \citenamefont {{Watanabe}},
  \citenamefont {{Taniguchi}}, \citenamefont {{Vishwanath}}, \citenamefont
  {{Zaletel}},\ and\ \citenamefont {{Young}}}]{Young_partiallyfilledTBG}%
  \BibitemOpen
  \bibfield  {author} {\bibinfo {author} {\bibfnamefont {H.}~\bibnamefont
  {{Polshyn}}}, \bibinfo {author} {\bibfnamefont {Y.}~\bibnamefont {{Zhang}}},
  \bibinfo {author} {\bibfnamefont {M.~A.}\ \bibnamefont {{Kumar}}}, \bibinfo
  {author} {\bibfnamefont {T.}~\bibnamefont {{Soejima}}}, \bibinfo {author}
  {\bibfnamefont {P.}~\bibnamefont {{Ledwith}}}, \bibinfo {author}
  {\bibfnamefont {K.}~\bibnamefont {{Watanabe}}}, \bibinfo {author}
  {\bibfnamefont {T.}~\bibnamefont {{Taniguchi}}}, \bibinfo {author}
  {\bibfnamefont {A.}~\bibnamefont {{Vishwanath}}}, \bibinfo {author}
  {\bibfnamefont {M.~P.}\ \bibnamefont {{Zaletel}}}, \ and\ \bibinfo {author}
  {\bibfnamefont {A.~F.}\ \bibnamefont {{Young}}},\ }\href@noop {} {\bibfield
  {journal} {\bibinfo  {journal} {arXiv e-prints}\ ,\ \bibinfo {eid}
  {arXiv:2104.01178}} (\bibinfo {year} {2021})},\ \Eprint
  {http://arxiv.org/abs/2104.01178} {arXiv:2104.01178 [cond-mat.str-el]}
  \BibitemShut {NoStop}%
\bibitem [{\citenamefont {Eslam}\ \emph {et~al.}()\citenamefont {Eslam},
  \citenamefont {Shubhayu}, \citenamefont {Nick}, \citenamefont {P.},\ and\
  \citenamefont {Ashvin}}]{chargedskyrmion}%
  \BibitemOpen
  \bibfield  {author} {\bibinfo {author} {\bibfnamefont {K.}~\bibnamefont
  {Eslam}}, \bibinfo {author} {\bibfnamefont {C.}~\bibnamefont {Shubhayu}},
  \bibinfo {author} {\bibfnamefont {B.}~\bibnamefont {Nick}}, \bibinfo {author}
  {\bibfnamefont {Z.~M.}\ \bibnamefont {P.}}, \ and\ \bibinfo {author}
  {\bibfnamefont {V.}~\bibnamefont {Ashvin}},\ }\bibfield  {booktitle} {\emph
  {\bibinfo {booktitle} {Science Advances}},\ }\href {\doibase
  10.1126/sciadv.abf5299} {\bibfield  {journal} {\bibinfo  {journal} {Science
  Advances}\ }\textbf {\bibinfo {volume} {7}},\ \bibinfo {pages}
  {eabf5299}}\BibitemShut {NoStop}%
\bibitem [{\citenamefont {{Chatterjee}}\ \emph {et~al.}(2020)\citenamefont
  {{Chatterjee}}, \citenamefont {{Ippoliti}},\ and\ \citenamefont
  {{Zaletel}}}]{skyrmionSC_DMRG}%
  \BibitemOpen
  \bibfield  {author} {\bibinfo {author} {\bibfnamefont {S.}~\bibnamefont
  {{Chatterjee}}}, \bibinfo {author} {\bibfnamefont {M.}~\bibnamefont
  {{Ippoliti}}}, \ and\ \bibinfo {author} {\bibfnamefont {M.~P.}\ \bibnamefont
  {{Zaletel}}},\ }\href@noop {} {\bibfield  {journal} {\bibinfo  {journal}
  {arXiv e-prints}\ ,\ \bibinfo {eid} {arXiv:2010.01144}} (\bibinfo {year}
  {2020})},\ \Eprint {http://arxiv.org/abs/2010.01144} {arXiv:2010.01144
  [cond-mat.str-el]} \BibitemShut {NoStop}%
\bibitem [{\citenamefont {Kapit}\ and\ \citenamefont
  {Mueller}(2010)}]{Kapit_Mueller}%
  \BibitemOpen
  \bibfield  {author} {\bibinfo {author} {\bibfnamefont {E.}~\bibnamefont
  {Kapit}}\ and\ \bibinfo {author} {\bibfnamefont {E.}~\bibnamefont
  {Mueller}},\ }\href {\doibase 10.1103/PhysRevLett.105.215303} {\bibfield
  {journal} {\bibinfo  {journal} {Phys. Rev. Lett.}\ }\textbf {\bibinfo
  {volume} {105}},\ \bibinfo {pages} {215303} (\bibinfo {year}
  {2010})}\BibitemShut {NoStop}%
\bibitem [{\citenamefont {Liu}\ \emph {et~al.}(2021)\citenamefont {Liu},
  \citenamefont {Abouelkomsan},\ and\ \citenamefont
  {Bergholtz}}]{ZhaoLiu_TDBG}%
  \BibitemOpen
  \bibfield  {author} {\bibinfo {author} {\bibfnamefont {Z.}~\bibnamefont
  {Liu}}, \bibinfo {author} {\bibfnamefont {A.}~\bibnamefont {Abouelkomsan}}, \
  and\ \bibinfo {author} {\bibfnamefont {E.~J.}\ \bibnamefont {Bergholtz}},\
  }\href {\doibase 10.1103/PhysRevLett.126.026801} {\bibfield  {journal}
  {\bibinfo  {journal} {Phys. Rev. Lett.}\ }\textbf {\bibinfo {volume} {126}},\
  \bibinfo {pages} {026801} (\bibinfo {year} {2021})}\BibitemShut {NoStop}%
\bibitem [{\citenamefont {Andrews}\ and\ \citenamefont
  {Soluyanov}(2020)}]{Bart_FQHmoireHofstadter}%
  \BibitemOpen
  \bibfield  {author} {\bibinfo {author} {\bibfnamefont {B.}~\bibnamefont
  {Andrews}}\ and\ \bibinfo {author} {\bibfnamefont {A.}~\bibnamefont
  {Soluyanov}},\ }\href {\doibase 10.1103/PhysRevB.101.235312} {\bibfield
  {journal} {\bibinfo  {journal} {Phys. Rev. B}\ }\textbf {\bibinfo {volume}
  {101}},\ \bibinfo {pages} {235312} (\bibinfo {year} {2020})}\BibitemShut
  {NoStop}%
\bibitem [{\citenamefont {Andrews}\ and\ \citenamefont
  {M\"oller}(2018)}]{Bart_HighC}%
  \BibitemOpen
  \bibfield  {author} {\bibinfo {author} {\bibfnamefont {B.}~\bibnamefont
  {Andrews}}\ and\ \bibinfo {author} {\bibfnamefont {G.}~\bibnamefont
  {M\"oller}},\ }\href {\doibase 10.1103/PhysRevB.97.035159} {\bibfield
  {journal} {\bibinfo  {journal} {Phys. Rev. B}\ }\textbf {\bibinfo {volume}
  {97}},\ \bibinfo {pages} {035159} (\bibinfo {year} {2018})}\BibitemShut
  {NoStop}%
\bibitem [{\citenamefont {Andrews}\ \emph {et~al.}(2021)\citenamefont
  {Andrews}, \citenamefont {Mohan},\ and\ \citenamefont
  {Neupert}}]{Bart_FQHtransition}%
  \BibitemOpen
  \bibfield  {author} {\bibinfo {author} {\bibfnamefont {B.}~\bibnamefont
  {Andrews}}, \bibinfo {author} {\bibfnamefont {M.}~\bibnamefont {Mohan}}, \
  and\ \bibinfo {author} {\bibfnamefont {T.}~\bibnamefont {Neupert}},\ }\href
  {\doibase 10.1103/PhysRevB.103.075132} {\bibfield  {journal} {\bibinfo
  {journal} {Phys. Rev. B}\ }\textbf {\bibinfo {volume} {103}},\ \bibinfo
  {pages} {075132} (\bibinfo {year} {2021})}\BibitemShut {NoStop}%
\end{thebibliography}%

\clearpage
\appendix
\section*{Supplementary material}
We include necessary theoretical and technical details in this supplementary material. It includes Section.~I on quantum Hall wavefunctions, with an emphasis on the $\bm k-$space holomorphic property of lowest Landau level wavefunctions, and how gauge transformations change the $\bm k-$space boundary condition; Section.~II on the ideal flatbands, which starts with detailed mathematical clarifications on the ideal flatband conditions used in the main text followed by the discussions on the uniqueness of our $\mC=1$ ideal flatband wavefunction; Section.~III on the interacting problems in ideal flatbands, which includes analytical derivations and numerical details on how to map flatband physics to Landau levels. Besides the phase transition discussed in the main text, we discuss two more applications about superconductivities and composite Fermi liquids in Section.~IV.

\tableofcontents

\section{Quantum Hall Wavefunctions}
In the first section, we review quantum Hall wavefunctions. We start with a gauge independent formalism to set up notations. After that, we discuss lowest Landau level (LLL) wavefunction in various gauges, and discuss their $\bm r-$space and $\bm k-$space holomorphic properties, and gauge transformations relating various gauges. The discussions about $\bm k-$space LLL wavefunctions and gauge transformations are important for our discussions on the ideal flatband wavefunctions in the next section.

\subsection{Gauge independent formalism}
The quantum Hall problem describes a two-dimensional electron gas in uniform perpendicular magnetic field. Under the minimal coupling, the gauge invariant momentum operator is:
\begin{equation}
\bm\pi_a \equiv -i\partial_a - e\bm A_a,
\end{equation}
where $\bm A$ is the vector gauge field of the external magnetic field whose curl is the magnetic field $\bm B = \bm\nabla\times\bm A$. In our notation, we set the electron charge $e<0$, and the $\hat{z}$ component of the magnetic field $B\equiv\hat{z}\cdot\bm B<0$, so we have $eB>0$. The magnetic length is defined as $l_B = \sqrt{\hbar/(eB)}$, which we implicitly set as \emph{one} as our length scale. The commutation relations of $\pi_a$ are,
\begin{equation}
~[\bm\pi_a, \bm\pi_b] = i(eB)\epsilon_{ab} = i\epsilon_{ab}l_B^{-2},\label{algebra_pi}
\end{equation}
where $\epsilon_{ab}$ is the 2D anti-symmetric tensor.

In magnetic field, the electron's motion is separated into guiding centers $\bm R$ and cyclotron motion $\bar{\bm R}$:
\begin{equation}
\bm R^a = \bm r^a + \epsilon^{ab}\bm\pi_bl_B^2,\quad\bm{\bar R}^a = -\epsilon^{ab}\bm \pi_b l_B^2,
\end{equation}
where $[\bm R^a, \bm{\bar R}^b]=0$, and:
\begin{equation}
~[\bm R^a, \bm R^b] = -i\epsilon^{ab}l_B^2,\quad [\bm{\bar R}^a, \bm{\bar R}^b] = i\epsilon^{ab}l_B^2,\label{algebra_R}
\end{equation}
which shows non-commutativity. Two sets of ladder operators are also defined:
\begin{equation}
a =\omega^*_a\bm R^a,\quad a^{\dag} = \omega_a\bm R^a,\quad\bar{a} = \omega_a\bar{\bm R}^a,\quad \bar{a}^{\dag} = \omega^*_a\bar{\bm R}^a,\label{ladders2}
\end{equation}
where $\omega_a$ is the \emph{complex structure} that determines the metric $g_{ab}$ and anti-symmetric tensor $\epsilon_{ab}$:
\begin{eqnarray}
g_{ab} &=& \omega_a^*\omega_b+\omega_a\omega_b^*,\nonumber\\
i\epsilon_{ab} &=& \omega_a^*\omega_b-\omega_a\omega_b^*,\label{defgepsilon}
\end{eqnarray}

The indices are raised or lowered by the metric. The complex structure satisfies $w_aw^a=0$, $w_a^*w^a=1$. The usual mathematical definition of complex structure is a two-tensor that square to minus one, which in fact corresponds to the $\epsilon^{ab}$ here; the ``complex structure'' used in this work should be understood as the null-vector of the projector $(g^{ab}+i\epsilon^{ab})/2$. The Einstein summation notation is assumed throughout this work.

In isotropic systems, we have $w_x=1/\sqrt{2}$, $w_y=i/\sqrt{2}$, leading to the usual representation $a=\left(R_x-iR_y\right)/\sqrt{2}$ and $\bar{a}=\left(\bar{R}_x+i\bar{R}_y\right)/\sqrt{2}$. We keep the discussion general by using Eq.~(\ref{ladders2}) with a general complex structure. The physical meaning of $\bar{a}/\bar{a}^{\dag}$ is the Landau level (LL) creation/annihilation operator, and $a/a^{\dag}$ means creates/annihilates an orbital (guiding center orbital) within a LL.

The quantum Hall system has a special aspect that the Berry curvature and Fubini-Study metric are uniform due to the continuous translation symmetry of the system in the uniform magnetic field. Thereby, any unit cells (where the aspect ratio and angle can vary) describes the problem equivalently as long as the primitive basis vectors of the unit cell $\bm a_1$ and $\bm a_2$ enclosed an area of $S=l_B^2=1$:
\begin{equation}
S = \frac{1}{2\pi}|\bm a_1\times\bm a_2|.\label{area}
\end{equation}

In solid state systems, however, there are natural definitions of unit cells from the underlying lattice, and the aspect ratio or angles cannot be varied continuously.

The reciprocal lattice vectors are defined as follows, such that $\bm a_i\cdot\bm b^j = 2\pi\delta^j_i$ is satisfied:
\begin{equation}
\bm b^i_a \equiv \epsilon^{ij}\epsilon_{ab}\bm a^b_j/S.\label{bvector}
\end{equation}

We assume the system is periodic in both directions: there are $N_1$ unit cells along $\bm a_1$ and $N_2$ unit cells along the $\bm a_2$ direction. So the whole system is defined on a torus with basis vectors:
\begin{equation}
\bm L_1 = N_1\bm a_1,\quad\bm L_2 = N_2\bm a_2.\label{Lvector}
\end{equation}

Since each unit cell has unit flux, the system has a total flux quanta:
\begin{equation}
N_{\phi}=N_1N_2.
\end{equation}

The reciprocal space is spanned by momentum points quantized on grids:
\begin{equation}
\bm q \in \{\frac{m_1}{N_1}\bm b^1+\frac{m_2}{N_2}\bm b^2|m_{1,2}\in\mathbb{Z}\}.\label{momentumquantization}
\end{equation}

We now discuss the magnetic translation symmetry in quantum Hall problems. The magnetic translation group elements are defined by the guiding center operators $\bm R$ as:
\begin{equation}
t(\bm q) \equiv \exp(i\bm q\cdot\bm R),\label{magtrans}
\end{equation}
which satisfy the commutation algebra:
\begin{equation}
t(\bm q_1)t(\bm q_2) = e^{i\bm q_1\times\bm q_2}t(\bm q_2)t(\bm q_1).\label{magalgebra}
\end{equation}

Translating a state across the Brillouin zone (BZ) must leave the state invariant. This defines the boundary condition:
\begin{equation}
e^{i\bm b^{i}\cdot\bm R}|\bm k\rangle = -e^{i\bm b^{i}\times\bm k}|\bm k\rangle,\quad i=1,2,\label{bc}
\end{equation}
where $\bm b^{1,2}$ are the two primitive reciprocal lattice vectors. This can be generalized to translation by multiple BZs. The phase factor $\exp(i\bm b\times\bm k)$ is analogous to the usual Bloch phase factor in solid states. Details of the derivations can be found in Ref.~(\onlinecite{haldanetorus1,Jie_MonteCarlo,haldanemodularinv}).

If $\bm q$ is not a reciprocal lattice vector, it translates a state $|\bm k\rangle$ to a different one $|\bm k+\bm q\rangle$:
\begin{equation}
e^{i\bm q\cdot\bm R}|\bm k\rangle = e^{\frac{i}{2}\bm q\times\bm k}|\bm k+\bm q\rangle.\label{symgaugerep}
\end{equation}

Taking $\bm q=\bm b^{i=1,2}$ and using the boundary condition Eq.~(\ref{bc}), we get a quasi-periodic boundary condition that any single-particle wavefunction in the Hilbert space needs to satisfy:
\begin{equation}
|\bm k+\bm b^i\rangle = -\exp(\frac{i}{2}\bm b^i\times\bm k)|\bm k\rangle.\label{ktranslation}
\end{equation}

Therefore, the states $|\bm k\rangle$, with the translation property Eq.~(\ref{symgaugerep}) and boundary condition Eq.~(\ref{ktranslation}) form a complete description of the Hilbert space of a single LL. While different Landau levels differ from each other by the details of the projection interaction, the essential physics of each Landau level is the same as they are all described by the $\bm R$ degrees of freedom. In this work, we use the LLL as a convenient tool to study the flatband problems.

\subsection{Lowest Landau level wavefunctions}
To discuss the LLL wavefunction in the first quantized form, it is necessary to fix the gauge first.

\subsubsection{Symmetric gauge}
We start with the symmetric gauge:
\begin{equation}
\bm A_a = -B\epsilon_{ab}\bm r^b/2,
\end{equation}
from which it follows that the Landau orbit and guiding center operators are:
\begin{equation}
\bar{\bm R}^a_{\sigma} = i\epsilon^{ab}\partial_b + \bm r^a/2 ,\quad\bm R^a_{\sigma} = -i\epsilon^{ab}\partial_b + \bm r^a/2.\label{RbarR}
\end{equation}

It is then straightforward to check that the wavefunction satisfying the magnetic translation symmetry in this symmetric gauge is given by the modified Weierstrass sigma function \cite{haldanemodularinv,Jie_MonteCarlo,FERRARI_Sigma,FERRARI_QHwannier}:
\begin{equation}
\Phi^{\sigma}_{\bm k}(\bm r) = \langle\bm r|\bm k\rangle = \sigma(z-z_k)e^{z_k^*z}e^{-\frac{1}{2}|z_k|^2}e^{-\frac{1}{2}|z|^2},\label{qhwf}
\end{equation}
where
\begin{equation}
z\equiv\omega_a\bm r^a,\quad z_k \equiv -i\omega^a\bm k_a,\label{defzzk}
\end{equation}
which are the complex coordinate of electrons and the complex coordinate of the zero respectively. The modified Weierstrass sigma function has a quasi-periodic translation symmetry:
\begin{equation}
\sigma(z+a_i)=-e^{a_i^*(z+a_i/2)}\sigma(z).\label{qbc_sigma}
\end{equation}

We will find it useful to define a $\bm k-$space holomorphic function:
\begin{eqnarray}
\tilde{u}^{\sigma}_{k}(\bm r) &\equiv& \Phi^{\sigma}_{\bm k}(\bm r)e^{-i\bm k\cdot\bm r}e^{\frac12|z_k|^2},\nonumber\\
&=& \sigma(z-z_k)e^{z^*z_k}e^{-\frac12|z|^2},\label{tildeusigma}
\end{eqnarray}
whose boundary condition is:
\begin{eqnarray}
\tilde{u}^{\sigma}_{k+b}(\bm r) &=& e^{i\phi_{k,b}}e^{-i\bm b\cdot\bm r}\tilde{u}^{\sigma}_{k}(\bm r),\label{kbc_u}
\end{eqnarray}
where
\begin{equation}
\phi_{k,b} = \pi+b^*(z_k-ib/2).\label{defphikb}
\end{equation}

\subsubsection{Landau gauge}
Landau gauge is another gauge that is frequently used. Without loss of generality, we choose $A_y=0$, $A_{x}=-By$. The Landau orbits and guiding centers are:
\begin{equation}
\bar{\bm R}^x_{\theta}=i\partial_y,\quad\bar{\bm R}^y_{\theta}=-i\partial_x+y,
\end{equation}
and
\begin{equation}
\bm R^x_{\theta}=x-i\partial_y,\quad \bm R^y_{\theta}=i\partial_x.
\end{equation}


Wavefunctions satisfying these translation properties and the boundary conditions are given by Jacobi theta functions which we label by superscript $\theta$,
\begin{equation}
\Phi^{\theta}_{\bm k}(\bm r) = e^{-z(z_k-z_k^*)}\theta_1(\frac{z-z_k}{a_1}|\tau)e^{-\frac{1}{2}k_x^2}e^{-\frac{1}{2}y^2},\label{thetawf}
\end{equation}
where the theta function is periodic in one direction and quasi-periodic in the other:
\begin{eqnarray}
\theta_1(z+1|\tau) &=& -\theta_1(z|\tau),\nonumber\\
\theta_1(z+\tau|\tau) &=& -e^{-i\pi\tau-2\pi i z}\theta_1(z|\tau),\label{thetatranslation}
\end{eqnarray}
and $\tau$ is the torus parameter $\tau=a_2/a_1$. Here, following the notation used in the main text, all unbolded $a_{i=1,2}\equiv\omega_a\bm a_i^a$ indicate complex coordinates.

\subsubsection{Gauge transformations}
We have denoted wavefunctions and operators in the symmetric gauge with superscript $\sigma$, and those in Landau gauge with superscript $\theta$. These two gauges are related by:
\begin{equation}
e^{-\frac{i}{2}xy}\cdot\bm R_{\sigma}\cdot e^{+\frac{i}{2}xy} = \bm R_{\theta},\quad e^{-\frac{i}{2}xy}\cdot\bm{\bar R}_{\sigma}\cdot e^{+\frac{i}{2}xy} = \bm{\bar R}_{\theta}.\label{gauge_sigma_theta}
\end{equation}

To see how this gauge transformation acts on wavefunctions, we use the mathematical relation between the sigma function and the theta function:
\begin{equation}
\sigma(z) = \frac{a_1}{\theta'_1}\theta_1(\frac{z}{a_1}|\tau)\exp\left(\frac{\eta_1}{a_1}z^2\right),\label{thetasigma}
\end{equation}
where $\eta_i$ is the Weierstrass Zeta function evaluated at half period $a_i/2$. We have $\eta_i=a_i^*/2$ in our case where $\bm a_{1,2}$ enclose one flux quanta. The $\theta'_1$ is the derivative of the theta function evaluated at the origin. Using Eq.~(\ref{thetasigma}) it is straightforward to show:
\begin{equation}
\Phi^{\sigma}_{\bm k}(\bm r) \sim \Phi^{\theta}_{\bm k}(\bm r)\times e^{-\frac{i}{2}k_xk_y+\frac{i}{2}xy},\label{PhithetaPhisigma}
\end{equation}
which is consistent with the gauge transformation in Eq.~(\ref{gauge_sigma_theta}). Here ``$\sim$'' means ``up to a constant''.

Analogous to $\tilde{u}^{\sigma}_{k}(\bm r)$ in Eq.~(\ref{tildeusigma}), we define the $\bm k-$holomorphic function in Landau gauge:
\begin{eqnarray}
\tilde{u}^{\theta}_{k}(\bm r) &=& \Phi^{\theta}_{\bm k}(\bm r)e^{-i\bm k\cdot\bm r}e^{\frac12k_x^2},\nonumber\\
&=& e^{-z_k(z-z^*)}\theta_1\left(\frac{z-z_k}{a_1}|\tau\right)e^{-\frac12y^2},\label{tildeutheta}
\end{eqnarray}
and by using Eq.~(\ref{thetatranslation}) and Eq.~(\ref{bvector}), we find its $\bm k-$space boundary condition:
\begin{equation}
\tilde{u}^{\theta}_{k+b_i}(\bm r) = e^{\phi'_{k,b_i}-i\bm b_i\cdot\bm r}\tilde{u}^{\theta}_{k}(\bm r),\quad i=1,2,
\end{equation}
where
\begin{eqnarray}
\phi'_{k,b_1} &=& \pi + (b_1^*+b_1)z_k + \sum_{i=1}^2b^*_i(-ib_i/2),\nonumber\\
\phi'_{k,b_2} &=& \pi.\label{defphikb_prime}
\end{eqnarray}

We have shown that the commonly used two torus LLL wavefunctions ($\theta$ and $\sigma$ functions) are related by a gauge transformation. And interestingly, the $\bm k-$space holomorphic function $\tilde{u}^{\sigma}_k$ and $\tilde{u}^{\theta}_k$ in Eq.~(\ref{tildeusigma}) and Eq.~(\ref{tildeutheta}) has the same form of their real space wavefunction with $z\leftrightarrow z_k$ interchanged, resembling the space-momentum duality in the LLL. In flatbands, while the real space wavefunction lost its LLL holomorphic properties, the $\bm k-$space holomorphic properties appear under our ideal flatband condition as we will discuss in detail in the next section. We will also find the $\bm k-$space boundary condition Eq.~(\ref{defphikb}) and Eq.~(\ref{defphikb_prime}) useful in determining our $\mC=1$ ideal flatband wavefunction.

We next briefly discuss the general gauge, which interpolates between the symmetric and Landau gauges parameterized by $\kappa$,
\begin{eqnarray}
e^{-\frac{i}{2}\kappa xy}\cdot\bm R_{\sigma}\cdot e^{+\frac{i}{2}\kappa xy} &=& \bm R_{\kappa},\nonumber\\
e^{-\frac{i}{2}\kappa xy}\cdot\bar{\bm R}_{\sigma}\cdot e^{+\frac{i}{2}\kappa xy} &=& \bar{\bm R}_{\kappa}.\label{generalRbarR}
\end{eqnarray}

We denote operators and wavefunctions under this gauge by a superscript $\kappa$. The choice of $\kappa=0$ and $\kappa=-1$ recover the symmetric and Landau gauge respectively, but general $\kappa$ gives us different boundary conditions. To solve for wavefunctions in general gauge, we first define a new elliptic function,
\begin{equation}
\sigma_{\kappa}(z) \equiv e^{-\kappa\frac{\eta_1}{2\omega_1}z^2}\sigma(z),\label{sigmakappa}
\end{equation}
which, same as $\sigma(z)$ and $\theta_1(z|\tau)$, $\sigma_{\kappa}(z)$ is holomorphic and quasi-periodic:
\begin{equation}
\sigma_{\kappa}(z+a_i) = -\sigma_k(z)\cdot e^{(a^*_i-\kappa a_i)(z+\frac{1}{2}a_1)}.
\end{equation}

The Eq.~(\ref{sigmakappa}) is motivated by the relation between $\sigma(z)$ and $\theta_1(z)$ as shown in Eq.~(\ref{thetasigma}). One can always choose $a_1$ to be real to achieve a simpler expression:
\begin{equation}
\sigma_{\kappa}(z)=\sigma(z)\exp\left(-\frac{\kappa}{2}z^2\right).
\end{equation}

The wavefunction that respects translation symmetry Eq.~(\ref{symgaugerep}) and the boundary condition Eq.~(\ref{ktranslation}) are given by:
\begin{eqnarray}
\Phi^{\kappa}_{\bm k}(\bm r) &=& \sigma_{\kappa}(z-z_k) e^{(z^*_k-\kappa z_k)z}\\
&\times& e^{-\frac{1}{2}|z|^2 + \frac{\kappa}{4}(z^2+\bar z^2)} e^{-\frac{1}{2}|z_k|^2 + \frac{\kappa}{4}(z_k^2+\bar z_k^2)},\nonumber
\end{eqnarray}
and it is related to the symmetric gauge wavefunction by:
\begin{equation}
\Phi^{\sigma}_{\bm k}(\bm r) \sim \Phi^{\kappa}_{\bm k}(\bm r)\times e^{i\frac{\kappa}{2}(k_xk_y-xy)}.\label{PhikappaPhisigma}
\end{equation}

Same as the discussions for $\tilde{u}^{\sigma}_k$ and $\tilde{u}^{\theta}_k$, one can define a $\bm k-$space holomorphic function $\tilde{u}^{\kappa}_k$ from Eq.~(\ref{PhikappaPhisigma}). Its $\bm k-$space boundary condition is a linear interpolation between Eq.~(\ref{defphikb}) and Eq.~(\ref{defphikb_prime}). We will find this useful in determining the $\mC=1$ ideal flatband wavefunction.

\section{Band geometry and wavefunctions}
In this section, we provide details of various mathematical statements used in the main text, and show the uniqueness of our model wavefunction.

\subsection{Band geometry and k-space complex structure}
Following the notation used in the main text: we denote the Bloch function as $\psi_{\bm k}(\bm r)$ and denote its cell-periodic part as $u_{\bm k}(\bm r)$:
\begin{equation}
\psi_{\bm k}(\bm r) = u_{\bm k}(\bm r)\exp(i\bm k\cdot\bm r),
\end{equation}
where both $\psi_{\bm k}$ and $u_{\bm k}$ are normalized. The quantum geometric tensor is defined as:
\beqn
\mathcal{Q}^{ab}(\bm k) &=& \langle D^au_{\bm k}|D^bu_{\bm k}\rangle,\nonumber\\
|D^au_{\bm k}\rangle &=& (\partial_k^a - iA^a_{\bm k})|u_{\bm k}\rangle,
\eeqn
where $|D^au_{\bm k}\rangle$ is the covariant derivative that satisfies,
\begin{equation}
\langle u_{\bm k}|D^au_{\bm k}\rangle = 0,\label{defcovD}
\end{equation}
and the Berry connection and Berry curvature are defined respectively as follows:
\begin{equation}
A^a_{\bm k} \equiv -i\langle u_{\bm k}|\partial^a_{\bm k}u_{\bm k}\rangle,\quad\Omega_{\bm k}=\epsilon_{ab}\partial^a_{\bm k}A^b_{\bm k},\label{defAk}
\end{equation}

The imaginary and real part of $\mathcal{Q}^{ab}_{\bm k}$ are the Berry curvature and the Fubini-Study metric:
\begin{equation}
\mathcal{Q}^{ab}_{\bm k} = g_{\bm k}^{ab} + \frac{i\epsilon^{ab}}{2}\Omega_{\bm k}.
\end{equation}

The geometry of a single-band Bloch wavefunction is described by the mapping from the 2D Brillouin zone to the Bloch wavefunction: $\text{BZ}\rightarrow\text{CP}_n$, where $n$ is the dimension of the Hamiltonian. Mathematically, the Fubini-Study metric $g_{\bm k}^{ab}d\bm k_ad\bm k_b$ is the pull-back of the standard $CP^n$ metric to the 2D Brillouin zone.

The definition of the Berry connection in Eq.~(\ref{defAk}) differs from the usual definition by a minus sign: as we will see this notation gives us ``holomorphic'' wavefunctions with ``positive'' Berry curvature (the usual notation gives ``holomorphic'' wavefunctions with ``negative'' Berry curvature). Note that the definition of covariant derivative is independent on the sign choice as Eq.~(\ref{defcovD}) has to be satisfied. This Berry connection sign convention has also been used in Ref.~(\onlinecite{haldaneanomaloushall}).

It is well known in the literature that \cite{PARAMESWARAN2013816,PhysRevB.90.165139,Jackson:2015aa,Martin_PositionMomentumDuality}:
\begin{theorem}
The Fubini-Study metric and Berry curvature satisfy the following inequalities:
\begin{equation}
|\frac{1}{2}\omega_{\bm k,ab}g_{\bm k}^{ab}|\geq\sqrt{\det g_{\bm k}}\geq\frac{1}{2}|\Omega_{\bm k}|.\label{inequality}
\end{equation}
where $\omega_{\bm k}^{ab}$ is an arbitrary unimodular matrix, that may vary at different $\bm k$.
\end{theorem}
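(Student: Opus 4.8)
The plan is to reduce the chain of two inequalities in Eq.~(\ref{inequality}) to elementary linear-algebra facts about the $2\times 2$ Hermitian positive-semidefinite matrix $\mathcal{Q}^{ab}_{\bm k}$ at each fixed $\bm k$. The starting point is that for any complex vector $v$, $\mathcal{Q}$ built as a Gram matrix $\mathcal{Q}^{ab}=\langle D^a u|D^b u\rangle$ is automatically positive semidefinite; this is the only structural input needed, and $\bm k$ then plays no further role, so I would suppress it. Write $\mathcal{Q}^{ab}=g^{ab}+\tfrac{i}{2}\epsilon^{ab}\Omega$ with $g$ real symmetric PSD and $\Omega$ real. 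The right inequality $\sqrt{\det g}\geq \tfrac12|\Omega|$ is just the statement $\det\mathcal{Q}\geq 0$: for a $2\times2$ matrix one computes $\det\mathcal{Q}=\det g-\tfrac14\Omega^2$ (the cross terms vanish because $g$ is symmetric and $\epsilon$ antisymmetric), and $\det\mathcal{Q}\geq0$ follows from PSD. That handles the second $\geq$ with essentially no work.

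For the left inequality I would use the unimodular matrix $\omega^{ab}$ as a "test direction." The cleanest route: since $\mathcal{Q}\succeq 0$, for any real positive-definite unimodular $\omega$ we have $\operatorname{Tr}(\omega\,\mathcal{Q})\geq 0$, and more usefully $\operatorname{Tr}(\omega\,\mathcal{Q})\geq 2\sqrt{\det(\omega\,\mathcal{Q})}=2\sqrt{\det\omega\,\det\mathcal{Q}}=2\sqrt{\det\mathcal{Q}}$ by the AM–GM inequality applied to the two (nonnegative) eigenvalues of $\omega^{1/2}\mathcal{Q}\,\omega^{1/2}$, using $\det\omega=1$. Now $\operatorname{Tr}(\omega\,\mathcal{Q})=\omega_{ab}g^{ab}$ since the antisymmetric part of $\mathcal{Q}$ is killed by the symmetric $\omega$; and $\sqrt{\det\mathcal{Q}}=\sqrt{\det g-\tfrac14\Omega^2}$. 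Combining, $\tfrac12\omega_{ab}g^{ab}\geq\sqrt{\det g-\tfrac14\Omega^2}\geq\sqrt{\det g}-$ wait, that is the wrong direction, so instead I simply note $\sqrt{\det g-\tfrac14\Omega^2}$ is what one gets, but the claimed middle term is $\sqrt{\det g}$; to bridge these I observe that the inequality $\tfrac12|\omega_{ab}g^{ab}|\geq\sqrt{\det g}$ holds for \emph{any} unimodular $\omega$ purely by AM–GM on the eigenvalues of $\omega^{1/2}g\,\omega^{1/2}$, needing only $g\succeq0$ and $\det\omega=1$ — the Berry curvature is irrelevant to the left inequality. So I would split the proof: left inequality from AM–GM on $\omega^{1/2}g\omega^{1/2}$; right inequality from $\det\mathcal{Q}\geq0$.

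Concretely the steps are: (1) fix $\bm k$, record that $\mathcal{Q}$ is a $2\times2$ PSD Hermitian matrix with real part $g$ and imaginary part $\tfrac12\epsilon\Omega$; (2) compute $\det\mathcal{Q}=\det g-\tfrac14\Omega^2$ and deduce the right inequality from $\det\mathcal{Q}\geq0$; (3) for the left inequality, diagonalize $g=O^T\operatorname{diag}(\lambda_1,\lambda_2)O$ with $\lambda_i\geq0$, write $\tilde\omega=O\omega O^T$ which is still unimodular and positive, and check $\tilde\omega_{ab}\operatorname{diag}(\lambda)^{ab}=\tilde\omega_{11}\lambda_1+\tilde\omega_{22}\lambda_2\geq 2\sqrt{\tilde\omega_{11}\tilde\omega_{22}\lambda_1\lambda_2}\geq 2\sqrt{\lambda_1\lambda_2}=2\sqrt{\det g}$, where the last step uses $\tilde\omega_{11}\tilde\omega_{22}\geq\tilde\omega_{11}\tilde\omega_{22}-\tilde\omega_{12}^2=\det\tilde\omega=1$; (4) reassemble, noting $\omega_{ab}g^{ab}=\tilde\omega_{ab}\operatorname{diag}(\lambda)^{ab}$ is coordinate-invariant. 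The only mild subtlety — and the part I'd be most careful about — is index placement and the raising/lowering conventions (the theorem writes $\omega_{\bm k,ab}g^{ab}_{\bm k}$ with $\omega$ lower and $g$ upper), making sure that "unimodular" means $\det=1$ in the same convention so that $\det(\omega g)=\det g$, and that the absolute value on the left is genuinely needed only because $\omega$ is not required positive-definite (if $\omega$ is merely unimodular and symmetric but indefinite, one applies the argument to $|\omega|$). No step is hard; the whole statement is two applications of AM–GM plus the $2\times2$ determinant identity.
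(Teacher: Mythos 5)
Your proof is correct and follows essentially the same route as the paper: the right inequality from positive semidefiniteness of the Gram matrix $\mathcal{Q}^{ab}_{\bm k}$ together with $\det\mathcal{Q}_{\bm k}=\det g_{\bm k}-\Omega_{\bm k}^2/4$, and the left inequality from AM--GM applied after diagonalizing one of the two symmetric matrices (you diagonalize $g_{\bm k}$ where the paper diagonalizes $\omega_{\bm k}$, a purely cosmetic difference). Your observation that the Berry curvature plays no role in the left inequality is also consistent with the paper's argument.
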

\begin{proof}
Introducing the band projector
\begin{equation}
P_{\bm k} = |u_{\bm k}\rangle\langle u_{\bm k}|,\quad Q_{\bm k} = \mathbb{I}-P_{\bm k},
\end{equation}
with which the quantum geometric tensor can be rewritten as
\begin{equation}
\mathcal{Q}^{ab}_{\bm k} = 2\langle\partial^a_{\bm k}u|Q_{\bm k}|\partial^b_{\bm k}u\rangle.\label{def_QGC}
\end{equation}
from which it is easy to see it is positive semi-definite. Since $\det\mathcal{Q}_{\bm k} = \det g_{\bm k} - \Omega_{\bm k}^2/4\geq0$, we proved the second inequality of Eq.~(\ref{inequality}).

The first inequality follows from the inequality between trace and determinant for general hermitian matrices which of course applies to the real symmetric matrix $g_{\bm k}^{ab}$ here. To see this more clearly, we follow Ref.~(\onlinecite{PhysRevB.90.165139}) to transform into the diagonal coordinate system of $\omega_{\bm k,ab}$. In this coordinate system, write $\omega_{\bm k}$ and $g_{\bm k}$ as follows,
\begin{equation}
\omega_{\bm k} = \left(\begin{matrix}\lambda_{\bm k}&0\\0&\lambda^{-1}_{\bm k}\end{matrix}\right),\quad g_{\bm k} = \sqrt{\det g_{\bm k}}\left(\begin{matrix}\tilde{g}_{11}&\tilde{g}_{12}\\\tilde{g}_{21}&\tilde{g}_{22}\end{matrix}\right).
\end{equation}
where $\tilde{g}$ is the unimodular part of $g_{\bm k}$. Then the contraction between $\omega_{\bm k}$ and $g_{\bm k}$ equals to $\sqrt{\det g_{\bm k}}(\lambda_{\bm k}\tilde{g}_{11}+\lambda^{-1}_{\bm k}\tilde{g}_{22})/2\geq\sqrt{\det g_{\bm k}}\sqrt{\tilde{g}_{11}\tilde{g}_{22}} = \sqrt{\det g_{\bm k}}\sqrt{1+\tilde{g}_{12}^2}$ which is greater than $\sqrt{\det g_{\bm k}}$.
\end{proof}

If we replace $\omega^{ab}_{\bm k}$ by the $\delta^{ab}$ matrix in Eq.~(\ref{inequality}), the left hand side of this inequality becomes the trace of the Fubini-Study metric. The bound on the trace of $g^{ab}_{\bm k}$ in terms of the Berry curvature was initially noticed in Ref.~(\onlinecite{PhysRevB.90.165139}) and was termed by the ``trace condition''. Saturation of the trace bound implies the saturation of the determinant bound.

We now examine what happens when the determinant bound saturates.

\begin{corollary}\label{coro_bound}
The bound
\begin{equation}
\sqrt{\det g_{\bm k}}\geq\frac{1}{2}|\Omega_{\bm k}|,\label{detbound_SM}
\end{equation}
saturates if and only if $\sqrt{\det g_{\bm k}}=\Omega_{\bm k}/2=0$ or $\mathcal{Q}^{ab}_{\bm k}$ has a null vector $\omega_{\bm k,a}$:
\begin{equation}
\mathcal{Q}^{ab}_{\bm k}\omega_{\bm k,b} = 0.\label{defnullvec}
\end{equation}

If $\Omega_{\bm k}\neq0$ (without loss of generality we assume $\Omega_{\bm k}>0$) and Eq.~(\ref{detbound_SM}) saturates, then the Berry curvature and the Fubini-Study metric can be written as:
\begin{eqnarray}
i\epsilon^{ab}\frac{\Omega_{\bm k}}{|\Omega_{\bm k}|} &=& \omega^{a*}_{\bm k}\omega^{b}_{\bm k} - \omega^{a}_{\bm k}\omega^{b*}_{\bm k},\nonumber\\
\frac{g_{\bm k}^{ab}}{\sqrt{\det g_{\bm k}}} &=& \omega^{a*}_{\bm k}\omega^{b}_{\bm k} + \omega^{a}_{\bm k}\omega^{b*}_{\bm k}.\label{omegagasw}
\end{eqnarray}
\end{corollary}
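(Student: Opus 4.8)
The plan is to reduce the whole statement to elementary linear algebra for the $2\times2$ positive semidefinite Hermitian matrix $\mathcal{Q}^{ab}_{\bm k}$. The proof of Theorem~1 above already showed $\det\mathcal{Q}_{\bm k}=\det g_{\bm k}-\Omega_{\bm k}^2/4\geq0$ by writing $\mathcal{Q}^{ab}_{\bm k}=2\langle\partial^a_{\bm k}u|Q_{\bm k}|\partial^b_{\bm k}u\rangle$. Hence the bound Eq.~(\ref{detbound_SM}) saturates if and only if $\det\mathcal{Q}_{\bm k}=0$, which for a $2\times2$ positive semidefinite Hermitian matrix is equivalent to $\mathcal{Q}_{\bm k}$ having rank at most one, i.e.\ to the existence of a nonzero vector $\omega_{\bm k,b}$ with $\mathcal{Q}^{ab}_{\bm k}\omega_{\bm k,b}=0$ (for a Hermitian matrix a right null vector is automatically a left null vector, so ``null vector'' is unambiguous). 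The converse implication---a null vector forces $\det\mathcal{Q}_{\bm k}=0$ and hence saturation---is immediate, so this already establishes the ``if and only if''; the two alternatives in the statement then correspond to rank $0$ versus rank $1$.

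Next I would dispatch the degenerate case. If $\mathcal{Q}_{\bm k}$ has rank zero it is the zero matrix, so $g^{ab}_{\bm k}=0$ and $\Omega_{\bm k}=0$, which is the alternative $\sqrt{\det g_{\bm k}}=\Omega_{\bm k}/2=0$ (and in this case every vector is trivially a null vector). If instead $\mathcal{Q}_{\bm k}$ has rank exactly one, diagonalize it: there is a unique positive eigenvalue $c>0$, so $\mathcal{Q}^{ab}_{\bm k}=c\,m^a\bar m^b$ for a nonzero complex vector $m$ (unique up to phase), and the null vector $\omega_{\bm k,b}$ is characterized by $\bar m^b\omega_{\bm k,b}=0$. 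Taking real and imaginary parts of this rank-one form recovers $g^{ab}_{\bm k}=\mathrm{Re}\,\mathcal{Q}^{ab}_{\bm k}=\tfrac{c}{2}(m^a\bar m^b+\bar m^a m^b)$ and $\tfrac12\epsilon^{ab}\Omega_{\bm k}=\mathrm{Im}\,\mathcal{Q}^{ab}_{\bm k}$ as explicit quadratics in $m$; evaluating the $(1,2)$ components gives $\Omega_{\bm k}=c\cdot2\,\mathrm{Im}(m^1\bar m^2)$ and $\det g_{\bm k}=c^2[\mathrm{Im}(m^1\bar m^2)]^2=\Omega_{\bm k}^2/4$, so in the rank-one case $\Omega_{\bm k}\neq0$ automatically (were $\Omega_{\bm k}=0$ we would get $\det g_{\bm k}=0$, hence $g_{\bm k}=0$ by positivity, hence rank zero). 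This justifies the dichotomy exactly as stated.

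Finally, for the explicit formulas Eq.~(\ref{omegagasw}) (assume $\Omega_{\bm k}>0$; the opposite sign follows by complex conjugation, i.e.\ swapping $\omega_{\bm k}\leftrightarrow\omega_{\bm k}^*$), I would fix the overall real positive scale of the eigenvector by defining the complex structure through $\mathcal{Q}^{ab}_{\bm k}=2\sqrt{\det g_{\bm k}}\,\omega^{a*}_{\bm k}\omega^{b}_{\bm k}$, i.e.\ $\omega^a_{\bm k}\propto\bar m^a$ with the prefactor absorbing $c$ and the normalization. One checks, with indices raised and lowered by the unimodular part of $g_{\bm k}$, that $\omega^{a*}_{\bm k}\omega_{\bm k,a}=1$ and $\omega^a_{\bm k}\omega_{\bm k,a}=0$ hold, so $\omega_{\bm k}$ is a genuine complex structure whose lowered form is precisely the null vector of $\mathcal{Q}_{\bm k}$; reading off the real part gives $g^{ab}_{\bm k}=\sqrt{\det g_{\bm k}}(\omega^{a*}_{\bm k}\omega^b_{\bm k}+\omega^a_{\bm k}\omega^{b*}_{\bm k})$ and the imaginary part gives $\tfrac{i}{2}\epsilon^{ab}\Omega_{\bm k}=\sqrt{\det g_{\bm k}}(\omega^{a*}_{\bm k}\omega^b_{\bm k}-\omega^a_{\bm k}\omega^{b*}_{\bm k})$, which upon using the saturated identity $\sqrt{\det g_{\bm k}}=\Omega_{\bm k}/2=|\Omega_{\bm k}|/2$ becomes exactly the two lines of Eq.~(\ref{omegagasw}). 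I expect no conceptual obstacle here: the only thing requiring care is the bookkeeping of complex conjugates and of which metric (the FSM $g_{\bm k}$ itself, or its unimodular part) raises and lowers indices, so that the null vector lands in the lower slot with the normalization $\omega_a\omega^a=0$, $\omega_a^*\omega^a=1$ and the signs in Eq.~(\ref{omegagasw}) come out consistent with the rest of the paper---this is self-consistent because once $g_{\bm k}$ is shown to have the claimed rank-one form, the stated nullity and norm of $\omega_{\bm k}$ hold by construction.
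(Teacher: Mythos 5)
Your proof is correct where it matters and follows essentially the same route as the paper: saturation of the determinant bound is equivalent to $\det\mathcal{Q}_{\bm k}=0$, positive semidefiniteness then forces a zero eigenvalue and hence a null vector, and the explicit formulas in Eq.~(\ref{omegagasw}) are read off from the rank-one (outer-product) form of $\mathcal{Q}_{\bm k}$. The paper's own proof is just a two-sentence compression of this argument, so your main contribution is filling in the linear-algebra details, which you do correctly.

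One side remark in your second paragraph is, however, false: the claim that ``in the rank-one case $\Omega_{\bm k}\neq0$ automatically,'' justified by ``were $\Omega_{\bm k}=0$ we would get $\det g_{\bm k}=0$, hence $g_{\bm k}=0$ by positivity, hence rank zero.'' For a positive semidefinite matrix, vanishing determinant does not imply the matrix vanishes: take $m$ proportional to a real vector, e.g.\ $\mathcal{Q}_{\bm k}=\mathrm{diag}(1,0)$, which is Hermitian, positive semidefinite, rank one, has a null vector, and has $\Omega_{\bm k}=0$ with $g_{\bm k}\neq0$. So the two alternatives in the corollary do \emph{not} correspond to rank $0$ versus rank $1$; they overlap (a rank-one $\mathcal{Q}_{\bm k}$ with $\Omega_{\bm k}=0$ satisfies both $\sqrt{\det g_{\bm k}}=\Omega_{\bm k}/2=0$ and the null-vector condition). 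This does not damage your proof of the corollary as stated, because the ``or'' need only be inclusive and the second half of the statement explicitly hypothesizes $\Omega_{\bm k}\neq0$ before invoking the rank-one decomposition with $\mathrm{Im}(m^1\bar m^2)\neq0$; but the purported dichotomy should be deleted or weakened to ``rank one \emph{and} $\Omega_{\bm k}\neq0$'' in the final step.
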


\begin{proof}
If $\Omega_{\bm k}$ and all components of $g_{\bm k}$ vanish identically, of course $\sqrt{\det g_{\bm k}}=\Omega_{\bm k}/2$ holds. Now we discuss the case with $\Omega_{\bm k}\neq0$.

Because the spectrum of $\mathcal{Q}^{ab}_{\bm k}$ is non-negative, saturation of the bound implies at least one zero eigenvalue. Therefore there must exist a $\omega_{\bm k,a}$ such that Eq.~(\ref{defnullvec}) is true. Eq.~(\ref{omegagasw}) follows from Eq.~(\ref{defgepsilon}) by assuming $\Omega_{\bm k}>0$.
\end{proof}

The determinant bound is trivially saturated in any 2D two-band models. However there is a caveat:
\begin{theorem}
Any two-dimensional two-band model saturates the bound Eq.~(\ref{detbound_SM}). However, for two-dimensional two-band model, there must exist a point $\bm k_0$ in the Brillouin zone (BZ) such that $\det g_{\bm k}=\Omega_{\bm k}^2/4=0$.
\end{theorem}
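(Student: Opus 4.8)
The plan is to reduce both halves of the statement to standard facts about the classifying map $\hat{\bm d}:\mathrm{BZ}\to S^2$ of a gapped two-band model.

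\emph{Setup.} Write the Bloch Hamiltonian as $H(\bm k)=d_0(\bm k)\,\mathbb{I}+\bm d(\bm k)\cdot\bm\sigma$ and, as the band geometry requires, assume the gap never closes, i.e. $|\bm d(\bm k)|>0$ on the whole BZ. The occupied projector is then $P_{\bm k}=\tfrac12\big(\mathbb{I}-\hat{\bm d}_{\bm k}\cdot\bm\sigma\big)$ with $\hat{\bm d}_{\bm k}\equiv\bm d(\bm k)/|\bm d(\bm k)|$, defining a smooth map $\hat{\bm d}:\mathrm{BZ}=T^2\to S^2$. Because $\mathcal{Q}^{ab}_{\bm k}$ in Eq.~(\ref{def_QGC}) is built from $P_{\bm k}$ alone, it is the pull-back along $\hat{\bm d}$ of the quantum geometric tensor of $S^2=CP^1$; explicitly $g^{ab}_{\bm k}=\tfrac14\,\partial^a_{\bm k}\hat{\bm d}\cdot\partial^b_{\bm k}\hat{\bm d}$ and $\Omega_{\bm k}=\tfrac12\,\hat{\bm d}\cdot\big(\partial^x_{\bm k}\hat{\bm d}\times\partial^y_{\bm k}\hat{\bm d}\big)$, up to the overall sign fixed by our convention.

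\emph{Saturation.} Since $Q_{\bm k}$ projects onto the one-dimensional unoccupied band, the vectors $Q_{\bm k}|\partial^a_{\bm k}u\rangle$ all lie in a single line, so $\mathcal{Q}^{ab}_{\bm k}$ in Eq.~(\ref{def_QGC}) is a Gram matrix of rank $\le 1$, whence $\det\mathcal{Q}_{\bm k}=\det g_{\bm k}-\Omega_{\bm k}^2/4=0$ at every $\bm k$, i.e. $\sqrt{\det g_{\bm k}}=\tfrac12|\Omega_{\bm k}|$. (Equivalently, from $|\hat{\bm d}|=1$ one has $\partial^a_{\bm k}\hat{\bm d}\perp\hat{\bm d}$, so $\partial^x_{\bm k}\hat{\bm d}\times\partial^y_{\bm k}\hat{\bm d}$ is parallel to $\hat{\bm d}$; Lagrange's identity then gives $\det g_{\bm k}=\tfrac1{16}\big|\partial^x_{\bm k}\hat{\bm d}\times\partial^y_{\bm k}\hat{\bm d}\big|^2=\tfrac14\,\Omega_{\bm k}^2$.) This proves the first assertion.

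\emph{Forced degeneracy.} Since $\det g_{\bm k}=\Omega_{\bm k}^2/4$ everywhere, it suffices to find $\bm k_0$ with $\Omega_{\bm k_0}=0$; then $\det g_{\bm k_0}=\Omega_{\bm k_0}^2/4=0$ for free. Suppose instead $\Omega_{\bm k}\neq 0$ for all $\bm k$. Then the Jacobian $\hat{\bm d}\cdot(\partial^x_{\bm k}\hat{\bm d}\times\partial^y_{\bm k}\hat{\bm d})$ of $\hat{\bm d}:T^2\to S^2$ vanishes nowhere, so by the inverse function theorem $\hat{\bm d}$ is a local diffeomorphism between these two closed surfaces. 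A local diffeomorphism out of a compact manifold onto a connected one is a finite-sheeted covering map, and since $S^2$ is simply connected it must be single-sheeted, i.e. a global diffeomorphism $T^2\cong S^2$ — impossible, since $\chi(T^2)=0\neq 2=\chi(S^2)$ (equivalently $\pi_1(T^2)\neq 0=\pi_1(S^2)$). Hence $\Omega_{\bm k_0}=0$ for some $\bm k_0$, which establishes the second assertion. (The remaining case $\bm d\equiv\text{const}$ is trivial: then $\Omega\equiv 0$ and any $\bm k_0$ will do.)

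\emph{Main obstacle.} The only non-calculus ingredient is the fact that a local diffeomorphism from a compact manifold is a covering map; I would cite this, or else argue directly that nonvanishing Jacobian makes $\hat{\bm d}$ an open map while compactness of $T^2$ makes $\hat{\bm d}(T^2)$ closed, so $\hat{\bm d}$ is onto $S^2$ with finite fibers of locally constant cardinality $n$, giving the contradiction $0=\chi(T^2)=n\,\chi(S^2)=2n$. It is worth flagging explicitly that the argument uses the topology of the BZ (a torus) — the conclusion would fail if the base were itself a sphere — and that it tacitly presupposes a genuine band gap, since the band geometry, and the whole statement, is ill-defined at a band-touching point.
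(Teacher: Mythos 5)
Your proof is correct. For the first assertion you and the paper make essentially the same move in different clothing: the paper writes $\mathcal{Q}^{ab}_{\bm k}$ in terms of the single inter-band Berry connection $\tilde{A}^a_{\bm k}$ and computes $\det g_{\bm k}=\Omega_{\bm k}^2/4$ directly, while you observe that $Q_{\bm k}$ has rank one so $\mathcal{Q}^{ab}_{\bm k}$ is a rank-$\le 1$ Gram matrix with vanishing determinant; these are the same fact, and your $\hat{\bm d}$-vector/Lagrange-identity computation is a concrete instance of it. The real divergence is in the second assertion: the paper does not prove it at all, deferring entirely to the cited K\"ahler-band references, whereas you supply a complete argument --- nonvanishing $\Omega_{\bm k}$ would make $\hat{\bm d}:T^2\to S^2$ a local diffeomorphism, hence (by compactness and connectedness) a finite covering, hence trivial since $S^2$ is simply connected, contradicting $\chi(T^2)=0\neq 2$. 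This is in fact the standard proof found in those references, so your write-up makes the theorem self-contained where the paper leaves a black box; your explicit flags that the argument hinges on the base being a torus rather than a sphere, and on the tacit assumption of a gap so that $\hat{\bm d}$ is well defined, are both apt and worth keeping.
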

\begin{proof}
The first statement is easy to prove. The $\Omega_{\bm k}$ and $g^{ab}_{\bm k}$ can be expressed by the multi-band Berry connection. For two-band models, they are:
\begin{eqnarray}
\epsilon^{ab}\Omega_{\bm k} &=& i\tilde{A}^a_{\bm k}\tilde{A}^{b*}_{\bm k} - i\tilde{A}^{a*}_{\bm k}\tilde{A}^b_{\bm k},\nonumber\\
g^{ab}_{\bm k} &=& \tilde{A}^a_{\bm k}\tilde{A}^{b*}_{\bm k} + \tilde{A}^{a*}_{\bm k}\tilde{A}^b_{\bm k}.
\end{eqnarray}
where $\tilde{A}_{\bm k}^a\equiv-i\langle u_{0\bm k}|\partial_{\bm k}^au_{1\bm k}\rangle$ is the inter-band Berry connection between the band of interest ($n=0$) and the other one. It follows that,
\begin{eqnarray}
\det g_{\bm k} &=& \frac{1}{2}\epsilon_{ac}\epsilon_{bd} g_{\bm k}^{ab}g_{\bm k}^{cd},\nonumber\\
&=& \frac{1}{2}\left(\epsilon_{ac}\tilde{A}^a_{\bm k}\tilde{A}^{c*}_{\bm k}\right)\left(\epsilon_{bd}\tilde{A}^{b*}_{\bm k}\tilde{A}^d_{\bm k}\right)+h.c.,\nonumber\\
&=& \Omega_{\bm k}^2/4.
\end{eqnarray}

The proof for the second statement that $\det g_{\bm k}=\Omega_{\bm k}$ must vanish in the BZ can be found in Refs.~(\onlinecite{kahlerband1,kahlerband2}).
\end{proof}

Saturation of the determinant bound with nonzero Berry curvature also implies the local complex structure $\omega_{\bm k,a}$ satisfies Eq.~(\ref{omegagasw}). However, in order for the local complex coordinate system to be able to be smoothly glued together to form a global coordinate system, we must impose $\Omega_{\bm k}$ positive (or negative) definite so that it does not change sign in the BZ. Following Refs.~(\onlinecite{kahlerband1,kahlerband2}):
\begin{theorem}\label{theorm_holocoord}
If $\det g_{\bm k}\neq0$ and the bound in Eq.~(\ref{detbound_SM}) saturates for all $\bm k$, there exists a smooth function $\lambda_{\bm k}$ such that for all $\bm k$:
\begin{equation}
Q_{\bm k}\frac{\partial P_{\bm k}}{\partial k_1} = \lambda_{\bm k}Q_{\bm k}\frac{\partial P_{\bm k}}{\partial k_2}.
\end{equation}
In other words, in terms of a local holomorphic coordinate $\partial_{\bar z}\equiv\left(\partial_{k_1} - \lambda_{\bm k}\partial_{k_2}\right)/2$, we have,
\begin{equation}
Q_{\bm k}\frac{\partial P_{\bm k}}{\partial \bar z} = 0.\label{holo_coord}
\end{equation}
\end{theorem}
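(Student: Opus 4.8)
The plan is to build $\lambda_{\bm k}$ out of the null vector of the quantum geometric tensor furnished by Corollary~\ref{coro_bound}, and then verify that this null vector is exactly what annihilates the operators $Q_{\bm k}\partial_{k_a}P_{\bm k}$. I work in the chart $(k_1,k_2)$ and regard $\mathcal{Q}_{\bm k}$ as the positive-semidefinite Hermitian $2\times2$ matrix with $\mathcal{Q}^{ab}_{\bm k}=2\langle\partial_{k_a}u_{\bm k}|Q_{\bm k}|\partial_{k_b}u_{\bm k}\rangle$, as in Eq.~(\ref{def_QGC}). The first step is the elementary identity
\begin{equation}
Q_{\bm k}\,\partial_{k_a}P_{\bm k}=(Q_{\bm k}|\partial_{k_a}u_{\bm k}\rangle)\langle u_{\bm k}|,
\end{equation}
obtained from $P_{\bm k}=|u_{\bm k}\rangle\langle u_{\bm k}|$, the product rule, and $Q_{\bm k}|u_{\bm k}\rangle=0$. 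So it is enough to produce a smooth $\lambda_{\bm k}$ with $Q_{\bm k}|\partial_{k_1}u_{\bm k}\rangle=\lambda_{\bm k}\,Q_{\bm k}|\partial_{k_2}u_{\bm k}\rangle$. Corollary~\ref{coro_bound} gives, since the bound saturates and $\Omega_{\bm k}\neq0$, a null vector $\omega_{\bm k,b}$ of $\mathcal{Q}^{ab}_{\bm k}$. Putting $|\xi_{\bm k}\rangle\equiv\sum_b Q_{\bm k}|\partial_{k_b}u_{\bm k}\rangle\,\omega_{\bm k,b}$ and using $Q_{\bm k}^{2}=Q_{\bm k}$, one gets $\langle\xi_{\bm k}|\xi_{\bm k}\rangle=\tfrac12\,\omega_{\bm k,a}^{*}\mathcal{Q}^{ab}_{\bm k}\omega_{\bm k,b}=0$, so $\sum_b Q_{\bm k}|\partial_{k_b}u_{\bm k}\rangle\,\omega_{\bm k,b}=0$.

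Next I would use $\det g_{\bm k}\neq0$ to pin down $\lambda_{\bm k}$. Positive definiteness of the real symmetric matrix $g_{\bm k}$ forces $\mathcal{Q}^{11}_{\bm k}=g^{11}_{\bm k}>0$ and $\mathcal{Q}^{22}_{\bm k}=g^{22}_{\bm k}>0$. Hence both components of the null vector are nonzero: if $\omega_{\bm k,1}=0$ then $\omega_{\bm k,2}\neq0$ and the row $\mathcal{Q}^{2b}_{\bm k}\omega_{\bm k,b}=0$ gives $\mathcal{Q}^{22}_{\bm k}=0$, a contradiction, and symmetrically $\omega_{\bm k,2}=0$ would give $\mathcal{Q}^{11}_{\bm k}=0$. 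Therefore I may set $\lambda_{\bm k}\equiv-\omega_{\bm k,2}/\omega_{\bm k,1}$; the vanishing of $|\xi_{\bm k}\rangle$ then reads $Q_{\bm k}|\partial_{k_1}u_{\bm k}\rangle=\lambda_{\bm k}\,Q_{\bm k}|\partial_{k_2}u_{\bm k}\rangle$, which by the identity above is $Q_{\bm k}\partial_{k_1}P_{\bm k}=\lambda_{\bm k}\,Q_{\bm k}\partial_{k_2}P_{\bm k}$; rearranging with $\partial_{\bar z}\equiv(\partial_{k_1}-\lambda_{\bm k}\partial_{k_2})/2$ yields Eq.~(\ref{holo_coord}).

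The step I expect to be the real content --- the algebra above being routine --- is establishing that $\lambda_{\bm k}$ is \emph{smooth on the whole Brillouin zone}, not just defined pointwise up to the scaling ambiguity of the null vector. The way to do this is to write $\lambda_{\bm k}$ as a ratio of manifestly smooth, nowhere-degenerate quantities: solving the row $\mathcal{Q}^{2b}_{\bm k}\omega_{\bm k,b}=0$ gives $\lambda_{\bm k}=-\omega_{\bm k,2}/\omega_{\bm k,1}=\mathcal{Q}^{21}_{\bm k}/\mathcal{Q}^{22}_{\bm k}=(g^{12}_{\bm k}-\tfrac{i}{2}\Omega_{\bm k})/g^{22}_{\bm k}$, which is smooth because $\mathcal{Q}_{\bm k}$ is smooth and $g^{22}_{\bm k}$ is nowhere zero. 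One should also check that the other row $\mathcal{Q}^{1b}_{\bm k}\omega_{\bm k,b}=0$ returns the same function, i.e. $\mathcal{Q}^{21}_{\bm k}/\mathcal{Q}^{22}_{\bm k}=g^{11}_{\bm k}/\mathcal{Q}^{12}_{\bm k}$; cross-multiplying, this says $g^{11}_{\bm k}g^{22}_{\bm k}=|\mathcal{Q}^{12}_{\bm k}|^{2}=(g^{12}_{\bm k})^{2}+\Omega_{\bm k}^{2}/4$, which is precisely $\det g_{\bm k}=\Omega_{\bm k}^{2}/4$ --- the saturation hypothesis (recall $\det\mathcal{Q}_{\bm k}=\det g_{\bm k}-\Omega_{\bm k}^{2}/4$ from the determinant-bound proof). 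Thus $\det g_{\bm k}\neq0$ does double duty --- keeping the denominator away from zero and making the two candidate definitions of $\lambda_{\bm k}$ agree --- so $\lambda_{\bm k}$ is globally well defined and smooth, and is in general $\bm k$-dependent.
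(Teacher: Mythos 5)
Your proof is correct. Note that the paper does not actually prove this theorem itself --- its ``proof'' is a citation to the K\"ahler-band references --- so your argument is a genuinely self-contained alternative, and a good one: it reduces the statement to elementary $2\times2$ linear algebra. The chain of steps all checks out: $Q_{\bm k}\partial_{k_a}P_{\bm k}=(Q_{\bm k}|\partial_{k_a}u_{\bm k}\rangle)\langle u_{\bm k}|$ follows from the product rule and $Q_{\bm k}|u_{\bm k}\rangle=0$; the vanishing of $|\xi_{\bm k}\rangle$ follows because its norm is the null-vector contraction of the positive-semidefinite tensor $\mathcal{Q}_{\bm k}$ (here you implicitly use $Q_{\bm k}^{\dagger}Q_{\bm k}=Q_{\bm k}$, which is fine); and positive definiteness of $g_{\bm k}$ (which does follow from $\det g_{\bm k}\neq0$ plus positive semidefiniteness of the real part of a PSD Hermitian matrix) correctly rules out a vanishing component of the null vector. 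The closed-form expression $\lambda_{\bm k}=(g^{12}_{\bm k}-\tfrac{i}{2}\Omega_{\bm k})/g^{22}_{\bm k}$ is the right way to get global smoothness and gauge invariance in one stroke, since it avoids any pointwise choice of null vector, and your consistency check between the two rows is exactly the statement $\det\mathcal{Q}_{\bm k}=0$, i.e.\ the saturation hypothesis. One cosmetic remark: the factor of $2$ you carry from Eq.~(\ref{def_QGC}) is immaterial (a null vector of $\mathcal{Q}$ is a null vector of $2\mathcal{Q}$), and your $\lambda_{\bm k}$ reproduces, up to normalization, the ratio $-\omega_{\bm k,2}/\omega_{\bm k,1}$ of the complex structure in Corollary~\ref{coro_bound}, which is the bridge to the rest of the paper's construction.
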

\begin{proof}
See Refs.~(\onlinecite{kahlerband1,kahlerband2}) for details. Discussions there also include cases described by a single momentum which correspond to the Jacobi theta function representation discussed earlier.
\end{proof}

Therefore, we derived the necessary and sufficient condition for the existence of $\bm k-$space local complex coordinate: the saturation of the determinant bound Eq.~(\ref{detbound_SM}) and $\Omega_{\bm k}\neq0$ at that $\bm k$. When these two conditions are satisfied, the local complex structure is given by the null vector of the quantum geometric tensor $\mathcal{Q}^{ab}_{\bm k}\omega_{\bm k,b}=0$, and we have $k\equiv \omega^a_{\bm k}\bm k_a$. In order for the local complex coordinate systems to be glued together to form a global complex coordinate system, we demand that the Berry curvature is positive (or negative) definite and the determinant bound is saturated for every $\bm k-$point in the BZ. However, mapping to the usual Landau levels of uniform magnetic fields is still a highly nontrivial problem since the local complex structure $\omega_{\bm k,a}$ has $\bm k-$dependence, giving raise to nontrivial $\bm k-$space curvature. Therefore, we assume a stronger condition: the $\omega_{\bm k,a}=\omega_a$ is $\bm k-$independent, which is equivalent as assuming the existence of a constant determinant one matrix $\omega^{ab}=\omega^{a*}\omega^b+\omega^a\omega^{b*}$ such that $g^{ab}_{\bm k}=\omega^{ab}\Omega_{\bm k}/2$ is true everywhere in the BZ. To conclude, this justifies the physical and mathematical motivation for the ideal conditions proposed as (i) and (ii) in the introduction part of the main text.

As noticed initially in Ref.~(\onlinecite{Martin_PositionMomentumDuality}), the wavefunction of an ideal flatbands have a $\bm k-$space holomorphic property:

\begin{theorem}
A single band in 2D with positive (or negative) Berry curvature whose quantum geometric tensor has a constant null vector $\omega^a$, if and only if the wavefunction is:
\begin{equation}
u_{\bm k}(\bm r) = N_{\bm k}\tilde{u}_k(\bm r),\label{holo_wf}
\end{equation}
{\it i.e.} the cell-periodic part of the Bloch wavefunction can be written as a holomorphic function of $k\equiv\omega^ak_a$ up to a normalization factor.
\end{theorem}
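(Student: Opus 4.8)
The plan is to prove the equivalence in both directions using the characterization from Theorem~\ref{theorm_holocoord} and Corollary~\ref{coro_bound}. Recall that having a constant null vector $\omega^a$ means $\mathcal{Q}^{ab}_{\bm k}\omega_b=0$ for all $\bm k$, which by Corollary~\ref{coro_bound} is equivalent to saturation of the determinant bound together with the complex structure being $\bm k$-independent. The natural object to track is the band projector $P_{\bm k}=|u_{\bm k}\rangle\langle u_{\bm k}|$, since it is gauge invariant; Eq.~(\ref{holo_coord}) says $Q_{\bm k}\partial_{\bar z}P_{\bm k}=0$ where now $\partial_{\bar z}=\tfrac12(\omega^{1}\partial_{k_2}-\omega^2\partial_{k_1})/(\cdots)$ is a \emph{fixed} antiholomorphic derivative in $k$ because $\omega^a$ is constant. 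The strategy is to translate this projector statement into a statement about a representative wavefunction, being careful about the $U(1)$ gauge freedom $|u_{\bm k}\rangle\to e^{i\chi_{\bm k}}|u_{\bm k}\rangle$.

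First I would establish the forward direction. Assume $u_{\bm k}(\bm r)=N_{\bm k}\tilde u_k(\bm r)$ with $\tilde u_k$ holomorphic in $k=\omega^ak_a$ and $N_{\bm k}$ real positive (the normalization). Then $|\partial_{\bar z}u_{\bm k}\rangle = (\partial_{\bar z}\log N_{\bm k})|u_{\bm k}\rangle$, because $\partial_{\bar z}\tilde u_k=0$ by holomorphy. Hence $Q_{\bm k}|\partial_{\bar z}u_{\bm k}\rangle=0$ since $Q_{\bm k}|u_{\bm k}\rangle=0$. Expanding $\partial_{\bar z}$ in the $k_1,k_2$ derivatives and using $\mathcal{Q}^{ab}_{\bm k}=2\langle\partial^a u|Q_{\bm k}|\partial^b u\rangle$ from Eq.~(\ref{def_QGC}), one contracts and finds $\mathcal{Q}^{ab}_{\bm k}\omega_b=0$ — the constant null vector. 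A small point to handle: the derivative $\partial_{\bar z}$ of $N_{\bm k}$ is generally nonzero and complex, but it multiplies $|u_{\bm k}\rangle$, which $Q_{\bm k}$ annihilates, so it drops out cleanly; I would spell this out since it is the crux of why only the \emph{holomorphic} part matters.

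For the converse, assume $\mathcal{Q}^{ab}_{\bm k}\omega_b=0$ for all $\bm k$ with $\omega^a$ constant. By Theorem~\ref{theorm_holocoord} (with the global complex coordinate now simply $k=\omega^ak_a$ since $\omega$ is constant), we have $Q_{\bm k}\partial_{\bar z}P_{\bm k}=0$. I would unpack this: writing $P_{\bm k}=|u_{\bm k}\rangle\langle u_{\bm k}|$ and acting with $Q_{\bm k}$ on the left and $|u_{\bm k}\rangle$ on the right gives $Q_{\bm k}|\partial_{\bar z}u_{\bm k}\rangle\langle u_{\bm k}|u_{\bm k}\rangle=0$, hence $Q_{\bm k}|\partial_{\bar z}u_{\bm k}\rangle=0$, i.e. $|\partial_{\bar z}u_{\bm k}\rangle$ is parallel to $|u_{\bm k}\rangle$: $|\partial_{\bar z}u_{\bm k}\rangle=f_{\bm k}|u_{\bm k}\rangle$ for some scalar function $f_{\bm k}$. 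This is a $\bar\partial$-equation; the task is to gauge it away. Define $\tilde u_k(\bm r)=e^{-g_{\bm k}}u_{\bm k}(\bm r)$ and demand $\partial_{\bar z}\tilde u_k=0$, which forces $\partial_{\bar z}g_{\bm k}=f_{\bm k}$. Since the BZ is a compact Riemann surface (a torus) the inhomogeneous Cauchy–Riemann / $\bar\partial$-Poincaré problem is solvable locally and, after accounting for the line-bundle twist encoded by the boundary phases $\phi_{k,b}$, globally up to the holomorphic transition functions — this is exactly how the Chern number obstructs $\tilde u_k$ from being globally single-valued, consistent with the main-text discussion around Eq.~(\ref{boundaryintegral}). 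Solving for $g_{\bm k}$ and setting $N_{\bm k}=|e^{g_{\bm k}}|$ (absorbing the phase of $e^{g_{\bm k}}$ into the gauge of $u_{\bm k}$) yields the claimed form $u_{\bm k}=N_{\bm k}\tilde u_k$.

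The main obstacle is the converse, specifically solving $\partial_{\bar z}g_{\bm k}=f_{\bm k}$ in a way that respects the nontrivial topology: because $\mathcal{C}=1\neq0$, there is no \emph{globally} holomorphic nonvanishing section, so $\tilde u_k$ necessarily has either zeros in the bulk or nontrivial boundary phases $\phi_{k,b}$ (both allowed by the statement, which only claims holomorphy up to normalization, not global regularity). I would argue that the $\bar\partial$-problem is locally always solvable (Dolbeault/Cauchy), patch the local solutions, and identify the obstruction to a global single-valued solution with precisely the line-bundle structure already fixed by the boundary condition — so the theorem's conclusion ``holomorphic up to a normalization factor'' is the best one can ask for, and it is achieved. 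The forward direction and the reduction $Q_{\bm k}|\partial_{\bar z}u_{\bm k}\rangle=0$ are routine; the honest work is the $\bar\partial$-solvability argument and checking it meshes with the quasi-periodic $\bm k$-space boundary conditions of Eqs.~(\ref{defphikb}) and (\ref{defphikb_prime}).
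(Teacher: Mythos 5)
Your proposal is correct and follows essentially the same route as the paper: the forward direction by contracting the constant antiholomorphic derivative with the quantum geometric tensor (which annihilates $|u_{\bm k}\rangle$ via $Q_{\bm k}$), and the converse by invoking Theorem~\ref{theorm_holocoord} to get $\partial_{\bar k}|u_{\bm k}\rangle=\lambda_{\bm k}|u_{\bm k}\rangle$ and then absorbing $\lambda_{\bm k}$ into a normalization and phase. The only difference is that you spell out the local $\bar\partial$-solvability of $\partial_{\bar k}g_{\bm k}=f_{\bm k}$ and its compatibility with the line-bundle topology, a step the paper simply asserts by writing $\lambda_{\bm k}=\partial_{\bar k}\log\left(N_{\bm k}e^{i\phi_{\bm k}}\right)$.
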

\begin{proof}
We first review the proof that wavefunction in Eq.~(\ref{holo_wf}) saturates the bound noticed in Ref.~(\onlinecite{Martin_PositionMomentumDuality}). The quantum geometric tensor can be expressed as the follows for any unnormalized cell-periodic function \cite{Grisha_TBG2}:
\begin{equation}
\mathcal{Q}^{ab}_{\bm k} = {N}^2_{\bm k}\left(\langle\partial_{\bm k}^a\tilde{u}_{\bm k}|\partial_{\bm k}^b\tilde{u}_{\bm k}\rangle\right) - {N}^4_{\bm k}\left(\langle\partial_{\bm k}^a\tilde{u}_{\bm k}|\tilde{u}_{\bm k}\rangle\langle \tilde{u}_{\bm k}|\partial_{\bm k}^b\tilde{u}_{\bm k}\rangle\right).\label{def_Q_unnorm}
\end{equation}

If $\tilde{u}_{\bm k}$ is holomorphic in $k$, by definition we have $\omega_a \partial^a_{\bm k}|\tilde u_{\bm k}\rangle = 0$. Contracting this complex structure with the quantum geometric tensor gives:
\begin{equation}
\omega^*_a\mathcal{Q}^{ab}_{\bm k} = \mathcal{Q}_{\bm k}^{ab}\omega_b = 0.
\end{equation}

To see the inverse, note that if the band has positive (or negative) Berry curvature and constant complex structure, then we have $Q_{\bm k}\partial_{\bar k}P_{\bm k}=0$ following Theorem~\ref{theorm_holocoord}. In a single band, this means $\partial_{\bar k}P_{\bm k}$ must be proportional to $P_{\bm k}$ itself, {\it i.e.} $\partial_{\bar k}|u_{\bm k}\rangle=\lambda_{\bm k}|u_{\bm k}\rangle$ for some prefactor $\lambda_{\bm k}$. If $\lambda_{\bm k}=0$ we finished the proof and $u_{\bm k} = \tilde{u}_k$. If not, one can adjust the normalization $N_{\bm k}$ and the wavefunction's phase factor $\phi_{\bm k}$ to have $\lambda_{\bm k}\equiv\partial_{\bar k}\log\left(N_{\bm k}\exp(i\phi_{\bm k})\right)$, such that $\tilde{u}_k\equiv u_{\bm k}/\left(N_{\bm k}\exp(i\phi_{\bm k})\right)$.
\end{proof}

We conclude this section by collecting the results discussed in this section:
\begin{theorem}\label{ideal_property}
For a single band in 2D, if for all $\bm k$ the following conditions are satisfied: (i) $\Omega_{\bm k}\neq0$, (ii) $\exists~\omega_a\neq0$ {\it s.t.} $\mathcal{Q}^{ab}_{\bm k}\omega_b=0$, then its cell-periodic wavefunction can be chosen as a function that is holomorphic in $k$ as shown in Eq.~(\ref{holo_wf}), where the complex coordinate $k\equiv\omega^a\bm k_a$. Furthermore, we have:
\begin{equation}
g_{\bm k}^{ab} = \frac{1}{2}\left(\omega^a\omega^{b*} + \omega^{a*}\omega^b\right)\Omega_{\bm k},
\end{equation}
which implies $\det g_{\bm k}=\Omega^2_{\bm k}/4$ automatically.
\end{theorem}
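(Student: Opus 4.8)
The plan is to assemble Theorem~\ref{ideal_property} entirely from the statements already proved in this section, so that the argument reduces to checking that the two hypotheses (i) $\Omega_{\bm k}\neq0$ and (ii) existence of a constant null vector $\omega_a$ of $\mathcal{Q}^{ab}_{\bm k}$ feed correctly into the earlier machinery. First I would observe that condition (ii), namely $\mathcal{Q}^{ab}_{\bm k}\omega_b=0$ with $\omega_a$ independent of $\bm k$, together with $\mathcal{Q}_{\bm k}\geq0$ (from Eq.~(\ref{def_QGC})), forces the smallest eigenvalue of the positive semidefinite matrix $\mathcal{Q}^{ab}_{\bm k}$ to vanish; hence $\det\mathcal{Q}_{\bm k}=\det g_{\bm k}-\Omega_{\bm k}^2/4=0$, so the determinant bound Eq.~(\ref{detbound_SM}) saturates at every $\bm k$. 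Combined with (i), Corollary~\ref{coro_bound} then applies at each point and yields the decomposition Eq.~(\ref{omegagasw}) of $g^{ab}_{\bm k}$ and $i\epsilon^{ab}\Omega_{\bm k}/|\Omega_{\bm k}|$ in terms of the null vector; because $\omega_a$ is the \emph{same} vector everywhere and $\Omega_{\bm k}$ never vanishes, $\Omega_{\bm k}$ cannot change sign, so WLOG $\Omega_{\bm k}>0$ throughout and Eq.~(\ref{omegagasw}) gives directly $g_{\bm k}^{ab}=\tfrac12(\omega^a\omega^{b*}+\omega^{a*}\omega^b)\Omega_{\bm k}$. Taking determinants of this $2\times2$ identity and using $\omega_a\omega^a=0$, $\omega_a^*\omega^a=1$ (equivalently that $\omega^{a*}\omega^b+\omega^a\omega^{b*}$ is unimodular, as recorded after Eq.~(\ref{defgepsilon})) gives $\det g_{\bm k}=\Omega_{\bm k}^2/4$ automatically, which is the last assertion.

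For the holomorphicity claim I would invoke the immediately preceding theorem (Eq.~(\ref{holo_wf})): its hypothesis is precisely ``positive (or negative) Berry curvature and a constant null vector of the quantum geometric tensor,'' which is exactly (i)$+$(ii) once we have fixed the sign of $\Omega_{\bm k}$ as above. That theorem then provides a gauge and normalization in which $u_{\bm k}(\bm r)=N_{\bm k}\tilde u_k(\bm r)$ with $\tilde u_k$ holomorphic in $k\equiv\omega^a\bm k_a$. So the structure of the write-up is: (1) (ii) $\Rightarrow$ determinant bound saturates; (2) apply Corollary~\ref{coro_bound} pointwise to get Eq.~(\ref{omegagasw}); (3) constancy of $\omega_a$ plus (i) fixes the global sign of $\Omega_{\bm k}$, converting Eq.~(\ref{omegagasw}) into the stated $g$--$\Omega$ relation; (4) take the determinant to get $\det g_{\bm k}=\Omega_{\bm k}^2/4$; (5) cite Eq.~(\ref{holo_wf}) for the holomorphic form of $u_{\bm k}$.

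I do not expect a genuine obstacle here, since every ingredient has been established earlier in the section; the only point requiring a little care is step (3), the global consistency of the sign of $\Omega_{\bm k}$ and of the branch of $\omega_a$. Corollary~\ref{coro_bound} is stated pointwise and, strictly, only determines $\omega_{\bm k,a}$ up to a phase and the overall sign ambiguity $\Omega_{\bm k}\to-\Omega_{\bm k}$ is handled there by the WLOG $\Omega_{\bm k}>0$; to upgrade this to a statement ``for all $\bm k$'' I would note that continuity of $\Omega_{\bm k}$ together with hypothesis (i) ($\Omega_{\bm k}\neq0$ everywhere on the connected BZ) prevents any sign change, and that with $\omega_a$ fixed once and for all the relation Eq.~(\ref{omegagasw}) holds with a single consistent sign. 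After that the determinant identity $\det g_{\bm k}=\Omega_{\bm k}^2/4$ is a one-line $2\times2$ computation using $\omega_a\omega^a=0$, and no further estimates are needed.
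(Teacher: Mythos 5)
Your proposal is correct and follows essentially the same route as the paper: Theorem~\ref{ideal_property} is stated there as a summary ``collecting the results discussed in this section,'' i.e.\ it is assembled from the positive semidefiniteness of $\mathcal{Q}_{\bm k}$, Corollary~\ref{coro_bound}, and the holomorphic-wavefunction theorem around Eq.~(\ref{holo_wf}), exactly as you do. Your extra care in step (3) about the global sign of $\Omega_{\bm k}$ and the determinant computation via $\omega_a\omega^a=0$, $\omega_a^*\omega^a=1$ is consistent with (and slightly more explicit than) the paper's implicit argument.
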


\subsection{Wavefunction of C=1 ideal flatbands}
We have clarified the ideal flatband conditions. In this section we derive the general form of the ideal flatband wavefunction when the Chern number $\mC=1$.
\begin{theorem}\label{lemma1}
For an ideal band with non-zero Chern number, the normalization factor cannot be periodic if the wavefunction is smooth in the BZ:
\begin{equation}
N_{\bm k}\neq N_{\bm k+\bm b}.
\end{equation}
where $\bm b$ is a reciprocal lattice vector. Here $N_{\bm k}$ is defined in Eq.~(\ref{holo_wf}).
\end{theorem}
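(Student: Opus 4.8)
The plan is to argue by contradiction. Suppose the normalization factor $N_{\bm k}$ of the holomorphic form in Eq.~(\ref{holo_wf}) is periodic, $N_{\bm k}=N_{\bm k+\bm b}$ for the primitive reciprocal vectors $\bm b=\bm b_1,\bm b_2$, while $u_{\bm k}=N_{\bm k}\tilde u_k$ is smooth on the BZ; I will show this forces $\mC=0$. The bridge between $N_{\bm k}$ and the Chern number is the $\bm k$-space boundary phase $\phi_{k,b}$ defined by $\tilde u_{k+b}(\bm r)=e^{i\phi_{k,b}}e^{-i\bm b\cdot\bm r}\tilde u_{k}(\bm r)$, for which Eq.~(\ref{boundaryintegral}) already expresses $\mC$. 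So I would first relate $N_{\bm k}$ to $\phi_{k,b}$, then argue that periodicity of $N_{\bm k}$ trivializes $\phi_{k,b}$, and finally read off $\mC=0$ from Eq.~(\ref{boundaryintegral}).

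For the first step, normalization of the Bloch state gives $N_{\bm k}^{-2}=\langle\tilde u_k|\tilde u_k\rangle=\int_{\mathrm{u.c.}}|\tilde u_k(\bm r)|^2\,d^2r$, which under the smoothness hypothesis is a smooth, strictly positive function of $\bm k$, so $\log N_{\bm k}$ is a genuine smooth real function. Taking the squared modulus of the boundary condition and using that $e^{-i\bm b\cdot\bm r}$ is cell-periodic of unit modulus, $|\tilde u_{k+b}(\bm r)|^2=e^{-2\,\mathrm{Im}\,\phi_{k,b}}|\tilde u_k(\bm r)|^2$, so integrating over the unit cell,
\begin{equation}
N_{\bm k+\bm b}=e^{\,\mathrm{Im}\,\phi_{k,b}}\,N_{\bm k}.
\end{equation}
Thus $N_{\bm k}$ is periodic along $\bm b$ if and only if $\mathrm{Im}\,\phi_{k,b}\equiv 0$. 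Now $\phi_{k,b}$ is holomorphic in $k$ (both $\tilde u_{k+b}$ and $\tilde u_k$ are), and a holomorphic function with identically vanishing imaginary part is, by the Cauchy--Riemann equations, a real constant. Hence periodicity forces $\phi_{k,b}$ to be $k$-independent for $\bm b=\bm b_1,\bm b_2$, and substituting into Eq.~(\ref{boundaryintegral}) the four terms cancel in pairs ($\phi_{k_0+b_1,b_2}=\phi_{k_0,b_2}$ and $\phi_{k_0,b_1}=\phi_{k_0+b_2,b_1}$), giving $\mC=0$ — the contradiction. Equivalently, one can run the argument through the K\"ahler-potential picture underlying Eq.~(\ref{expressionOmega}): for a holomorphically normalized band $\Omega_{\bm k}$ is, up to a positive constant, the $\bm k$-plane Laplacian of $\log N_{\bm k}$, so a smooth periodic $N_{\bm k}$ would make $2\pi\mC=\int_{\mathrm{BZ}}\Omega_{\bm k}\,d^2k$ the integral of a Laplacian of a smooth periodic function over the closed torus, namely zero.

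The part requiring the most care is the bookkeeping for $\phi_{k,b}$, not any computation. I must treat $\phi_{k,b}$ as a single-valued holomorphic function of $k$ rather than a phase defined only modulo $2\pi$: this is legitimate because $\tilde u_k$ admits a quasi-periodic holomorphic extension beyond the fundamental domain — the same extension under which Eq.~(\ref{boundaryintegral}) was obtained — on which a continuous, hence holomorphic, branch of $\phi_{k,b}$ exists, making ``holomorphic plus real-valued implies constant'' unambiguous and global. I should also confirm that $\mathrm{Im}\,\phi_{k,b}$ carries no residual $\bm r$-dependence; it cannot, since the boundary condition postulates $\phi_{k,b}$ as a function of $k$ and $\bm b$ alone, which is consistent precisely because $|\tilde u_{k+b}(\bm r)|/|\tilde u_k(\bm r)|$ is forced to be $\bm r$-independent. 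Finally, the smoothness hypothesis is exactly what ensures no bulk singularities of $\tilde u_k$ in $\bm k$, so that all the Chern winding is captured by the boundary phases entering Eq.~(\ref{boundaryintegral}); without it the conclusion genuinely fails, because the winding could hide in a bulk zero of the wavefunction instead.
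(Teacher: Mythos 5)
Your proof is correct and follows essentially the same route as the paper's: assume periodicity of $N_{\bm k}$, deduce that the boundary phase $\phi_{k,b}$ is real, invoke its holomorphy in $k$ to conclude it is constant, and then read off $\mC=0$ from the boundary integral — the only difference being that you make the final step explicit via Eq.~(\ref{boundaryintegral}) where the paper states it in words. The extra care you take with the single-valuedness of $\phi_{k,b}$ and the alternative K\"ahler-potential argument are welcome but not a different method.
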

\begin{proof}
We prove this by contradiction. We assume $N_{\bm k}$ is periodic and show it necessarily implies $C=0$. As in the main text, we define $\tilde\phi_{\bm k,\bm b}$ as the phase factor when the Bloch function is translated in $\bm k-$space by a reciprocal lattice vector:
\begin{equation}
\psi_{\bm k+\bm b}(\bm r) = e^{i\tilde\phi_{\bm k,\bm b}}\psi_{\bm k}(\bm r).\label{psikbcd}
\end{equation}

Since $\psi_{\bm k}$ is normalized, $\tilde\phi_{\bm k,\bm b}$ is a real phase factor. The boundary condition of $\tilde{u}_{\bm k}$ is straightforwardly obtained by assuming $N_{\bm k}$ periodic:
\begin{equation}
\frac{\tilde{u}_{k+b}(\bm r)}{\tilde{u}_{k}(\bm r)}e^{-i\tilde\phi_{\bm k,\bm b}} = e^{-i\bm b\cdot\bm r}.\label{bctildeu}
\end{equation}

Since both $\tilde{u}_{k}$ and $\tilde{u}_{k+b}$ are holomorphic in $k$, phase $\tilde\phi_{\bm k,\bm b}$ must be a holomorphic function of $k$ as well. Then $\tilde\phi_{\bm k,\bm b}$ must be a constant because it is also real. However, constant boundary condition of Bloch function is inconsistent with non-zero Chern number suppose the Bloch function is smooth in the BZ bulk (we implicitly assume this is true). Therefore we reached a contradiction.
\end{proof}

For comparison, Ref.~(\onlinecite{Martin_PositionMomentumDuality}) considered periodic wavefunctions. Periodic holomorphic functions are described by the Weierstrass elliptic functions with at least a second order poles in the Brillouin zone bulk. Therefore discussions there were constrained for flatbands with Chern number $C\geq2$.

Theorem~\ref{lemma1} shows that we need to replace the real phase factor $\tilde\phi$ of Eq.~(\ref{bctildeu}) with a complex phase factor $\Im\phi_{\bm k,\bm b}\neq0$ when $C\neq0$:
\begin{equation}
\frac{\tilde{u}_{k+b}(\bm r)}{\tilde{u}_{k}(\bm r)}e^{-i\phi_{\bm k,\bm b}} = e^{-i\bm b\cdot\bm r},\quad\bar{\partial}_k\phi_{\bm k,\bm b} = 0.\label{boundcondtildeu}
\end{equation}
where the nonzero imaginary part of $\phi_{\bm k,\bm b}$ determines how $N_{\bm k}$ decays when translating in $\bm k-$space:
\begin{equation}
N_{\bm k+\bm b}/N_{\bm k} = \exp\left(\Im\phi_{\bm k,\bm b}\right),
\end{equation}
and the real part of $\phi_{\bm k,\bm b}$ gives to the boundary condition of the Bloch function:
\begin{equation}
\Re{\phi_{\bm k,\bm b}} = \tilde{\phi}_{\bm k,\bm b}.
\end{equation}

Since the complex phase $\phi_{\bm k,\bm b}$ is holomorphic in momentum, we wrote it as $\phi_{k,b}$ from now on. The $\phi_{k,-b}$ is not independent from $\phi_{k,b}$: by inverting Eq.~(\ref{boundcondtildeu}), we find:
\begin{equation}
\phi_{k+b,-b} = -\phi_{k, b},\mod2\pi.\label{phiinvertk}
\end{equation}

We then seek a relation between Chern number $\mC$ and the complex phase $\phi_{k, b}$. To do this, we first define a quantity $\mC'(\bm r)$:
\begin{equation}
\mC'(\bm r) = \frac{1}{2\pi i}\oint dk~\partial_k\ln\tilde{u}_{k}(\bm r),\label{boundaryintzero}
\end{equation}
which measures the phase winding of $\tilde{u}_k$ in the BZ. By the argument principle of complex analysis, we know $\mC'(\bm r)$ must be an integer $\mC'(\bm r)\in\mathbb{Z}$ for any fixed position $\bm r$. We now prove this integer is the Chern number:
\begin{theorem}\label{lemma2}
The $C'(\bm r)$ is independent on $\bm r$ and equals to the Chern number $\mC$.
\end{theorem}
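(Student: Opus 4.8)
The plan is to evaluate the winding integral in Eq.~(\ref{boundaryintzero}) directly over the four edges of the Brillouin-zone fundamental domain, showing in one stroke that (i) the $\bm r$-dependence cancels, leaving the corner formula Eq.~(\ref{boundaryintegral}), and (ii) that corner formula equals the first Chern number.

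First I would split $\partial\mathrm{BZ}$ (the dashed loop in Fig.~\ref{illustration}(a)) into the segments $k_0\to k_0+b_1$, $k_0+b_1\to k_0+b_1+b_2$, $k_0+b_1+b_2\to k_0+b_2$, $k_0+b_2\to k_0$, and pair opposite edges. On each pair, the boundary condition Eq.~(\ref{boundcondtildeu}) gives $\partial_k\ln\tilde u_{k+b}(\bm r)-\partial_k\ln\tilde u_k(\bm r)=i\,\partial_k\phi_{k,b}$, where crucially the factor $e^{-i\bm b\cdot\bm r}$ drops because $\bm b\cdot\bm r$ does not depend on $k$. Hence the difference of the two opposite-edge integrals is $\frac{1}{2\pi}\phi_{k,b}$ evaluated at the two shared corners, by the fundamental theorem of calculus applied to the holomorphic function $\phi_{k,b}$. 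Assembling the four contributions with the correct orientations reproduces $\mC'(\bm r)=-\frac{1}{2\pi}\big(\phi_{k_0+b_1,b_2}-\phi_{k_0,b_2}+\phi_{k_0,b_1}-\phi_{k_0+b_2,b_1}\big)$, which is visibly $\bm r$-independent; together with the fact (already noted via the argument principle) that $\mC'(\bm r)\in\mathbb Z$, this settles the first half of the claim. The $2\pi i$ branch ambiguities of $\ln\tilde u_k$ are harmless here precisely because the total is a priori an integer, so $\bmod\,2\pi$ relations such as Eq.~(\ref{phiinvertk}) cause no trouble.

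Second, to recognize this integer as $\mC=\frac{1}{2\pi}\int_{\mathrm{BZ}}\Omega_{\bm k}\,d^2k$, I would use the holomorphic factorization $u_{\bm k}=N_{\bm k}\tilde u_k$ from Theorem~\ref{ideal_property} and the standard statement that the Chern number of a line bundle over the BZ torus is the winding of its transition phases around $\partial\mathrm{BZ}$: Stokes' theorem rewrites $\frac{1}{2\pi}\int_{\mathrm{BZ}}\Omega_{\bm k}\,d^2k$ as a boundary integral of the Berry connection in an interior-smooth gauge, and evaluating it on the four edges using $\psi_{\bm k+\bm b}/\psi_{\bm k}=e^{i\Re\phi_{k,b}}$ and $N_{\bm k+\bm b}/N_{\bm k}=e^{\Im\phi_{k,b}}$ returns the same corner combination; the $\Im\phi$ pieces cancel because $N_{\bm k}$ is single-valued on the torus (so $\oint_{\partial\mathrm{BZ}}\nabla\ln N_{\bm k}=0$), leaving only $\Re\phi=\tilde\phi$ and a real answer, as required. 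Equivalently, and this simultaneously re-derives the $\bm r$-independence, for fixed generic $\bm r$ the function $k\mapsto\tilde u_k(\bm r)$ is (in the $\mC=1$ case treated here, pole-free) a holomorphic section of a line bundle over the BZ torus whose degree is $\mC$, since the extra twist $e^{-i\bm b\cdot\bm r}$ is $k$-independent and $e^{\Im\phi}$ is a positive gauge factor, neither affecting the degree; the argument principle then counts exactly $\mC$ zeros, i.e.\ $\mC'(\bm r)=\mC$.

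The main obstacle is bookkeeping rather than any delicate estimate: fixing the orientations and signs of the four edge integrals so that the $\phi$-combination emerges with the overall sign of Eq.~(\ref{boundaryintegral}), tracking the nonstandard conventions for $\bm A^a_{\bm k}$ and $\Omega_{\bm k}$ adopted here, and making the identification ``winding of transition phases $=\mC$'' precise, i.e.\ justifying that the individually $2\pi$-ambiguous edge contributions assemble into the honest integer $\frac{1}{2\pi}\int_{\mathrm{BZ}}\Omega_{\bm k}$. A minor point is the $\bm r$ for which $\tilde u_k(\bm r)$ vanishes on $\partial\mathrm{BZ}$: these form a measure-zero set, and one disposes of them by continuity of the integer-valued (hence locally constant) winding, or simply by the degree argument, which is insensitive to them.
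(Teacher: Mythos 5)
Your proposal is correct and follows essentially the same route as the paper: you obtain the corner formula Eq.~(\ref{quantizedphase}) from the quasi-periodic boundary condition (which makes the $\bm r$-independence manifest), and you identify the integer with $\mC$ via the boundary integral of the Berry connection whose transition phases are $\Re\phi_{k,b}=\tilde\phi_{\bm k,\bm b}$, closing with integrality so that $\mC=\Re\mC'=\mC'$. One small imprecision: $N_{\bm k}$ is \emph{not} single-valued on the torus (Theorem~\ref{lemma1} gives $N_{\bm k+\bm b}/N_{\bm k}=e^{\Im\phi_{k,b}}\neq 1$); the $\Im\phi$ pieces drop rather because $\ln N_{\bm k}$ is a smooth single-valued function on the fundamental domain, so the integral of its gradient around the closed boundary contour vanishes, equivalently because the Berry connection of the \emph{normalized} Bloch function shifts only by $\nabla\Re\phi_{k,b}$ under translation.
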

\begin{proof}
The value of $\mC'(\bm r)$ can be explicitly calculated using the integral contour shown in the main text:
\begin{equation}
\mC'(\bm r) = \frac{-1}{2\pi}\left(\phi_{k_0+b_1, b_2} - \phi_{k_0, b_2} + \phi_{k_0, b_1} - \phi_{k_0 + b_2, b_1}\right),\label{quantizedphase}
\end{equation}
where $k_0$ is the origin of the BZ, and equivalently the origin of the integral contour. From Eq.~(\ref{quantizedphase}), we find $\mC'(\bm r)$ is a constant independent on the parameter $\bm r$. Thereby we get $\mC'=\mC'(\bm r)$. To show $\mC'$ equals to the Chern number, we consider check the periodicity of the Berry connection:
\begin{equation}
A^a_{\bm k+\bm b} = A^a_{\bm k} - \partial_{\bm k}^a\tilde\phi_{\bm k,\bm b} = A^a_{\bm k} - \partial_{\bm k}^a\Re{\phi}_{k, b},
\end{equation}
where we used $\tilde{\phi}_{k,b}=\Re\phi_{k,b}$.

For smooth wavefunctions, the Chern number is given by the BZ boundary integral $\mC=\frac{1}{2\pi}\oint d\bm k_a\bm A^a_{\bm k}$. Performing Brillouin zone boundary integration, we find $\mC$ as:
\begin{equation}
\mC = -\frac{1}{2\pi}\Re\left(\phi_{k_0+b_1, b_2} - \phi_{k_0, b_2} + \phi_{k_0, b_1} - \phi_{k_0 + b_2, b_1}\right).\label{ck0}
\end{equation}

Comparing Eq.~(\ref{quantizedphase}) with Eq.~(\ref{ck0}), we find,
\begin{equation}
C=\Re C' = C'.
\end{equation}
\end{proof}

We now show how the complex phase $\phi_{k, b}$ is constrained by the Chern number:
\begin{theorem}\label{lemma3}
The complex phase factor $\phi_{k, b}$ must be a linear function of $k$, constrained by the Chern number. Without loss of generality, the boundary condition can be chosen as follows:
\begin{equation}
\phi_{k,b}=b^*(-ik-ib/2)+\pi.\label{complexphi}
\end{equation}
\end{theorem}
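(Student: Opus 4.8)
The plan is to show that $\phi_{k,b}$ is a linear function of $k$ and then fix its coefficients. The starting point is the fact that $\phi_{k,b}$ is holomorphic in $k$ (established in Eq.~(\ref{boundcondtildeu})) and the Chern-number constraint Eq.~(\ref{quantizedphase}), which holds for \emph{every} choice of BZ origin $k_0$. First I would differentiate the right-hand side of Eq.~(\ref{quantizedphase}) with respect to $k_0$: since the left-hand side, $\mC'=\mC$, is an integer independent of $k_0$, we get
\begin{equation}
\partial_{k_0}\phi_{k_0+b_1,b_2}-\partial_{k_0}\phi_{k_0,b_2}+\partial_{k_0}\phi_{k_0,b_1}-\partial_{k_0}\phi_{k_0+b_2,b_1}=0
\end{equation}
for all $k_0$. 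Writing $f_i(k)\equiv\partial_k\phi_{k,b_i}$, this says $f_2(k_0+b_1)-f_2(k_0)=f_1(k_0+b_2)-f_1(k_0)$, i.e. the $b_1$-difference of $f_2$ equals the $b_2$-difference of $f_1$. Combined with holomorphicity of each $f_i$, I would argue that each $f_i$ must be constant: a holomorphic function whose lattice-shift differences are themselves periodic-compatible in this cross way, together with the quasiperiodic structure inherited from $\tilde u_k$, leaves no room for nonconstant behavior. (Concretely, one can also invoke that $\tilde u_k$ itself, being holomorphic with only the prescribed boundary multiplier, is essentially a theta/sigma-type function, whose logarithmic derivative boundary jump is affine in $k$; but the cleanest route is the $k_0$-independence argument.) Hence $\phi_{k,b}$ is affine in $k$ for each $b$, say $\phi_{k,b}=\alpha_b k+\beta_b$.

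Next I would pin down $\alpha_b$ and $\beta_b$. The coefficient $\alpha_b$ is fixed by plugging the affine form back into the $k_0$-independent identity above: it forces a bilinear consistency relation between $\alpha_{b_1}$, $\alpha_{b_2}$ and the reciprocal vectors, and comparing with Eq.~(\ref{ck0}) (taking real parts) sets the overall normalization so that $\mC=1$. One also has the inversion relation Eq.~(\ref{phiinvertk}), $\phi_{k+b,-b}=-\phi_{k,b}\bmod 2\pi$, which constrains $\beta_b$ and the additive constant. Additivity of $\phi$ under $b\mapsto b+b'$ (composing two boundary translations, which must agree up to the magnetic-translation cocycle) then propagates the relation from the primitive vectors $b_{1,2}$ to all $b$. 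At this stage the form is determined up to a choice of gauge for the Bloch wavefunction — i.e. up to $k$-independent phase redefinitions $\tilde u_k\to e^{i\chi_k}\tilde u_k$ and the interpolating gauge parameter $\kappa$ discussed around Eq.~(\ref{PhikappaPhisigma}). Finally, I would exhibit the symmetric-gauge representative: recalling $z_k=-i\omega^a\bm k_a=-ik$ and the LLL boundary phase Eq.~(\ref{defphikb}), $\phi_{k,b}=\pi+b^*(z_k-ib/2)=\pi+b^*(-ik-ib/2)$, which is exactly Eq.~(\ref{complexphi}); verifying it satisfies both the $\mC=1$ constraint Eq.~(\ref{quantizedphase}) and the inversion relation Eq.~(\ref{phiinvertk}) closes the argument, and any other admissible $\phi_{k,b}$ differs from this one by a gauge transformation (``without loss of generality'').

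The main obstacle I anticipate is the linearity step — rigorously ruling out nonconstant holomorphic $f_i=\partial_k\phi_{k,b_i}$. The $k_0$-independence identity alone only says a certain combination of lattice differences vanishes; one needs to supplement it with the global structure of $\tilde u_k$ on the torus (its zeros, the quasiperiodic multiplier) to conclude $f_i$ is genuinely constant rather than merely, say, doubly quasiperiodic with matching jumps. I expect the paper handles this by appealing to the fact that a $\mC=1$ holomorphic section of the relevant line bundle over the BZ torus has exactly one zero and is therefore (up to normalization and gauge) the modified Weierstrass $\sigma$ function, whose boundary multiplier is precisely the affine exponent in Eq.~(\ref{complexphi}); the apparent freedom is then absorbed into the gauge choice. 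The remaining steps (fixing coefficients via Eqs.~(\ref{quantizedphase}), (\ref{ck0}), (\ref{phiinvertk}) and matching to Eq.~(\ref{defphikb})) are routine bookkeeping.
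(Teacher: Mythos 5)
Your overall strategy coincides with the paper's: use the $k_0$-independence of the Chern-number identity Eq.~(\ref{quantizedphase}) to force linearity of $\phi_{k,b}$, fix the linear coefficient by $\mC=1$ via Eq.~(\ref{ck0}) (with the residual freedom being the gauge parameter $\kappa$ of Eq.~(\ref{c1d1})), fix the constants by the inversion relation Eq.~(\ref{phiinvertk}) together with a shift of the $\bm k$-space origin, and match the result to the LLL boundary phase Eq.~(\ref{defphikb}). That part of your bookkeeping is fine.

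However, the linearity step as you execute it has a genuine gap, which you only partly anticipate. The single condition $f_2(k_0+b_1)-f_2(k_0)=f_1(k_0+b_2)-f_1(k_0)$, obtained by differentiating once in $k_0$, does \emph{not} rule out nonconstant holomorphic $f_i=\partial_k\phi_{k,b_i}$: for example $f_1(k)=\mu b_1 k$ and $f_2(k)=\mu b_2 k$ satisfy it identically for any $\mu$, since both sides equal $\mu b_1 b_2$. So ``holomorphicity plus this cross-difference relation leaves no room for nonconstant behavior'' is false as stated. The paper closes this hole differently from the fallback you guess: it Taylor-expands $\phi_{k,b_{1,2}}$ in $z_k$, writes out Eq.~(\ref{shiftBZ}), and demands that the coefficient of \emph{every} power of $z_{k_0}$ vanish \emph{for generic reciprocal vectors} $b_{1,2}$ (arbitrary aspect ratio and torus angle); this genericity requirement is what kills all coefficients $c_{(n)},d_{(n)}$ with $n\ge 2$, including the one-parameter family above, which survives on any single fixed lattice. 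Your alternative route --- uniqueness of a quasi-periodic holomorphic section with a single zero, i.e.\ the $\sigma$-function --- is sound in principle, but in the paper that argument is deployed \emph{after} Theorem~\ref{lemma3} to determine $\tilde u_k$ from the already-established boundary condition, not to establish linearity of $\phi_{k,b}$ in the first place. To make your version rigorous you would need either the paper's genericity argument or the full cocycle/section structure, not just the once-differentiated identity.
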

\begin{proof}
We expand the complex phase in terms of the holomorphic $k$. For notational consistent, we use $z_k=-ik$ as defined in Eq.~(\ref{defzzk}):
\begin{eqnarray}
\phi_{k, b_1} &=& \pi + c_{(0)} + c_{(1)} z_k + c_{(2)} z_k^2 + ...\\
\phi_{k, b_2} &=& \pi + d_{(0)} + d_{(1)} z_k + d_{(2)} z_k^2 + ...
\end{eqnarray}

Plugging these into the expression of Chern number Eq.~(\ref{ck0}), we have:
\begin{eqnarray}
\mC &=& \frac{1}{2\pi i}\left(c_{(1)} b_2 - d_{(1)} b_1\right)\label{shiftBZ}\\
&+& \frac{1}{2\pi}\bigg[\left(d_{(2)}b_1^2-c_{(2)}b_2^2\right)+2iz_{k_0}\left(d_{(2)}b_1-c_{(2)}b_2\right)\bigg] + ...\nonumber
\end{eqnarray}
where $z_{k_0} = -ik_0 = -i(\bm k_{0,x}+i\bm k_{0,y})/\sqrt{2}$.

Since the origin of the BZ $k_0$ is an arbitrary choice, the Chern number must be independent on $k_0$. This must be true for $b_{1,2}$ of generic geometry (including aspect ratio and torus angle). Therefore Eq.~(\ref{shiftBZ}) leads to the conditions that (i) all terms except the first line should vanishes identically for generic $b_{1,2}$, and (ii) the first line should equal to the Chern number $\mC=1$. Condition (i) implies $c_{(n)>1}=d_{(n)>1}=0$. The most general solution for (ii) is:
\begin{equation}
c_{(1)} = b_{1}^* - \kappa\cdot b_1,\quad d_{(1)} = b_{2}^* - \kappa\cdot b_2,\label{c1d1}
\end{equation}
where the parameter $\kappa$ is a gauge choice: we first set $\kappa=0$ and derive the $\sigma$ function wavefunction as a general solution; then nonzero $\kappa$ can be generated by the gauge transformation discussed in the first section.

The Chern number constrain (ii) is satisfied by Eq.~(\ref{c1d1}) by noticing that the fact that primitive reciprocal lattice vectors $\bm b_{1,2}$ span an area $S=1$ according to Eq.~(\ref{area}) and Eq.~(\ref{bvector}) (we have set $S=1$ as the unit of area). Therefore we have:
\begin{equation}
b_1^*b_2-b_1b_2^* = 2\pi i.
\end{equation}

We now try to fix the constant terms by choosing the origin of $\bm k-$space. We start by writing the four boundary conditions as follows:
\begin{eqnarray}
\phi_{k,b_1} &=& \pi + c_{(0)} + b^*_1\cdot z_k,\nonumber\\
\phi_{k,b_2} &=& \pi + d_{(0)} + b^*_2\cdot z_k,\nonumber\\
\phi_{k,-b_1} &=& \pi + c'_{(0)} - b^*_1\cdot z_k,\nonumber\\
\phi_{k,-b_2} &=& \pi + d'_{(0)} - b^*_2\cdot z_k.\nonumber
\end{eqnarray}
where the constant terms are constrained by Eq.~(\ref{phiinvertk}) as follows (modulo $2\pi$):
\begin{equation}
c_{(0)}+c'_{(0)} = |b_1|^2/i,\quad d_{(0)}+d'_{(0)} = |b_2|^2/i.
\end{equation}

Shifting the origin of the $\bm k-$space coordinate system tunes the differences $c_{0}-c'_{(0)}$ and $d_{0}-d'_{(0)}$. We can use this degrees of freedom to get $c_{(0)}=c'_{(0)}$ and $d_{(0)}=d'_{(0)}$. As a result, we get the boundary condition shown in Eq.~(\ref{complexphi}). From now on, we set the origin of the BZ at the point where the boundary condition Eq.~(\ref{complexphi}) is satisfied.
\end{proof}

Since the wavefunction $\tilde{u}_k$ is holomorphic in $k$, it is uniquely determined by its $\bm k-$space boundary condition \cite{haldanetorus1}. Note that we have shown the LLL wavefunction satisfies the $\bm k-$space boundary Eq.~(\ref{complexphi}) as calculated in Eq.~(\ref{kbc_u}). This leads to the conclusion that the ideal flatband with $\mC=1$ has the LLL wavefunction character and more explicitly its wavefunction is:
\begin{eqnarray}
\psi_{\bm k}(\bm r) &=& N_{\bm k}\tilde{u}_k(\bm r)\exp(i\bm k\cdot\bm r),\nonumber\\
&=& \mathcal{N}_{\bm k}\mathcal{B}(\bm r)\Phi_{\bm k}(\bm r),\label{ifbwf}
\end{eqnarray}
where the $\bm k-$independent function $\mathcal{B}(\bm r)$ is model-dependent, but the LLL character is universal. Note that the LLL wavefunction $\Phi_{\bm k}(\bm r)$ has a factor $\exp(-|k|^2/2)$. We hence have the two normalization factors $N_{\bm k}$ and $\mathcal{N}_{\bm k}$ related by:
\begin{equation}
N_{\bm k} = \mathcal{N}_{\bm k}\exp(-|k|^2/2),\label{twonorm}
\end{equation}
where $\mathcal{N}_{\bm k}$ is periodic: $\mathcal{N}_{\bm k}=\mathcal{N}_{\bm k+\bm b}$.

Performing a gauge transformation as discussed in the previous section generates nonzero $\kappa$ in Eq.~(\ref{c1d1}), which gives wavefunction in other representation: for instance, the Jacobi theta function representation is derived in Landau gauge, which has a periodic boundary condition in one direction ($c_{(1)}=0$ or $d_{(1)}=0$) and a quasi-periodic boundary condition in the other. The proof of the uniqueness of the $\mC=1$ ideal flatband wavefunction essentially follows from the fact that the possible forms of quasi-periodic holomorphic functions with a single zero in the periodic domain is highly constrained.

\subsection{Normalization factor and Kahler potential}
In this section, we discuss the band geometry of the wavefunction Eq.~(\ref{ifbwf}). We show the $\bm k-$space normalization factor $\mathcal{N}_{\bm k}$ controls the fluctuation of band geometry, and derive its explicit expression.

We define the holomorphic and anti-holomorphic Berry connection as,
\begin{eqnarray}
A_k &\equiv& \omega_a^*\bm A^a_{\bm k} = i\langle u_{\bm k}|\partial_ku_{\bm k}\rangle,\nonumber\\
\bar{A}_k &\equiv& \omega_a\bm A^a_{\bm k} = i\langle u_{\bm k}|\bar{\partial}_ku_{\bm k}\rangle.
\end{eqnarray}

By using the holomorphic property of the cell-periodic wavefunction, we get,
\begin{eqnarray}
\bar{A}_k &=& i\int_{\bm r} N_{\bm k}\tilde{u}^*_{k}(\bm r) \bar{\partial}_k\left(N_{\bm k}\tilde{u}_{k}(\bm r)\right),\nonumber\\
&=& i\left(N^{-1}_{\bm k}\bar{\partial}_kN_{\bm k}\right)\int_{\bm r} N^2_{\bm k} \tilde{u}^*_{k}(\bm r)\tilde{u}_k(\bm r),\nonumber\\
&=& i\bar{\partial}_k\log N_{\bm k}.
\end{eqnarray}

Using Eq.~(\ref{twonorm}), we get the holomorphic and anti-holomorphic parts of the Berry connections as:
\begin{eqnarray}
\bar{A}_k &=& -\frac{i}{2}k + i\bar{\partial}_k\log\mathcal{N}_{\bm k} = \left(A_k\right)^*.\label{holomorphicA}
\end{eqnarray}

In holomorphic coordinates, the Berry curvature $\Omega_{\bm k} \equiv \epsilon_{ab}\partial^a_{\bm k}\bm A^a_{\bm k}$ is expressed as,
\begin{eqnarray}
\Omega_{\bm k} &=& -i\left(\partial_k\bar{A}_k - \bar{\partial}_kA_k\right),\nonumber\\
&=& -1 + \left(\partial_k\bar{\partial}_k+\bar{\partial}_k\partial_k\right)\log\mathcal{N}_{\bm k},\nonumber\\
&=& -1 + \tilde{g}_{ab}\partial_{\bm k}^a\partial_{\bm k}^b\log\mathcal{N}_{\bm k},\label{OmegaN}
\end{eqnarray}
where $\tilde{g}_{ab} = \omega_a^*\omega_b + \omega_a\omega_b^*$ is the unimodular part of the Fubini-Study metric. This shows that the logarithm of the normalization factor is the $\bm k-$space K\"ahler potential that controls the fluctuation of Berry curvature, as discussed in the main text.

We now derive explicit expressions for band geometry. Following the notations used in the main text, we wrote the wavefunction and its Fourier modes are:
\begin{equation}
\psi_{\bm k}(\bm r) = \mathcal{N}_{\bm k}\mathcal{B}(\bm r)\Phi_{\bm k}(\bm r),\quad |\mathcal{B}(\bm r)|^2 = \sum_{\bm b}w_{\bm b}e^{i\bm b\cdot\bm r}.
\end{equation}
where $\Phi_{\bm k}(\bm r)$ is the quantum Hall wavefunction. The form factor determines the overlap of two wavefunction in $\bm k-$space. Using magnetic translation algebra, they are calculated as follows:
\begin{eqnarray}
&&\langle u_{\bm k_1}|u_{\bm k_2+\bm b}\rangle = \int_{\bm r} e^{i(\bm k_1-\bm k_2-\bm b)\cdot\bm r}\psi^*_{\bm k_1}(\bm r)\psi_{\bm k_2+\bm b}(\bm r),\label{holomorphicbandformfactor}\\
&=& \mathcal{N}_{\bm k_1,\bm k_2+\bm b}\sum_{\bm b'}w_{\bm b'}\langle\bm k_1|e^{i(\bm k_1-\bm k_2-\bm b+\bm b')(\bm R+\bm{\bar R})}|\bm k_2+\bm b\rangle,\nonumber
\end{eqnarray}
where $\int_{\bm r}\equiv\int d^2\bm r$ and $\mathcal{N}_{\bm k_1,\bm k_2}\equiv\mathcal{N}_{\bm k_1}\mathcal{N}_{\bm k_2}$. The $|\bm k\rangle$ represents a LLL state. The term $\langle\bm k_1|e^{i\bm q\cdot(\bm R+\bm{\bar R})}|\bm k_2+\bm b\rangle$ appearing in the last line of the above equation is just the form factor of the quantum Hall wavefunction. We denote it as $f^{\bm k\bm k'}_{\bm b}$, and its expression can be explicitly calculated based on the magnetic translation algebra:
\begin{eqnarray}
f^{\bm k\bm k'}_{\bm b} &\equiv& \langle\bm k|e^{i(\bm k-\bm k'-\bm b)\cdot(\bm R+\bar{\bm R})}|\bm k'\rangle,\label{symgauformfactor}\\
&=& \eta_{\bm b}e^{\frac{i}{2}(\bm k+\bm k')\times\bm b}e^{\frac{i}{2}\bm k\times\bm k'}e^{-\frac14|\bm k-\bm k'-\bm b|^2}.\nonumber
\end{eqnarray}


Normalization factors are computed straightforwardly by taking $\bm k_1=\bm k_2$, $\bm b=\bm 0$:
\begin{equation}
\mathcal{N}^{-2}_{\bm k} = \sum_{\bm b'}w_{\bm b'}f^{\bm k,\bm k}_{-\bm b'} = \sum_{\bm b'}\eta_{\bm b'}w_{\bm b'}e^{i\bm k\times\bm b'}e^{-\frac{1}{4}\bm b'^2},\label{normfactoranalytic}
\end{equation}
which determines the band geometry through Eq.~(\ref{OmegaN}).

\section{Interacting Hamiltonian and Numerics}
In this section, we work with the model wavefunction derived in the main text and provide details of the computation of band geometry and interacting Hamiltonians.

\subsection{Interacting Hamiltonians}
\subsubsection{Interaction in a Landau level}
We first review the interaction Hamiltonian for translational invariant two-body interaction in a Landau level. With $v_{\bm q}$ as the Fourier transform of the interaction, the Hamiltonian is,
\begin{equation}
H = \sum_{\bm q}v_{\bm q}e^{i\bm q\cdot(\bm r_1-\bm r_2)} = \sum_{\bm q}v_{\bm q}e^{i\bm q\cdot(\bm R_1-\bm R_2)}e^{i\bm q\cdot(\bar{\bm R}_1-\bar{\bm R}_2)},\label{QHint}
\end{equation}
where $\bm R$ and $\bar{\bm R}$ are respectively the guiding centers and Landau orbits introduced earlier. The matrix element is,
\begin{eqnarray}
H_{\bm k_1\bm k_2;\bm k_3\bm k_4} &=& \sum_{\bm q}v_{\bm q}\langle\bm k_1|e^{i\bm q\cdot(\bm R+\bar{\bm R})}|\bm k_4\rangle\langle\bm k_2|e^{-i\bm q\cdot(\bm R+\bar{\bm R})}|\bm k_3\rangle\nonumber\\
&\times& \langle\bm k_2|e^{-i(\bm k_1-\bm k_4-\bm b)\cdot(\bm R+\bar{\bm R})}|\bm k_3\rangle,\nonumber\\
&=& \sum_{\bm b}v_{\bm k_1-\bm k_4-\bm b}f^{\bm k_1\bm k_4}_{\bm b}f^{\bm k_2\bm k_3}_{-\bm b+\delta\bm b},\label{H1234}
\end{eqnarray}
where we substituted $\bm q$ with $\bm k_1-\bm k_4-\bm b$ and replaced the summation over $\bm q$ with the summation over $\bm b$. We also introduced a new reciprocal lattice vector:
\begin{equation}
\delta\bm b=\bm k_1+\bm k_2-\bm k_3-\bm k_4.\label{def_deltab}
\end{equation}

The second quantized form for numerical exact diagonalization calculation is:
\begin{equation}
\hat{H} = H_{\bm k_1\bm k_2;\bm k_3\bm k_4}c^{\dag}_{\bm k_1}c^{\dag}_{\bm k_2}c_{\bm k_3}c_{\bm k_4},\label{secondquantizedform}
\end{equation}
where $c^{\dag}_{\bm k}$ creates a LLL electron with magnetic translation quantum number $\bm k$.

\subsubsection{Interaction in C=1 ideal flatband}
The interacting matrix element for generic $\mC=1$ ideal flatbands are derived from the wavefunction $\mathcal{N}_{\bm k}\mathcal{B}(\bm r)\Phi_{\bm k}(\bm r)$. We derive the effective interaction in quantum Hall basis:
\begin{eqnarray}
&&\left(\prod_{i=1}^4\mathcal{N}_{\bm k_i}\right)|\mathcal{B}(\bm r_1)|^2v(\bm r_1-\bm r_2)|\mathcal{B}(\bm r_2)|^2,\label{FQHumklapp}\\
&=& \left(\prod_{i=1}^4\mathcal{N}_{\bm k_i}\right)\sum_{\bm q,\bm b_{i,j}}v(\bm q)\left(w_{\bm b_i}w_{\bm b_j}\right)e^{i(\bm b_i+\bm q)\cdot\bm r_1}e^{i(\bm b_j-\bm q)\cdot\bm r_2}.\nonumber
\end{eqnarray}
where on the right-hand-side, we have substituted the Fourier components,
\begin{equation}
|\mathcal{B}(\bm r)|^2 = \sum_{\bm b}w_{\bm b}e^{i\bm b\cdot\bm r},\quad v(\bm r) = \sum_{\bm q}v(\bm q)e^{i\bm q\cdot\bm r}.
\end{equation}

Comparing the translational invariant interaction matrix element in Eq.~(\ref{QHint}) and Eq.~(\ref{H1234}), we notice that the only difference is to shift the subscript of $f_{\bm q}^{\bm k\bm k'}$ by $-\bm b_{i,j}$. Hence, we arrive at,
\begin{equation}
H_{\bm k_1\bm k_2;\bm k_3\bm k_4}=\left(\prod_{i=1}^4\mathcal{N}_{\bm k_i}\right)h_{\{\bm k\}},\label{MatrixEle0}
\end{equation}
where
\begin{eqnarray}
&&h_{\bm k_1\bm k_2;\bm k_3\bm k_4} = \label{MatrixEle}\\
&&v_{\bm k_1-\bm k_4-\bm b}\left(\sum_{\bm b_i}w_{\bm b_i}f^{\bm k_1,\bm k_4}_{\bm b-\bm b_i}\right)\left(\sum_{\bm b_j}w_{\bm b_j}f^{\bm k_2,\bm k_3}_{-\bm b+\delta\bm b-\bm b_j}\right).\nonumber
\end{eqnarray}

Eq.~(\ref{MatrixEle0}), Eq.~(\ref{MatrixEle}) and Eq.~(\ref{normfactoranalytic}) form a numerical \emph{exact} description of the interacting problems in $\mC=1$ ideal bands in the LLL basis, provided that all Fourier modes $w_{\bm b}$ are known.

\subsubsection{Example: chiral twisted bilayer graphene}
The single-particle Hamiltonian for twisted bilayer graphene consists of a standard single-layer graphene Hamiltonian for the top/bottom layer and an interlayer coupling whose periodicity defines the moir\'e superlattice. Following Bistritzer and MacDonald \cite{Bistritzer12233}, the effective continuum Hamiltonian of a single valley is,
\begin{equation}
H_{BM}=\int d^2\bm r \Psi_{BM}^\dagger(\bm r)\left(\begin{array}{cc}h_D^{b}\left(\frac{\theta}{2}\right)& T(\bm r) \\T^\dagger (\bm r)& h_D^{t}\left(-\frac{\theta}{2}\right)\end{array}\right)\Psi_{BM}(\bm r).\label{HTBG}
\end{equation}

A related Hamiltonian can be found for the opposite valley by acting with time reversal symmetry. The continuum approximation to the Dirac Hamiltonian of a layer $\lambda=t,b$ is:
\begin{equation}
h_D^{\lambda}\left(\frac{\theta}{2}\right)=\frac{\sqrt{3}at_0}{2}\left(-i{\bm \nabla}-{\bm K}_+^{\lambda}\right)\cdot e^{-\frac{i\theta}{4}\sigma_z}\bm{\sigma}e^{\frac{i\theta}{4}\sigma_z}.\label{hktheta}
\end{equation}
where $\bm K_+^{t/b}$ is the graphene Dirac point $\bm K_+$ rotated by $\pm\theta/2$. We define the moir\'e Dirac points as $\bm K$ = $\bm K_+^b-\bm K_+^{\Gamma}$, $\bm K'$ = $\bm K_+^t-\bm K_+^{\Gamma}$ where $\bm K_+^{\Gamma}$ is the moir\'e Gamma point labeled in graphene's reciprocal lattice coordinates. The interlayer tunneling potential $T(\bm{r})$ is constrained by the symmetries of a single valley: $\mC_3$, $\mathcal{M}_y$ and $\mC_2\mathcal{T}$:
\begin{equation}
T(\bm r)=\sum_{j=0}^2T_je^{-i(\bm q_0-\bm q_j)\cdot\bm r}.\label{T}
\end{equation}
with $\phi$=$2\pi/3$, the $T_j$ is:
\begin{equation}
T_j= \omega_0 - \omega_1\cos(j\phi)\sigma_x + \omega_1\sin(j\phi)\sigma_y.\label{Tjdef}
\end{equation}

The chiral limit is obtained by setting $w_0=0$, where at magic twisted angles the energy dispersion at charge neutrality becomes exactly flat \cite{Grisha_TBG}. We took other parameters (moir\'e interlayer inter-sublattice hopping parameter $w_1$, graphene's hopping parameter $t_0$, graphene's lattice constant $a$) of the continuum model as follows:
\begin{eqnarray}
w_1 = 110\text{meV},\quad t_0 = 2.62\text{eV},\quad a = 2.46\text{\AA},
\end{eqnarray}
and the first magic angle is $\theta = 1.132^{\circ}$ where the chiral model has exactly flat bands.

The wavefunction of chiral TBG has been found to have a LLL character in Ref.~(\onlinecite{JieWang_NodalStructure}):
\begin{eqnarray}
\psi_{\bm k}(\bm r)=\mathcal{N}_{\bm k}\left(\begin{matrix}i\mathcal{G}(\bm r)\\\eta\mathcal{G}(-\bm r)\end{matrix}\right)\Phi_{\bm k}(\bm r),\label{cTBGwf}
\end{eqnarray}
where $\eta$ is the intra-valley inversion eigenvalue, and $\eta=+1$ at the first magic angle \cite{JieWang_NodalStructure}. Eq.~(\ref{cTBGwf}) is in the same form as our $\mC=1$ ideal flatband wavefunction, with $\mathcal{B}(\bm r)$ replaced by a two-component layer spinor. We then Fourier transform $|\mG(\bm r)|^2=\sum_{\bm b}u_{\bm b}\exp(i\bm b\cdot\bm r)$:
\begin{equation}
|\mG(\bm r)|^2 = u_{\bm 0} + \sum_{i=1,2,3}\left(u_{\bm b_1}e^{i\bm b_i\cdot\bm r} + h.c.\right) + ...
\end{equation}
where $\bm b_{1,2,3}$ are illustrated in the main text. Here we choose $\Phi_{\bm k}(\bm r)$ in the symmetric gauge represented by the Weierstrass sigma function. The Fourier modes are numerically computed:
\begin{equation}
u_{\bm 0} = 0.445,\quad u_{\bm b_{1,2,3}} = 0.108 - 0.035\mathrm{i}. \label{u0u1value}
\end{equation}

For layer isotropic interaction, only $|\mG(\bm r)|^2+|\mG(-\bm r)|^2$ enters in the interaction Hamiltonian. We denote its Fourier mode as $w_{\bm b}$:
\begin{equation}
|\mG(\bm r)|^2 + |\mG(-\bm r)|^2 = \sum_{\bm b}w_{\bm b}\exp(i\bm b\cdot\bm r),
\end{equation}
and we have:
\begin{equation}
w_{\bm b} = 2\Re{u}_{\bm b},
\end{equation}
for all reciprocal lattice vectors $\bm b$. Plots of $w_{\bm b}$ are shown in the main text. Results derived in the previous sections directly applies to the cTBG flatbands.

\subsubsection{Quantum Hall model with Umklapp interactions}
The model Eq.~(\ref{FQHumklapp}) describes exactly a quantum Hall model with Umklapp interactions. It turns out that this model has exact zero modes and interesting phase transitions as discussed in the main text. The exact model Eq.~(\ref{FQHumklapp}) can be approximated well by retaining few parameters.

Note that the LLL projection replaces any $e^{i\bm q\cdot\bm r}$ to:
\begin{equation}
e^{i\bm q\cdot\bm r} \rightarrow e^{-\frac14\bm q^2}e^{i\bm q\cdot\bm R},
\end{equation}
which means that the Umklapp interactions with large momentum transfer are less important as they are suppressed by the Gaussian form factor. The minimal Hamiltonian retaining the leading order Umklapp terms (associated to shortest primitive vectors $\bm b_{1,2}$) are:
\begin{equation}
H = \sum_{\bm q}\left(\tilde{w}_0 + \sum_{i=1}^n\sum_{j=1}^2(\tilde{w}_1e^{i\bm b_i\cdot\bm r_j} + h.c.)\right)v_{\bm q}e^{i\bm q\cdot(\bm r_1-\bm r_2)}.\label{simplemodel}
\end{equation}
where $j=1,2$ labels particle, and $i$ runs from $1$ to $n$ labeling the primitive reciprocal lattice basis.

On square torus, the primitive lattice vectors are $\bm b_{1,2}$ with $|\bm b_1|=|\bm b_2|$ and $\bm b_1\cdot\bm b_2=0$, so we have $n=2$. The $C_4$ symmetry requires $\tilde{w}_1$ to be a real number. On triangular torus, we have $i$ runs from $1$ to $3$ labeling the two reciprocal lattice vectors $\bm b_{1,2}$ and the third one that is related $\bm b_{1,2}$ by the three-fold rotation symmetry: $\bm b_3=-(\bm b_1+\bm b_2)$. The primitive vectors are plotted in the main text.

The Umklapp interaction parameters $\tilde{w}_n$ are easily obtained from the band geometry $w_{\bm b}$ and is lattice-geometry dependent. For instance, on $\mC_4$ symmetric square lattice, starting from expanding Eq.~(\ref{FQHumklapp}):
\begin{eqnarray}
H &=& \sum_{\bm q}e^{i\bm q\cdot(\bm r_1-\bm r_2)}v(\bm q)\times\bigg[w_{\bm 0}^2 + \label{1eq}\\
&& w_{\bm 0} \left(w_{\bm b_1}e^{i\bm b_1\cdot\bm r_1} + w_{\bm b_1}e^{i\bm b_2\cdot\bm r_1} + \bm r_1\leftrightarrow\bm r_2 + h.c.\right)\nonumber\\
&+& |w_{\bm b_1}|^2\left(e^{i\bm b_1\cdot(\bm r_1-\bm r_2)} + e^{i\bm b_2\cdot(\bm r_1-\bm r_2)} + h.c.\right) + ...\bigg],\nonumber
\end{eqnarray}
where $(...)$ contains terms of $|\bm b|\geq|\bm b_{1,2}|$ which we neglect. We can then absorb the last line into the first line by shifting $\bm q$ (it is summed eventually). For the $v_1$ pseudopotential $v(\bm q)=\bm q^2$, and momentum shifting yields $v(\bm q)\rightarrow(\bm q\pm\bm b_{1,2})^2 = \bm q^2 \pm\bm b_{1,2}\cdot\bm q+\bm b^2_{1,2}$ where the last two terms are the $v_0$ anisotropic and isotropic pseudopotentials which do not couple to fermions. In general, such shift will not affect pseudopotential of higher orders. We thus arrive at:
\begin{equation}
w^2_{\bm 0}v(\bm q)e^{i\bm q\cdot(\bm r_1-\bm r_2)}\bigg[1 + 4|\frac{w_{\bm b_1}}{w_{\bm 0}}|^2 + \sum_{i,j=1}^2\left(\frac{w_{\bm b_1}}{w_{\bm 0}}e^{i\bm b_i\cdot\bm r_j} + h.c.\right)\bigg],\nonumber
\end{equation}
from which we see,
\begin{equation}
\tilde{w}_0 = w_{\bm 0}^2 + 4|w_{\bm b_1}|^2,\quad\tilde{w}_1 = w_{\bm b_1}w_{\bm 0}.
\end{equation}

A similar calculation is easily generalized to the triangular lattice (as in cTBG) where,
\begin{equation}
\tilde{w}_0 = w_{\bm 0}^2 + 6|w_{\bm b_1}|^2,\quad\tilde{w}_1 = w_{\bm b_1}w_{\bm 0} + w^{*2}_{\bm b_1}.
\end{equation}

Taking into account the normalization factors, the interacting Hamiltonian is invariant under rescaling $w_{\bm b}\rightarrow \alpha w_{\bm b}$ for all $\bm b$. In the main text, we have set $w_{\bm 0}=1$.

\subsection{COM Pseudopotentials without rotational symmetry}
A general two-body interaction projected to a single LL is given by,
\begin{equation}
H = \int\frac{d^2\bm q_1}{(2\pi)^2}\int\frac{d^2\bm q_2}{(2\pi)^2}V_{\bm q_1,\bm q_2}\rho_{\bm q_1}\rho_{\bm q_2},\label{appendgeneralH}
\end{equation}
where
\begin{equation}
\rho_{\bm q} = \sum_i\exp(i\bm q\cdot\bm R_i),
\end{equation}
is the LLL projected guiding center density operator. Eq.~(\ref{appendgeneralH}) includes Eq.~(\ref{FQHumklapp}), which we discussed in the previous section.

To begin with, we review the simplest pseudopotential formalism with both translational and rotational symmetry. In this case $V_{\bm q_1,\bm q_2}=V_{|\bm q|}\delta_{\bm q_1,\bm q}\delta_{\bm q_2,-\bm q}$ and the Hamiltonian is block-diagonal in the angular momentum basis $m$:
\begin{equation}
\langle m|H|m'\rangle = c_m\delta_{m,m'},
\end{equation}
where $c_m$ is the pseudopotentials that is related to the interaction $V_{|\bm q|}$ by:
\begin{equation}
V_{|\bm q|} = \sum_{m=0}^{\infty}c_mV_m(|\bm q|),\quad V_m(|\bm q|) = e^{-\frac{1}{2}\bm q^2}L_m(\bm q^2),
\end{equation}

The Hamiltonian can then be rewritten as a sum of projectors:
\begin{equation}
H = \sum_{i<j}\sum_mc_mP_m(\bm R_i-\bm R_j),
\end{equation}
where the projectors are:
\begin{equation}
P_m(\bm R_i-\bm R_j)\equiv 2\int\frac{d^2\bm q}{(2\pi)^2}L_m(\bm q^2)e^{-\frac12\bm q^2}e^{i\bm q\cdot(\bm R_i-\bm R_j)}.
\end{equation}
satisfying $P_m(\bm R_i-\bm R_j)P_n(\bm R_i-\bm R_j)=\delta_{mn}P_m(\bm R_i-\bm R_j)$, which projects a pair of electron into their relative angular momentum $m$ sector. For this reason, the Laughlin $\nu=1/3$ state is the exact zero-energy eigenstate of $c_{m}\propto\delta_{m,1}$ interactions because the Laughlin wavefunction has zero weight in the $m=1$ channel for any electron pair.

With translation symmetry but without rotation symmetry, we have $V_{\bm q_1,\bm q_2}=V_{\bm q_1}\delta_{\bm q_1,-\bm q_2}$. The projector Hamiltonian can still be constructed by using the generalized-Laguerre polynomials as shown in Ref.~(\onlinecite{generalizedPP_PRL}). The generalized pseudopotentials are
\begin{eqnarray}
V_{\bm q} &=& \sum_{m,n=0,\sigma=\pm}^{\infty}c^{\sigma}_{m,n}V^{\sigma}_{m,n}(\bm q),\nonumber\\
c^{\sigma}_{m,n} &=& \int d^2\bm qV_{\bm q}V^{\sigma}_{m,n}(\bm q),
\end{eqnarray}
where
\begin{eqnarray}
V^{+}_{m,n}(\bm q) &=& \lambda_n\mathcal{N}_{mn}\left(L^n_m(\bm q^2)e^{-\frac12\bm q^2}\bm q^n + c.c\right),\nonumber\\
V^{-}_{m,n}(\bm q) &=& -i\mathcal{N}_{mn}\left(L^n_m(\bm q^2)e^{-\frac12\bm q^2}\bm q^n - c.c\right).
\end{eqnarray}
where the normalization factors are $\mathcal{N}_{mn}=\sqrt{2^{n-1}m!/\left(\pi(m+n)!\right)}$, and $\lambda_n=1/\sqrt2$ for $n=0$ or $\lambda_n=1$ for $n\neq0$.

The Hamiltonian can be rewritten as,
\begin{equation}
H=\sum_{i<j}\sum_{m,n,\sigma}c^{\sigma}_{mn}P^{\sigma}_{mn}(\bm R_i-\bm R_j),
\end{equation}
where
\begin{equation}
P^{\sigma}_{mn}(\bm R_i-\bm R_j)=\int\frac{d^2\bm q}{(2\pi)^2}V^{\sigma}_{mn}(\bm q)e^{i\bm q\cdot(\bm R_i-\bm R_j)},
\end{equation}
are the generalized projector.

In this work, we generalize the pseudopotential formalism to the most general form with neither translation nor rotation symmetry. It is then necessary to consider center-of-mass dependent interaction as discussed in the main text. We defined center-of-mass guiding center coordinates $\bm R^{+}_{ij}\equiv(\bm R_i+\bm R_j)/\sqrt2$ which commutates with the relative coordinates $\bm R^{-}_{ij}\equiv(\bm R_i-\bm R_j)/\sqrt2$: they separately form independent algebras and yields independent pseudopotential projectors. To see this more precisely, we rewrite the interaction as:
\begin{equation}
V_{\bm q_1,\bm q_2} = \sum_{\sigma=\pm;i\neq j}\exp\left(i\bm Q^{\sigma}\cdot\bm R^{\sigma}_{ij}\right),
\end{equation}
with $\bm Q^{\pm}=(\bm q_1\pm\bm q_2)/\sqrt2$. The Hamiltonian can be written as:
\begin{equation}
H = \sum_{i<j}\sum_{mn\sigma}\sum_{m'n'\sigma'}c^{\sigma\sigma'}_{mm';nn'}P^{\sigma}_{mn}(\bm R_i-\bm R_j)P^{\sigma'}_{m'n'}(\bm R_i+\bm R_j).\label{COMPP}
\end{equation}

The translation non-symmetric but rotational symmetric case considered in the main text is a simplified version of Eq.~(\ref{COMPP}), with $m=n$ as the relative angular momentum and $m'=n'$ as the COM angular momentum.

The COM pseudopotentials in principle can be read off from the energy spectrum of interacting two particles \cite{zhao_nonabelian,hierarchy_FCI}, and will be useful for numerical exploration. We leave this for future work.

\subsection{Exact many-body zero modes}
In the main text, we have shown that the existence of the exact many-body zero modes essentially follows from their independence on the COM pseudopotential projector. In this section of SM, following the same idea, we provide an alternative proof based on the clustering properties of wavefunctions.

\begin{theorem}
The two-body interacting Hamiltonian $\hat{P}H\hat{P}$ processes three fold exact zeros modes (not necessarily lowest eigenvalue) at filling fraction $\nu=N/N_{\phi}=1/3$, with the $v_1$ relative Haldane pseudopotential $v_{\bm q}=\bm q^2$:
\begin{equation}
H = \sum_{\bm q;m,n;i,j}v_{\bm q}\tilde{w}_{mn}\left(e^{i(\bm b_m+\bm q)\cdot\bm r_i+i(\bm b_n-\bm q)\cdot\bm r_j}+i\leftrightarrow j\right)+H.c.,
\end{equation}
where $\hat{P}$ is the lowest Landau level projection operator:
\begin{equation}
\hat{P}\bm r\hat{P} = \bm R,
\end{equation}
and,
\begin{equation}
\bm b_n = n_1\bm b_1+n_2\bm b_2,\quad n_{1,2}\in\mathbb{Z}.
\end{equation}

The many-body zero-modes above are the $\nu=1/3$ Laughlin states. Generalization to other fractional quantum Hall states is a straightforward step.
\end{theorem}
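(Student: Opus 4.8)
\emph{Proof strategy.} The plan is to exhibit the three center-of-mass copies of the $\nu=1/3$ Laughlin state $\Psi_L$ as exact zero modes of $\hat P H\hat P$ by a clustering argument. The three facts that make this work are: (a) after the $\bm q$-sum, every term of $H$ contains the position-space $v_1$ Haldane pseudopotential as a \emph{multiplicative} factor; (b) $\Psi_L$ vanishes cubically whenever any two particles coincide; and (c) the Umklapp phases are smooth and nowhere zero, so they cannot lower that order of vanishing.

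First I would carry out the $\bm q$-summation inside a generic term. Since $\bm b_{m,n}$ and $\bm q$ are summed independently, $\sum_{\bm q}v_{\bm q}\,e^{i(\bm b_m+\bm q)\cdot\bm r_i+i(\bm b_n-\bm q)\cdot\bm r_j}=e^{i\bm b_m\cdot\bm r_i+i\bm b_n\cdot\bm r_j}\,V(\bm r_i-\bm r_j)$, where $V(\bm r)\equiv\sum_{\bm q}\bm q^2\,e^{i\bm q\cdot\bm r}$ is, up to a positive constant, the lattice-periodized position-space $v_1$ pseudopotential, i.e.\ a lattice sum of $\nabla^2\delta^2$. Thus $H$ acts on first-quantized wavefunctions as multiplication by $\sum_{i<j}\sum_{m,n}\tilde w_{mn}\big(e^{i\bm b_m\cdot\bm r_i+i\bm b_n\cdot\bm r_j}+i\leftrightarrow j\big)\,V(\bm r_i-\bm r_j)+\mathrm{c.c.}$; since this is ordinary commutative multiplication, the smooth Umklapp factor may be pulled through $V$, so each term equals $V(\bm r_i-\bm r_j)$ times a smooth function times the wavefunction.

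Next I would apply this to $\Psi_L$, which on the torus is $\prod_{i<j}\big[\theta_1\big((z_i-z_j)/L_1\,|\,\tau\big)\big]^{3}$ times the Gaussian and center-of-mass theta factor. Because $\theta_1$ has a simple zero at the origin and at every period, $\Psi_L$ vanishes to third order in $\bm r_i-\bm r_j$ exactly on the support of $V$, and multiplying by the nonvanishing smooth Umklapp phase keeps this cubic vanishing. A distribution of type $\nabla^2\delta^2$ annihilates any function vanishing to order $\ge 3$ at its base point: pairing against a test function $\varphi$ gives $\big[\nabla^2_{\mathrm{rel}}\big(\varphi\cdot(\mathrm{smooth})\cdot\Psi_L\big)\big]_{\bm r_i=\bm r_j}$, and in the Leibniz expansion every term carries a factor of $\Psi_L$ differentiated at most twice in the relative coordinate, hence vanishing at coincidence. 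So $H\Psi_L=0$ identically, and since $v_{\bm q}=\bm q^2$ is real and even the $\mathrm{c.c.}$ half of $H$ is dealt with by the same computation. Because $\Psi_L$ lies in the lowest Landau level, $\hat P\Psi_L=\Psi_L$, and therefore $\hat P H\hat P\,\Psi_L=\hat P(H\Psi_L)=0$; the threefold count then follows because the three magnetic-translation images of the $\nu=1/3$ Laughlin state on the torus share the same cubic clustering and are thus all zero modes.

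The step I expect to be the main obstacle is the distributional bookkeeping: making rigorous that $V$---a periodized sum of $\nabla^2\delta^2$ on the torus---annihilates anything vanishing to third order along the diagonal (and its lattice translates), and checking that neither the Umklapp phases nor the Gaussian/center-of-mass prefactors of $\Psi_L$ can reduce the order of vanishing. I would also remark that these three states exhaust the kernel whenever $\hat P H\hat P$ is positive semidefinite---as for the band-geometry couplings $\tilde w_{mn}=w_{\bm b_m}w_{\bm b_n}$ with $|\mathcal B|^2>0$: by the LLL short-distance expansion each pair term is then $\ge 0$, with kernel exactly the Landau-level states vanishing to order $\ge 3$ at all coincidences, which is three-dimensional at $\nu=1/3$ by Haldane's counting; for general $\tilde w_{mn}$ the argument gives only the existence of the three Laughlin zero modes, consistent with their not being the lowest eigenstates.
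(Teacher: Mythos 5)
Your proposal is correct and takes essentially the same route as the paper: the paper likewise resums the $\bm q$-dependence to write the interaction as a smooth symmetric Umklapp factor $\mathcal{F}(\bm r_1,\bm r_2)$ multiplying $\delta''(\bm r_1-\bm r_2)$, and then invokes the vanishing of the Laughlin state at pair coincidences, with the threefold count coming from the torus center-of-mass degeneracy. Your version is somewhat more careful about the distributional bookkeeping and about why the order of vanishing must exceed two (the paper loosely says ``faster than first order''), but the key idea --- that the short-distance structure is fixed entirely by the relative factor $v$ and cannot be spoiled by the nowhere-singular center-of-mass prefactor --- is the same.
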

\begin{proof}
We define a symmetric function,
\begin{equation}
\mathcal{F}(\bm r_1, \bm r_2) \equiv \sum_{m,n}\tilde{w}_{mn}\left(e^{i\bm b_m\cdot\bm r_1+i\bm b_n\cdot\bm r_2}+1\leftrightarrow2\right) + H.c.
\end{equation}

The effective lattice translational symmetric interaction is:
\begin{equation}
\tilde{v}(\bm r_1, \bm r_2) = \mathcal{F}(\bm r_1, \bm r_2)v(\bm r_1-\bm r_2).
\end{equation}

For the $v_1$ Haldane pseudopotential, we have $v(\bm r_1-\bm r_2) = \delta^{''}(\bm r_1-\bm r_2)$, and we see that the short ranged property of the interaction $\tilde{v}$ is still fully determined by the relative part $v$. From this we know that Laughlin $\nu=1/3$ wavefunction, in which the decay power is strictly faster than first order for any pair of electrons coinciding at the same location, must still be the exact zero energy eigenstate for $\tilde{v}$.
\end{proof}

Including the normalization factors does not affect the existence of zero modes at all. The problems with normalization factors are formalized in terms of generalized eigenvalue problems with non-orthonormal basis:
\begin{equation}
H_{mn}\tilde{\psi}_n = \lambda S_{mn}\tilde{\psi}_n,\label{geneigenprob}
\end{equation}
where $S_{mn}\equiv\langle\tilde{\psi}_m|\tilde{\psi}_n\rangle$ is the overlap matrix. In our problem, $\tilde{\psi}$ represents a LLL many-body unnormalized determinant wavefunction, and $H$ is the interacting Hamiltonian $\tilde{v}(\bm r_1,\bm r_2)$, and $S$ is diagonal. The generalized eigenvalue problem are solved by two steps: (i) find $D$ {\it s.t.} $D^{-1}SD$ is diagonal with diagonal elements $\sigma$, and construct $A_{ij}\equiv D_{ij}/\sqrt{\sigma_j}$, (ii) the eigenvalue problem is transformed into the ordinary eigenvalue problem for $A^{\dag}HA$. The existence of zero modes are unaffected by the overlap matrix $S$: if $H\Psi=0$, then $A^{-1}\Psi$ is the zero modes of the generalized problem Eq.~(\ref{geneigenprob}).

\section{Further Applications}
In addition to the topological phase transition discussed in the main text, in this section we discuss two more applications of the ideal flatband theory.

\subsection{Effective interaction in $C_2T$ symmetric TBG and impact on superconductivity}

We have discussed how band geometry influences interacting physics in a single topological flatband. Particularly, we discussed the interaction in TBG with an hBN substrate (which breaks $C_2T$ symmetry) which polarizes a single TBG flatband. This is a relevant experimental set up where anomalous Hall effects were observed \cite{Sharpe605,Serlin900}. Superconductivity, on the other hand, was observed in TBG devices with $C_2T$ symmetry where all four flatbands (ignoring spin) are equally important \cite{Cao:2018aa,Cao:2018ab}. In this section, we generalize our discussion to the time-reversal invariant TBG flatbands, and comment on how inhomogeneous band geometry would affect superconductivity.

The four flatbands are labeled by two numbers $\alpha=(v,c)$, {\it i.e.} valley $v$ and Chern number $c$. The four bands can be grouped into two pairs distinguished by their Chern number. For this reason, each group is termed a ``Chern sector'', consisting of two flatbands of the same Chern number but opposite valley. If regarding valley as pseudo-spin, this is a situation very similar to two quantum Hall ferromagnets related by time-reversal symmetry.

The low-lying excitations in each Chern sector are charged skyrmions. Ref.~(\onlinecite{chargedskyrmion}) proposed a superconductivity mechanism based on the skyrmion pairing. Importantly, the perturbative calculations of Ref.~(\onlinecite{chargedskyrmion}) indicated that the skyrmion coupling is anti-ferromagnetic which consequently favors a skyrmion bound state.

Subsequently, Ref.~(\onlinecite{skyrmionSC_DMRG}) numerically studied the possibility of skyrmion superconductivity in a relevant but simplified model. This model consists of two layers of quantum Hall ferromagnet defined in opposite magnetic fields, coupled by an anti-ferromagnetic interaction $J_{x,y,z}$. As mentioned above, the layer ($\gamma^z$) and spin degrees of freedom in the model mimic the Chern sector and valley pseudo-spin in the original problem of TBG. The effective Hamiltonian of this LLL model is:
\begin{eqnarray}
H &=& \psi^{\dag}\frac{(\bm p+e\gamma^z\bm A)}{2m}\psi + \frac{1}{2}\int :n(\bm r)V_C(\bm r-\bm r')n(\bm r'): \nonumber\\
&& - \frac{e^2l_B}{4\pi\epsilon}\sum_{i=x,y,z}J_i:\left(\psi^{\dag}\gamma^z\eta^i\psi(\bm r)\right)^2:,\label{skyrmionmodel}
\end{eqnarray}
where $V_C$ is the Coulomb interaction, $n(\bm r) = \sum_{\gamma\eta}\psi^{\dag}_{\gamma\eta}\psi_{\gamma\eta}(\bm r)$ is the charge density, and $J_x,J_y,J_z$ are the anti-ferromagnetic XXZ interactions between the two layers. The cyclotron gap between Landau levels is much larger than other interacting scales, so that only the LLL degrees of freedom are physically relevant. Ref.~(\onlinecite{skyrmionSC_DMRG}) showed numerical evidence for superconductivity and computed the skyrmion binding energy. However, one crucial difference between this LLL model and realistic TBG is the absence of the band geometry inhomogeneity in Eqn.~(\ref{skyrmionmodel}). Going beyond the LLL limit by allowing nonuniform band geometry is fundamentally important and interesting.

There are a couple of places that band geometry could enter into the model Eqn.~(\ref{skyrmionmodel}). First, band geometry modifies the stiffness of spin waves and thus determines the details of a single skyrmion, such as its size and elastic energy. Second, band geometry modifies the electron-electron interaction through form factors and therefore changes the effective skyrmion-skyrmion interactions. In this section, we pay particular attention to how band geometry would modify the electron-electron interaction by using the ideal flatband theory derived in the main text. For this purpose, we approximate the flatband wavefunction as their chiral limit wavefunctions, and assume finite skyrmion coupling even in the exact flat limit. At magic angle, the wavefunctions for the four flatbands are:
\begin{eqnarray}
\Psi_{\bm k}^{+,+1} &=& \left(\begin{matrix}\psi_{\bm k}(\bm r)\\0\\0\\0\end{matrix}\right),\quad \Psi_{\bm k}^{+,-1} = \left(\begin{matrix}0\\\psi^*_{\bm k}(-\bm r)\\0\\0\end{matrix}\right),\nonumber\\
\Psi_{\bm k}^{-,-1} &=& \left(\begin{matrix}0\\0\\\psi^*_{-\bm k}(\bm r)\\0\end{matrix}\right),\quad \Psi_{\bm k}^{-,+1} = \left(\begin{matrix}0\\0\\0\\\psi^*_{-\bm k}(-\bm r)\end{matrix}\right),\nonumber
\end{eqnarray}
where each wavefunction has four components labeled by $(+A,+B,-A,-B)$ which represent the valley $\pm$ and the sublattice $A/B$ degrees of freedom. Due to the $C_2$ and $T$ symmetry, the wavefunctions are constrained to take the above form and consequently the form factors are diagonal in valley and sublattice indexes \cite{Zaletel_PRX20}. The $\psi_{\bm k}(\bm r)$ is a two-component layer spinor given in Eqn.~(\ref{cTBGwf}).

Following the discussion in the main text, the magic angle chiral TBG wavefunctions $\Psi^{\alpha}_{\bm k}$ above allow us to derive an effective COM dependent interaction in the LLL basis. The interacting Hamiltonian, written in the LLL basis, is:
\begin{eqnarray}
H &=& \tilde{v}^{\alpha\beta}_{\bm k_1\bm k_2\bm k_3\bm k_4}c_{\bm k_1}^{\alpha\dag}c_{\bm k_2}^{\beta\dag}c_{\bm k_3}^{\beta}c_{\bm k_4}^{\alpha},
\end{eqnarray}
where $c^{\alpha\dag}_{\bm k}$ creates a LLL electron in a magnetic field where the sign of the magnetic field depends on sublattice. More precisely, its wavefunction $\Phi^{\alpha}_{\bm k}(\bm r)=\langle\bm r|c^{\alpha\dag}_{\bm k}|0\rangle$ is:
\begin{equation}
\Phi^{(v,c=+1)}_{\bm k} = \Phi_{\bm k}(\bm r),\quad\Phi^{(v,c=-1)}_{\bm k} = \Phi^*_{\bm k}(-\bm r),
\end{equation}
where $\Phi_{\bm k}(\bm r)$ was defined and discussed extensively around Eqn.~(\ref{qhwf}); $\Phi^*_{\bm k}(-\bm r)$ is its time-reversal conjugate. Note that the basis $c^{\alpha}_{\bm k}$ is precisely the basis used in the model Eqn.~(\ref{skyrmionmodel}). If using Coulomb interaction and assuming skyrmion coupling $J_{x,y,z}$ we recover the model Eqn.~(\ref{skyrmionmodel}).

However, in the presence of inhomogeneous band geometries, the interaction matrix element $\tilde{v}^{\alpha\beta}_{\bm k_1\bm k_2\bm k_3\bm k_4}$ is no longer purely Coulomb, but modified by an effective COM interaction. After some algebra, it can be shown that:
\begin{eqnarray}
\tilde{v}^{\alpha\beta}_{\bm k_1\bm k_2\bm k_3\bm k_4}\! =\! \int_{\bm r_{1,2}}\!\tilde{V}_C(\bm r_1,\bm r_2)\Phi^{\alpha*}_{\bm k_1}\Phi^{\alpha}_{\bm k_4}(\bm r_1)\Phi^{\beta*}_{\bm k_2}\Phi^{\beta}_{\bm k_3}(\bm r_2),\nonumber
\end{eqnarray}
where the effective interaction is given as follows, which is the key result of this section:
\begin{equation}
\tilde{V}_C(\bm r_1,\bm r_2) = |\mathcal{B}(\bm r_1)|^2|\mathcal{B}(\bm r_2)|^2V_C(\bm r_1-\bm r_2),\label{effectiveVC}
\end{equation}
with $|\mathcal{B}(\bm r)|^2 = |\mathcal{G}(\bm r)|^2+|\mathcal{G}(-\bm r)|^2$ following Eqn.~(\ref{cTBGwf}). Eqn.~(\ref{effectiveVC}) shows that the band geometry $\mathcal{B}(\bm r)$ modifies the Coulomb interaction $V_C$ in the LLL basis by making it COM dependent. Moreover, the modified effective interaction is essentially in the same form as the single band case. This indicates that the COM interaction generally has attractive components as seen from its COM pseudopotential decomposition shown in the main text in FIG. 3(b). In the single band problem, such attractive interaction drives the FQH to CDW transition, and we anticipate in the four band problem here such interaction could enhance the bonding of skyrmions to enhance superconductivity, suppose the spatial pattern of the COM interaction matches that of the order parameter of superconductivity.

Moreover, regarding $|\mathcal{B}(\bm r)|^2$ as the Umklapp processes that scatters electron across the Brillouin zone, we know that the shortest distance scattering terms dominate, which practically allows us to incorporate the effect of band geometry into the skyrmion model Eqn.~(\ref{skyrmionmodel}) by just adding \emph{one} more parameter following similar discussions around Eqn.~(\ref{simplemodel}). This could simplify the calculation of interacting physics in inhomogeneous band geometry backgrounds. A thorough understanding of the skyrmion physics in such systems requires extensive numerical studies which we leave for future work. We believe the theoretical tools developed in this work makes such large-scale numerical analysis possible.

\subsection{Stability of the composite Fermi liquid phase in TBG flatbands}
The theoretical tool developed in this work has wide applications in studying the stability of various many body phases against inhomogeneous band geometries. As the second example of applications, we study the stability of composite Fermi liquid in TBG flatband.

The composite Fermi liquids (CFL) are gapless phases that occur at even denominator filling fractions in the LLL $\nu=1/(2m)$ where $m$ is an integer. They are understood as Fermi seas of composite fermions, where each composite fermion consists of one electron and $2m$ flux quanta. Unlike FQH states such as Laughlin or Moore-Read states which have been realized in flatband models, there are rare reports of CFL in lattice models. In this section, we study the stability of CFL against nonuniform band geometries. Particularly we ask whether CFL can exist in TBG topological flatbands.

Studying CFL in flatband models is harder than that in the continuum LLLs, due to the lack of continuous translation symmetry in lattice systems. In the LLL, continuous translation symmetry implies eigenstates are distinguished by $N_e\times N_{\phi}$ momentum quantum numbers where $N_{\phi}$ is the total flux quanta and $N_e$ is the particle number. Moreover, each eigenstate can be compared by wavefunction overlap with one many-body model wavefunction following Ref.~(\onlinecite{scottjiehaldane}) and Ref.~(\onlinecite{Jie_Dirac}). In flatbands, however, lattice rather than continuous translation symmetry mean that there are only $N_{\phi}$ good translation quantum numbers. This means exact diagonalization states are down-folded into a smaller Brillouin zone, and thereby need to be compared with a linear combination of a couple of model wavefunctions which practically makes the wavefunction overlap approach much harder.

While directly comparing the wavefunction is harder, we can use spectral information to identify many-body phases. Following the main text, the interaction in an ideal topological flatband is mapped into the LLL basis as shown in Eqn.~(\ref{FQHumklapp}). Equivalently, the interacting physics is described by a FQH problem with the COM interaction shown in Eqn.~(\ref{simplemodel}). This allows to \emph{continuously} tune the COM parameter $w_{\bm b}$ to track the evolution of the spectrum and see explicitly the influence from band geometry.

To model the inhomogeneous band geometries, we set $w_{\bm 0}=1$ and $w_{\bm b_1}\neq0$, which we have shown is a very good approximation to the real TBG spectrum. Then $w_{\bm b_1}$ is the only tuning parameter in our problem: when $w_{\bm b_1}=0$ we recover the LLL limit, and when $w_{\bm b_1}=0.243$ we recover the flatband of cTBG at the first magic angle. See Eqn.~(\ref{u0u1value}) for calculation details.

\begin{figure*}[h]
    \centering
    \includegraphics[width=1\textwidth]{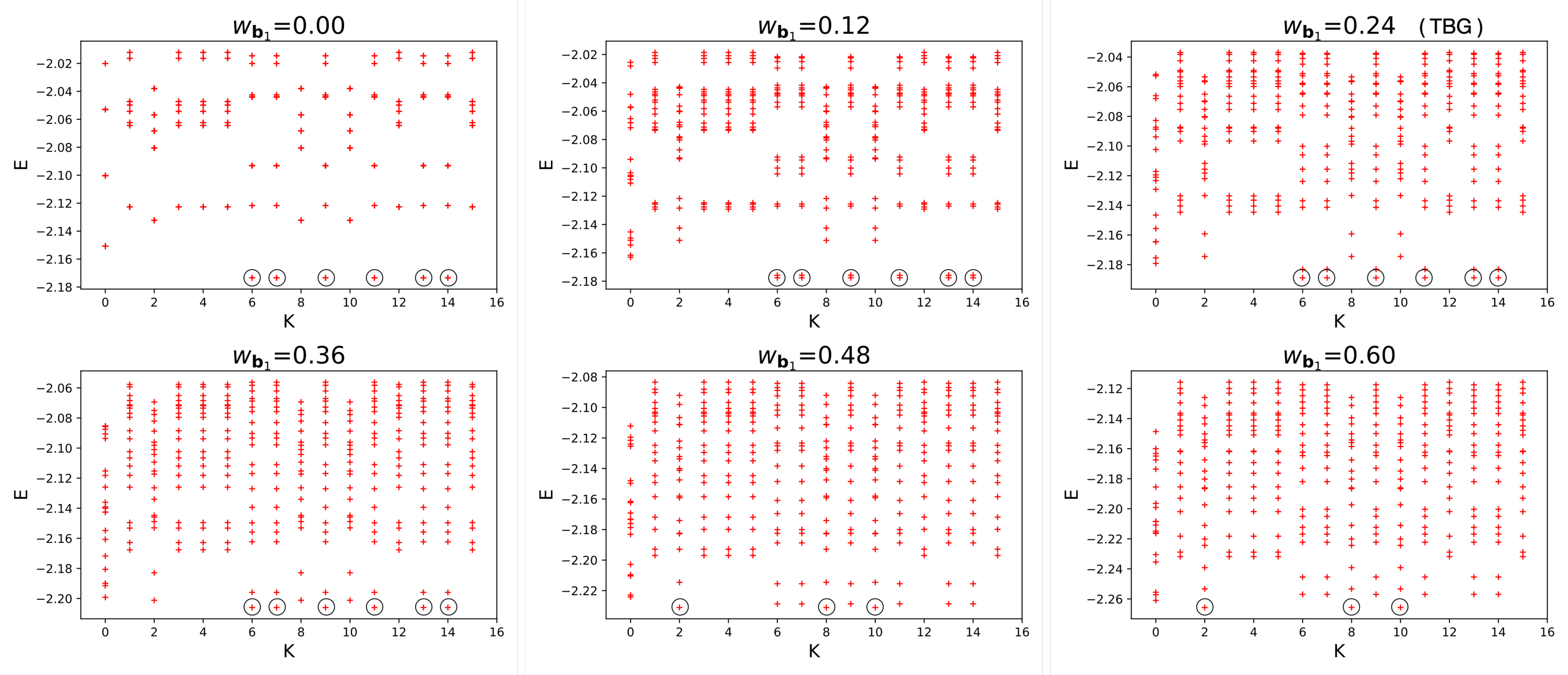}
    \caption{Energy spectrum of a half filled LLL subject to an interaction consisting of the Coulomb interaction plus a center-of-mass interaction parameterized by $w_{\bm b_1}$; the center-of-mass interaction represents the effect of an inhomogeneous band geometry. The LLL limit is recovered when $w_{\bm b_1}=0$, and a cTBG flatband is recovered when $w_{\bm b_1}=0.243$. Each figure is plotted with respect to the lattice translation momentum $K=k_x+N_x\times k_y$ where $k_{x,y}$ are two momentum quantum numbers of two spatial directions.}\label{sup_pic1}
\end{figure*}

The exact diagonalization spectrum for $N_e=8$ electrons on an $N_x\times N_y=4\times4$ TBG superlattice is shown in FIG.~\ref{sup_pic1}. We plot the spectrum in an ascending order of $w_{\bm b_1}$ from $0$ to $0.6$. Energies in each figure are plotted with respect to the quantum number $K=k_x+N_x\times k_y$, where $k_x\in[0,N_x)$ and $k_y\in[0,N_y)$ are translation quantum numbers of two spatial directions. Empty circles mark the ground states in each figure.

We notice that in the LLL limit ($w_{\bm b_1}=0$), ground states occur at $K=6,7,9,13,14$, which in fact are all double degenerate (due to COM topological degeneracy on torus). These quantum numbers agree with the dipole momentum of CFL: the most compact Fermi sea configuration for $N_e=8$ particles on $4\times4$ hexagonal lattice is shown in the FIG.~\ref{sup_pic2} a) as the seven dots plus one of the triangles. These Fermi sea momenta are precisely $6,7,9,13,14$, reflecting the fact that the Coulomb ground state at half filled LLL is CFL. Upon turning on a finite but small $w_{\bm b_1}$ (representing a small inhomogeneous band geometry background), we found the two-fold topological degeneracy is split and the spectrum is slightly modified. This is consistent with the fact that inhomogeneous band geometry breaks the COM translation symmetry and lifts the degeneracy. When further increasing the $w_{\bm b_1}$, we found energy levels of other momentum sectors $2,8,10$ gradually become lower. When $w_{\bm b_1}$ is in between $0.36$ and $0.48$, a first order phase transition occurs, after which the ground states exist at $K=2,8,10$ which does not correspond to the lowest energy configuration of a compact Fermi sea.

\begin{figure}[h]
    \centering
    \includegraphics[width=0.45\textwidth]{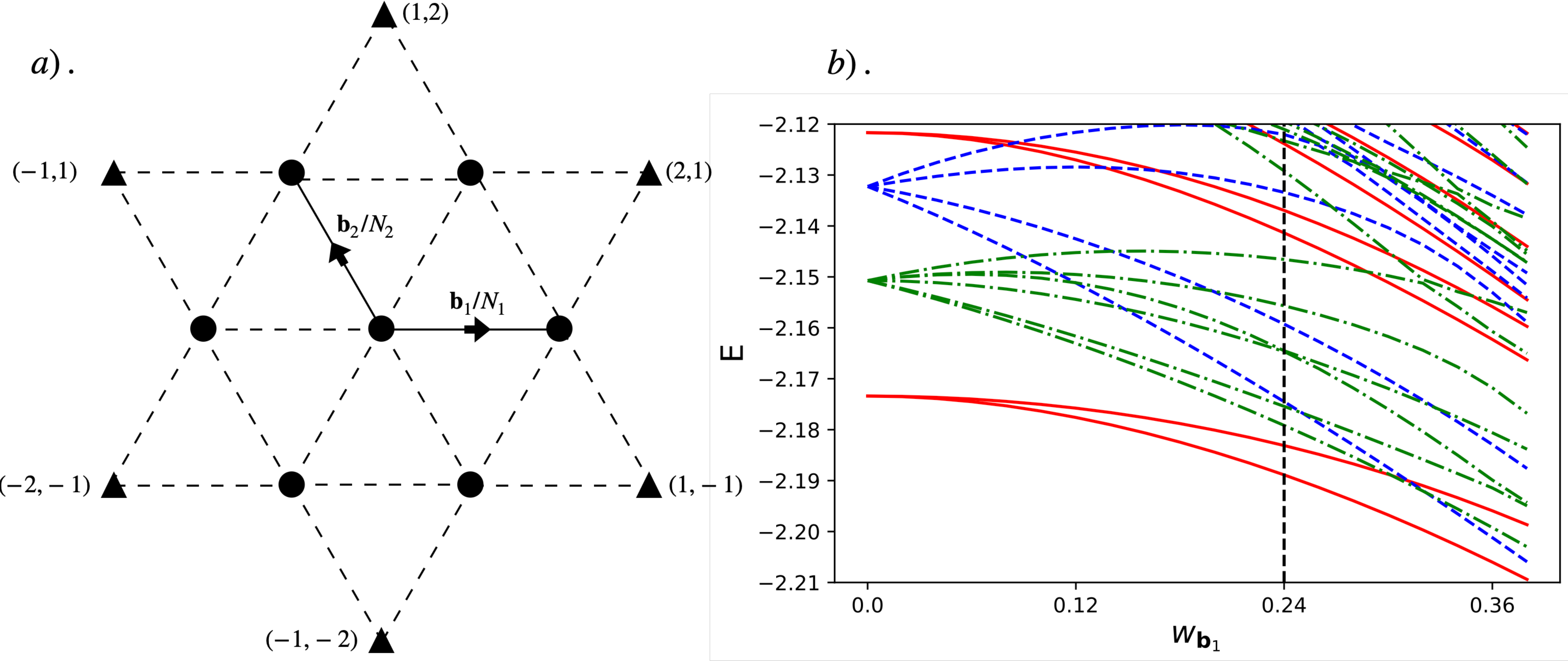}
    \caption{a). The seven dots plus one of the six triangles represent the configuration of composite fermion dipoles of lowest energies for $N=8$ particles on a $N_x\times N_y=4\times4$ TBG superlattice (hexagonal torus). These configurations agree with the lowest momentum sectors of the Coulomb interaction at half filling for both the LLL and cTBG flatband. b) Evolution of the energy spectrum as a function of inhomogeneous band geometry represented by $w_{\bm b_1}$ at fixed momentum ($K=6,0,2$ for the red solid, green dashed-dotted, and blue dashed line, respectively). The cTBG has $w_{\bm b_1}=0.24$ as marked by the dashed line. This result shows that the ground states of cTBG are qualitatively identical to those of the LLL as there is no phase transition when continuously tuning the band geometry. It also shows that the excitation gaps to high energy non-universal states are reduced by a varying band geometry.}\label{sup_pic2}
\end{figure}

The cTBG flatband has the value $w_{\bm b_1}=0.24$. The evolution of the spectrum while continuously tuning band geometry is shown in FIG.~\ref{sup_pic1} b) and FIG.~\ref{sup_pic2}, which show that there is no level crossing at $w_{\bm b_1}<0.243$. This indicates the Coulomb ground states of cTBG are qualitatively identical to those in the LLL. However, from the diagonalization result we also see that the excitation gaps to high energy non-universal states are reduced significantly by band geometry inhomogeneous. Thereby we conclude that a CFL exists in chiral TBG flatbands, but is less stable than in the LLL. We leave more detailed studies on larger system sizes, wavefunction overlap, as well as questions beyond the chiral limit, to future work.


\end{document}